\crefname{assumption}{Assumption}{Assumptions}
\crefname{remark}{Remark}{Remarks}
\crefname{proposition}{Proposition}{Propositions}
\crefname{theorem}{Theorem}{Theorems}
\crefname{section}{Section}{Section}
\crefname{lemma}{Lemma}{Lemma}
\crefname{algorithm}{Algorithm}{Algorithms}
\crefname{example}{Example}{Examples}
\crefname{figure}{Figure}{Figure}
\crefname{appendix}{Appendix}{Appendix}
\crefname{equation}{equation}{equation}
\newtheorem{theorem}{Theorem}
\newtheorem{corollary}{Corollary}
\newtheorem{proposition}[theorem]{Proposition} 
\newtheorem{assumption}{Assumption} 
\newtheorem{remark}{Remark} 
\newtheorem{lemma}[theorem]{Lemma}
\DeclarePairedDelimiter{\abs}{\lvert}{\rvert}
\DeclarePairedDelimiter{\norm}{\lVert}{\rVert}
\def\EE{{ \mathbb{E}}}
\def\RR{{\mathbb R}}
\def\NN{{ \mathbb{N}}}
\def\PP{{ \mathbb{P}}}
\def\target{{\pi}}
\def\augtarget{{\widetilde{\target}}}
\def\pot{{\Psi}}
\def\transkernel{{\mathcal{P}}}
\def\hyptransdens{{q}}
\def\hyptranskernel{{\mathcal{Q}}}
\def\pos{{\xi}}
\def\vel{{\theta}}
\def\hyp{{\alpha}}
\def\tpos{\widetilde{\pos}}
\def\tvel{\widetilde{\vel}}
\def\thyp{\widetilde{\hyp}}
\def\ttpos{\widehat{\pos}}
\def\ttvel{\widehat{\vel}}
\def\tthyp{\widehat{\hyp}}
\def\tmax{t_{{\rm max}}}
\def\ball{{B}}
\def\event{\mathcal{E}}
\def\const{c}
\def\wt{\widetilde}
\def\pzeta{{\bm \zeta}}
\def\ppos{{\bm \pos}}
\def\pvel{{\bm \vel}}
\def\phyp{{\bm \hyp}}
\def\pppos{\widehat{\ppos}}
\def\ppvel{\widehat{\pvel}}
\def\pphyp{\widehat{\phyp}}
\def\posset{\mathcal{A}_{\pos}}
\def\hypset{\mathcal{A}_{\hyp}}
\def\tposset{\B_{\pos}}
\def\thypset{\B_{\hyp}}
\def\ttposset{\C_{\pos}}
\def\tthypset{\C_{\hyp}}
\def\tttposset{\widehat{\C}_{\pos}}
\def\lebegue{{\rm Lebesgue}}
\def\ubf{{\bf u}}
\def\tt{{\tilde{t}}}
\def\tbf{{\bf t}}
\def\ibf{{\bf i}}
\def\ta{E}
\def\tnp{\tau}
\def\hypcompact{\widetilde{\Omega}_\hyp}
\def\swtime{T}
\def\trans{{\mathcal{P}}}
\def\pot{{U}}
\def\z{{z}}
\def\Lc{{\mathcal{L}}}
\def\indicator{\bm 1}
\def\gammamin{\underline{\gamma}}
\def\gammamax{\overline{\gamma}}
\def\lambdamax{\overline{\lambda}}
\def\lambdamin{\underline{\lambda}}
\def\bfunc{b}
\def\indicator{{\bm 1}}
\def\neighbour{{\mathcal{U}}}
\def\noneset{\mathcal{N}}
\def\const{c}
\def\state{{x}}
\def\Lcgzz{\Lc_{{\rm GZZ}}}
\def\Lczz{\Lc_{{\rm ZZ}}}
\def\Lcg{\Lc_{{\rm Gibbs}}}
\def\bmax{{\overline{b}}}
\def\bfunc{b}
\def\sign{{\rm sign}}
\def\sparsity{\epsilon}
\def\covdist{\varrho}
\def\velDomain{{\Omega_{\vel}}}
\def\posDomain{{\Omega_{\pos}}}
\def\zetaDomain{{\Omega_{\zeta}}}
\def\hypDomain{{\Omega_{\hyp}}}
\def\dd{{\rm d}}
\def\domain{\Omega}
\def\swrate{\eta}
\def\event{{\mathcal{E}}}
\newcommand{\A}{\mathcal{A}}
\newcommand{\B}{\mathcal{B}}
\newcommand{\C}{\mathcal{C}}
\newcommand{\G}{\mathcal{G}}
\renewcommand{\L}{\mathcal{L}}
\newcommand{\N}{\mathrm{Normal}}
\renewcommand{\NN}{\mathbb{N}}
\renewcommand{\RR}{\mathbb{R}}
\newcommand{\IG}{\mathrm{IG}}
\newcommand{\iid}{\stackrel{\rm iid}{\sim}}  
\newcommand{\ind}{\stackrel{\rm ind}{\sim}}
 \DeclareMathOperator*{\argmin}{\arg\!\min}
\def\nobs{n}
\def\hypdim{r}
\def\posdim{p}
\def\zetadim{d}
\def\1{{ \mathbf{1}}}
\def\Ga{\mathrm{Ga}}
\def\scaleinvchi2{\mathrm{Scale}\text{-}\mathrm{inv}\text{-}\chi^2}
\def\test{\mathscr{S}}
\def\I{\mathbb{I}}
\def\b0{\mathbf{0}}
\def\true{\mathrm{true}}
\def\aa{\hyp}
\def\aatt{\widehat{\hyp}} 
\begin{document}

\title{Posterior computation with the Gibbs zig-zag sampler}

\author{Matthias Sachs$^{1}$\footnote{The two authors contributed equally to this paper.} $^,$\footnote{Corresponding author.} \\ {m,sachs@bham.ac.uk} 
\and 
Deborshee Sen$^{2,}$\footnotemark[1]\\ {ds2469@bath.ac.uk}  
\and 
Jianfeng Lu$^{3}$ \\
{jianfeng@math.duke.edu}
\and
David Dunson$^{3,4}$ \\ {dunson@duke.edu} }

\date{$^1$School of Mathematics, University of Birmingham \\
$^2$Department of Mathematical Sciences, University of Bath \\
$^3$Department of Statistical Science, Duke University \\
$^4$Department of Mathematics, Duke University \\}

\maketitle 

\begin{abstract}
\noindent 
An intriguing new class of piecewise deterministic Markov processes (PDMPs) has recently been proposed as an alternative to Markov chain Monte Carlo (MCMC). We propose a new class of PDMPs termed Gibbs zig-zag samplers, which allow parameters to be updated in blocks with a zig-zag sampler applied to certain parameters and traditional MCMC-style updates to others. We demonstrate the flexibility of this framework on posterior sampling for logistic models with shrinkage priors for high-dimensional regression and random effects, and provide conditions for geometric ergodicity and the validity of a central limit theorem.
\end{abstract}

\noindent 
\textbf{Keywords} ~
Gibbs sampler; 
Markov chain Monte Carlo; 
Non-reversible; 
Piecewise deterministic Markov process; Sub-sampling.

\section{Introduction} 

Despite alternative methods ranging from sequential Monte Carlo \citep{del2006sequential} to variational inference \citep{beal2003variational}, Markov chain Monte Carlo (MCMC) methods remain the default approach among Bayesian statisticians and show no signs of diminishing in importance. 
The overwhelming majority of the literature on MCMC methods has focused on \emph{reversible} Markov chains (that is, Markov chains which satisfy a detailed balance condition), typically constructed as instances of the Metropolis-Hastings (MH) algorithm \citep{metropolis1953equation,hastings1970monte}. This includes MH samplers that obtain efficient joint proposals using gradient information, ranging from Hamiltonian Monte Carlo (HMC, \citealp{duane1987hybrid}) to Metropolis-adjusted Langevin algorithms \citep{roberts1996exponential}. Likewise, this includes the Gibbs sampler  \citep{geman1987stochastic} and generalizations that replace sampling parameters one at a time from their conditional posterior distributions with block updating using a broad class of MH steps.

Data sub-sampling has been explored as a way to speed up MCMC for large datasets \citep{welling2011bayesian, maclaurin2015firefly, quiroz2018speeding}. Sub-samples are used to approximate transition probabilities and reduce bottlenecks in calculating likelihoods and gradients, with the current literature focusing mostly on modifications of the MH algorithm. A major drawback of these approaches is that it is typically difficult to create schemes which preserve the correct target distribution.
While there has been work on quantifying the error for such approximate MCMC schemes \citep{pillai2014ergodicity, johndrow2015approximations, johndrow2017error}, it is in general difficult to do so. 
The pseudo-marginal approach of \cite{andrieu2009pseudo} offers a potential solution, but it is generally impossible to obtain the required unbiased estimators of likelihoods using data sub-samples \citep{jacob2015nonnegative}. 

There is evidence to show that \emph{non-reversible} MCMC methods can offer drastic increased sampling efficiency  over reversible MCMC methods \citep{diaconis2000analysis,sun2010improving,chen2013accelerating, rey2015irreversible}. 
A recently popularized class of non-reversible stochastic processes that can be used to construct sampling algorithms \citep{peters2012rejection,vanetti2017piecewise, fearnhead2018piecewise} are piecewise deterministic Markov processes (PDMPs). PDMPs follow a Markov jump process, where the process evolves deterministically according to some predefined dynamics in between jump events, with the event times being distributed according to a Poisson process. Examples of PDMPs include the bouncy particle sampler (BPS; \citealp{bouchard2018bouncy}) and the zig-zag (ZZ) process \citep{bierkens2019zig}.
Very interestingly, in contrast to traditional MH-based algorithms, PDMPs allow error-free sub-sampling of the data. This remarkable feature has been shown to hold for a wide range of PDMPs \citep{vanetti2017piecewise}.
% , including PDMPs with efficient non-uniform sub-sampling schemes \citep{sen2020efficient}. 

Although theoretically well-founded, PDMP approaches have not yet found widespread use in Bayesian statistics. A major reason for this is the fact that the application of these methods is in general not straightforward. The implementation of PDMPs requires the derivation of upper bounds for the gradient of the log posterior density. These upper bounds must be sufficiently tight for the sampling to remain efficient. While there have been attempts to automate the construction of such upper bounds \citep{pakman2016stochastic} as well as relax the need for upper bounds \citep{cotter2020nuzz}, these lack theoretical guarantees for the exact preservation of the target measure and as such fall into a similar category as approximate MCMC schemes. An additional challenge is that the upper bounds typically deteriorate as the dimension of the parameter space increases, although this can be mitigated to a certain extent using non-uniform sub-sampling schemes \citep{sen2020efficient}.

In this article, we address the problem of increasing the versatility of PDMP-based sampling approaches by introducing a new framework which allows the inclusion of component-wise MCMC updates within a PDMP process. The main idea is to update blocks of components for which efficient upper bounds can be easily derived by a PDMP process, and update blocks of components for which such upper bounds are not easily available with a suitable MH scheme. 
This allows us to combine the versatility of traditional MCMC approaches with the advantages of PDMPs in sampling problems. 
This is particularly relevant to Bayesian hierarchical models, where is it common for certain parameters to have conditional posteriors distributions that are easy to sample from via a Gibbs step, while other parameters can be efficiently updated using a PDMP.
In order to keep the presentation simple and accessible, we focus our attention on the ZZ process in terms of PDMPs, and we refer to our framework as the Gibbs-zig-zag (GZZ) sampler/process. However, we remark that the proposed framework is generic and allows combining a wide class of PDMPs with block-wise MH updates, as presented in \cref{app:Gibbs-PMDP}.

The rest of the article is organised as follows. 
We begin with reviewing the ZZ process in \cref{sec.ZZ.process}.
We present the GZZ sampler in \cref{sec.GZZ}. In particular, we discuss its construction in \cref{sec.gzz}, present its application to posterior sampling from Bayesian hierarchical models in \cref{sec:ap:1}, and summarize its main ergodic properties in \cref{sec:ergodic:summary}. %;  \cref{sec:ergodic:properties:theorems} contains a more detailed discussion of the ergodic properties.}
\cref{sec:appl_GZZ} contains numerical examples for two different contexts related to logistic regression.
Finally, \cref{sec:conclusion} concludes. Proofs and additional details of sampling algorithms are deferred to the appendix. 

\section{The zig-zag sampler}
\label{sec.ZZ.process}

We review the zig-zag (ZZ) process as introduced in \cite{bierkens2019zig} in this section. Consider the problem of sampling  from a probability measure 
\begin{equation*}
\pi( \dd \zeta) 
= 
\frac{1}{Z} {\rm exp}\{-U(\zeta)\} \dd \zeta,
\end{equation*}
where $U\in \C^{2}(\zetaDomain,\RR)$ is a smooth potential function defined on $\zetaDomain \subset \RR^\zetadim$. For the remainder of this paper, we describe the ZZ sampler when $\zetaDomain = \RR^\zetadim$; however, $\zetaDomain$ can be a strict subset of $\RR^\zetadim$ as well \citep{bierkens2018piecewise}.
The ZZ process  $\{\pzeta(t),\pvel(t)\}_{t \geq 0}$ is a piecewise deterministic continuous-time Markov process which lives on an augmented phase space $\zetaDomain \times \{-1,1\}^{\zetadim}$ and is constructed such that the process is ergodic with respect to the product measure $\wt{\pi}(\dd \zeta, \vel) = \pi(\dd \zeta) \mu( \vel)$, where $\mu$ is the uniform measure on $\{-1, 1\}^\zetadim$. The components $\pzeta(t)$ and $\pvel(t)$ are commonly referred to as the position and velocity of the process, respectively. 

For a starting point $\pzeta^{0}$ and initial velocity $\pvel^{0}$, the ZZ process evolves deterministically as 
\begin{equation} \label{eq:det:evol}
\pzeta(t) 
= 
\pzeta^{0} + \pvel^{0} t, \quad \pvel(t) = \pvel^{0}.
\end{equation}
At random times $(\swtime^{k})_{k\in \NN}$, bouncing events occur which flip the sign of one component of the velocity $\pvel^{k-1}$. The process then evolves as \cref{eq:det:evol} with the new velocity until the next change in velocity;  that is, 
\begin{equation}\label{eq:det:evol:2}
\pzeta(\swtime^{k}+s) = \pzeta^{k} + \pvel^{k} s, \quad  \pvel(\swtime^{k}+s) = \pvel^{k},
\end{equation}
 for $s\in [0,\swtime^{k+1}-\swtime^{k}]$, where  $\pvel^{k} = F_{I^{k}}(\pvel^{k-1})$, with random component index $I^{k}$ as specified below and $F_{i}$ denoting the operator which changes the sign of the $i$-th component of its argument, that is $F_{i} : \{ -1,1 \}^{\zetadim} \to \{-1,1 \}^\zetadim$ with $\{F_i(\theta)\}_j = \theta_j$ if $j \neq i$ and $-\theta_j$ if $j=i$. The random event times $(\swtime^{k})_{k \in \NN}$ correspond to arrival times of a non-homogeneous Poisson arrival  process whose intensity function  $m(t) = \sum_{i=1}^\zetadim m_i(t)$ depends on the current phase space value of the process, that is, $m_{i}(t) = \lambda_{i}\{\pzeta(t),\pvel(t)\} ~ (i=1,\dots,\zetadim)$, where $\lambda_1, \dots, \lambda_\zetadim$ are referred to as rate functions. The $k$-th waiting time $\tau^{k} = (\swtime^{k+1} - \swtime^{k})$ of this arrival process is $\tau^{k} = \tau_{I^{k}}^{k}$ with $I^{k} = \argmin_{i \in \{1, \dots, \zetadim\}} \{ \tau_i^{k} \}$, where $\tau_i^{k} ~ (i=1,\dots,\zetadim)$ are random times whose densities are specified by the hazard rates $m^{k}_{i}(s) =  \lambda_{i} \{\pzeta(\swtime^{k}+s), \pvel(\swtime^{k}+s)\}$.

Let $(x)^+ = \max\{0,x\}$ denote the positive part of $x \in \RR$. If the rate functions have the form 
\begin{equation*}
\lambda_{i}(\zeta,\vel)
=
\left \{ \vel_{i} \frac{\partial U(\zeta)}{\partial{\zeta_i}} \right \}^+ + \gamma_{i}(\zeta)
~~ (i=1,\dots,\zetadim)
\end{equation*}
with $\gamma_{i}(\zeta)\geq 0$, this ensures that $\augtarget$ is an invariant measure of the process \citep{bierkens2019zig}, where $\zeta=(\zeta_1,\dots,\zeta_\zetadim)$. The $\gamma_{i}$s are known as the refreshment rates.
Slightly more restrictive conditions ensuring exponential convergence in law to the measure $\augtarget$ and the validity of a central limit theorem can be found in \cite{bierkens2017limit} (see also \citealp{bierkens2019ergodicity}). 

In general, the integrals $\int_{0}^{s}m^{k}_{i}(r) \,\dd r$ of the rate functions $m^{k}_{i}(s)$ do not have a simple closed form, and thus the corresponding first arrival times $\tau_{i}^{k}$ cannot be sampled using a simple inverse transform. Instead, arrival times are usually sampled via a Poisson thinning step \citep{lewis1979simulation} as follows. Assume that we have continuous functions $M_i : \zetaDomain \times \{-1,1\}^{\zetadim} \times \RR_+ \to \RR_+ $ such that $\lambda_{i}(\zeta + s \vel,\vel) \leq M_{i}(\zeta,\vel,s)$. Then
\begin{equation}\label{eq:upper:bound:1}
m^{k}_i( s) 
= 
\lambda_{i}(\pzeta^{k} + s \pvel^{k},\pvel^{k}) \leq M_{i}(\pzeta^{k},\pvel^{k},s) =: M^{k}_i(s)  \quad (i = 1, \dots, \zetadim; ~~ s \geq 0).
\end{equation}
Let $\wt{\tau}^{k}_1, \ldots, \wt{\tau}^{k}_\zetadim$ be the first arrival times of Poisson processes with rates $M^{k}_1(s), \dots, M^{k}_\zetadim(s)$, respectively. Let $I^{k} = \argmin_{ i \in \{ 1, \ldots, \zetadim \} } \{ \wt{\tau}^{k}_i \}$ denote the index of the smallest arrival time. Then, if
\begin{itemize}
\item[(i)] $\pzeta(t)$ is evolved according to \cref{eq:det:evol:2} for time $s=\wt{\tau}_{I^{k}}^{k}$, and % instead of $\tau_{I^{k}}^{k}$, and 
\item[(ii)] after time $\wt{\tau}_{I^{k}}$ the sign of $\vel_{I^{k}}$ is flipped  with probability $m^{k}_{i_0}(\wt{\tau}_{I^{k}}) / M^{k}_{I^{k}}(\wt{\tau}_{I^{k}})$,
\end{itemize}
the resulting process can be shown to be a ZZ process with intensities $m_i(t)=\lambda_i\{\pzeta(t),\pvel(t)\} ~ (i=1,\dots,\zetadim)$ \citep{bierkens2019zig}.

A particularly appealing feature of the ZZ sampler (and PDMPs in general) is that the Poisson thinning procedure can be modified in a way which allows replacing the partial derivatives of the potential function in computations of the event times of bounces by unbiased estimates without changing the invariant measure of the simulated ZZ process \citep{vanetti2017piecewise}. The unbiased estimates can be obtained by sub-sampling of the data when observations are independent.

\section{The Gibbs zig-zag sampler}
\label{sec.GZZ}

\subsection{Process description} \label{sec.gzz}

In practice, derivation of tight upper bounds $M_{i}(t)$ as described in the previous section is often challenging. While using generalized sub-sampling schemes can help in improving the tightness of upper bounds in the setup of sub-sampling \citep{sen2020efficient}, the construction of upper bounds nonetheless remains a fundamental hurdle limiting the use of PDMPs in practice. In order to simplify applications of the ZZ sampler, we introduce a novel extension which combines elements of Gibbs sampling with a PDMP framework. 

Consider a decomposition of the parameter vector as
\begin{equation*}
\zeta = (\pos,\hyp) \in \posDomain \times \hypDomain = \RR^{\posdim}\times\RR^{\hypdim},
\end{equation*}
where $\zetadim = (\posdim + \hypdim)$, and let $\vel \in \{-1,1\}^{\posdim} =: \velDomain$. The idea of the Gibbs zig-zag (GZZ) sampler is to combine updates of the component $\xi$ via a ZZ process, which for fixed value of $\hyp$ preserves the conditional measure
\begin{equation*}
\pi( \dd \pos \mid \hyp) \propto \exp \{-U(\pos,\hyp)\} \,\dd \pos,
\end{equation*}
with conventional (Markov chain) Monte Carlo updates of the second component $\hyp$, which for given value of $\pos$ preserve the conditional measure
\begin{equation*}
\pi( \dd \hyp \mid \pos) 
\propto \exp \{-U(\pos,\hyp) \} \,\dd \hyp.
\end{equation*} 
These updates are combined in such a way that the resulting process is a PDMP which samples the target distribution $\pi$.

More precisely,  let $ \Lc_{{\rm ZZ}}$ denote the generator of the process which leaves the second component $\hyp$ constant while evolving the first component $\pos$ in the corresponding affine subspace according to a ZZ process with rate function
\begin{equation}\label{eq:mod:rate}
\wt{m}_i( t, \hyp)
= 
\left [ \vel_{i}\partial_{\pos_i}U\{\ppos(t),\hyp\} \right ]^+  + \gamma_{i}\{\ppos(t),\hyp\} \quad (i = 1, \dots, \posdim; ~~ t \geq 0);
\end{equation}
we have used the shorthand notation $\partial_{\pos_i} U$ to denote $(\partial/\partial \pos_i) U$.
The generator $\Lc_{{\rm ZZ}}$ takes the form of the differential operator
\begin{equation*}
\left (  \Lc_{{\rm ZZ}}f \right )(\pos,\alpha,\vel) = \sum_{i=1}^{\posdim} \vel_{i} \partial_{\pos_{i}}f(\pos,\alpha,\vel) + \lambda_{i}(\pos,\alpha,\vel) \left [ f\{\pos,\hyp,F_{i}(\vel)\} - f(\pos,\hyp,\vel) \right ], ~~ f\in \test,
\end{equation*}
when considered as an operator on the set of smooth test functions $\test = \C^{\infty}(\domain,\RR)$. 
Here and in the sequel, we consider $\domain = \posDomain \times \hypDomain \times \velDomain$ to be equipped with the product topology induced by the Euclidean norms on  $\posDomain$ and $\hypDomain$, and the discrete topology on $\velDomain$, so that a function $f: \domain \to \RR$ is continuous exactly if $f_{\vel}: (\pos,
\hyp)\mapsto f(\pos,\hyp,\vel)$ is continuous for all $\vel \in \velDomain$. Similarly, we consider the function $f$ to be differentiable if the partial derivatives $\partial_{\pos_{i}}f_{\vel} ~ (i=1,\dots,\posdim)$ and $\partial_{\hyp_{i}}f_\theta ~ (i=1,\dots,\hypdim)$ are well defined for all $\vel \in \velDomain$ and measurable if $f_{\vel}$ is Lebesgue measurable for all $\vel\in \velDomain$; we have used the shorthands $\partial_{\pos_{i}}f_\theta$ and $\partial_{\hyp_{i}}f_\theta$ to denote $(\partial/\partial_{\pos_{i}})f_\theta$ and $(\partial/\partial_{\hyp_{i}})f_\theta$, respectively.

Let $\hyptranskernel$ be a Markov kernel which is such that for any $\pos\in \posDomain$, the conditional measure $\pi( \dd \hyp \mid \pos)$ is preserved under the action of $\hyptranskernel$ in the sense that $\int \hyptranskernel\{(\pos,\hyp^{\prime}),A\} \pi( \dd \hyp^{\prime} \mid \pos) = \int \indicator_{A}(\hyp)\pi ( \dd \hyp \mid \pos)$ for any measurable set $A\subset \hypDomain$, where $\indicator_A(\alpha)$ stands for the indicator function which is such that $\indicator_A(\alpha)=1$ if $\alpha \in A$ and zero otherwise.
Let $(\wt{\swtime}^k)_{k\in \NN}$ denote event times of a Poisson process with constant rate $\swrate>0$. The generator of the PDMP in $\hypDomain$ which is constant in between event times $(\wt{\swtime}^k)_{k\in \NN}$ and whose state is resampled from the Markov kernel $\hyptranskernel$ at event times takes the form $\swrate \Lcg$, where
\begin{equation} \label{eq.gzz_generator}
\left ( \Lcg f\right)(\pos,\alpha,\vel)   = \int_{\hypDomain}\left \{ f(\pos,\hyp^{\prime},\vel) -  f(\pos,\hyp,\vel) \right \}   \hyptranskernel \{ (\hyp,\pos), \dd \hyp^{\prime} \},
\quad
f \in \test.
\end{equation}
We obtain the GZZ process by superimposing the two processes described above; that is, we construct the GZZ process as the process whose generator is %
\begin{equation*}
\Lcgzz 
= 
\Lczz + \swrate \Lcg.
\end{equation*}
The corresponding process $\{ \ppos(t), \pvel(t), \phyp(t)\}_{t \geq 0}$ is a PDMP whose trajectory is piecewise linear in $\pos$ and piecewise constant in $\hyp$. It follows from classical results on the simulation of non-homogeneous Poisson processes that the process can be simulated by generating skeleton points $ \{ (\ppos^{k}, \pvel^{k}, \phyp^{k},\swtime^{k})\}_{k \in \NN}$ according to \cref{alg:Gzigzag} below, which are then linearly interpolated as
\begin{equation} \label{eq:traj:gzz}
\ppos(t) = \ppos^{k} + \pvel^{k} (t - \swtime^{k} ),
\quad 
\phyp(t) = \phyp^{k},
\quad 
\pvel(t) = \pvel^{k},
\quad\text{for} ~ \swtime^{k} \leq t < \swtime^{k+1}.
\end{equation}

\begin{remark}
We mention two recently proposed sampling schemes that  -- similar to the GZZ sampler -- combine ideas of PDMP with aspects of  classical Gibbs sampling, but whose constructions are in fact conceptually very different from the proposed GZZ sampler. 
\begin{enumerate}
\item 
The PDMP proposed in \cite{wu2020coordinate}, termed the coordinate sampler, resembles aspects of a classical Gibbs sampler, but other than the fact that certain components are kept constant in between jump events it bears no resemblance to the GZZ sampler. In terms of its construction, the coordinate sampler falls into the same framework as other popular PDMP processes such as the ZZ sampler \citep{bierkens2019zig} and the BPS \citep{bouchard2018bouncy}, and unlike the GZZ process, it does not allow the incorporation of MH-updates.
\item 
In \cite{zhao2021analysis}, the authors propose a local BPS-within-Gibbs algorithm for the sampling of the posterior distribution of parameters of a continuous Markov chain. Unlike the GZZ sampler, which is a piecewise deterministic continuous time Markov process, the sampling scheme proposed in that work is constructed as a Markov chain where in each step parameters are block-wise updated using either a local BPS sampler or an HMC sampler. A similar construction of a Markov chain that combines a BPS sampler and a Metropolis-Hastings scheme has been also considered in \cite{zhang2021large}. 
\end{enumerate}
\end{remark}

As we discuss in \cref{sec:ergodic:summary}, the GZZ process is path-wise ergodic (see \cref{thm:ergodic}) with respect to the augmented measure 
% \jl{do we need $\otimes$ here?}
% \mscom{done}
% 
\begin{equation}\label{eq:inv:aug}
\wt{\pi}(\dd \zeta, \vel) = \pi(\dd \zeta)\, \mu(\vel), 
\end{equation}
where $\mu$ is the uniform measure on $\{-1,1\}^{\zetadim}$,
under some mild conditions on the potential function $U$. As such, it can be used similarly to other PDMP samplers as a Monte Carlo method for the approximate computation of expectations by finite time trajectory averages, that is,
\begin{equation*}
\EE_{(\pos,\hyp)\sim \target } \left \{ f(\pos,\hyp) \right \} 
\approx
\frac{1}{t} \int_{0}^{t}f \{ \ppos(t),\phyp(t) \} \, \dd t.
\end{equation*}

\begin{algorithm}
\caption{Gibbs zig-zag (GZZ) algorithm.} % for $\swrate = 0$, we formally set $\tau^{\prime} = \infty$ in line 2. } 
\label{alg:Gzigzag} 
\textbf{Input:} $(\ppos^{0},\phyp^{0}, \pvel^{0}) \in \posDomain \times \hypDomain \times \velDomain$% \jl{order switched compared with main text} and $\swtime^{0} =0$.

\begin{algorithmic}[1] 
\FOR{$k = 1, 2, \dots$}
\STATE 
Draw $\tau' \sim {\rm Exponential}(\swrate)$ and $\wt{\tau}_1, \ldots, \wt{\tau}_\posdim$ such that 
\begin{equation}
\PP( \wt{\tau}_i \geq s)
=
\exp \left \{ - \int_0^s \wt{m}_i(T^{k}+r,\phyp^{k}) \, \dd r \right \} 
~ (i = 1, \dots, \posdim).
\end{equation}
\STATE 
Let $\tau^{k} = \min \left \{ \tau', \wt{\tau}_1, \dots, \wt{\tau}_\posdim \right \}$.  
\STATE 
Set $\ppos^{k+1} = \ppos^{k} + \tau^{k} \, \pvel^{k}$ and $\swtime^{k+1} = \swtime^{k} +\tau^{k}$. 
\IF{$\tau = \tau'$}
\STATE 
Draw $\phyp^{k+1} \sim \hyptranskernel\{(\ppos^{k+1},\phyp^{k}),\cdot\}$.
\STATE 
Set $\pvel^{k+1} = \pvel^{k}$. 
\ELSE
\STATE
Set $\phyp^{k+1} = \phyp^{k}$.
\STATE 
Bounce: $\pvel^{k+1} = F_{i_{0}}(\pvel^{k})$, with $i_{0} = \argmin_{i \in \{1,\dots,\posdim\}} \wt{\tau}_{i}$.
\ENDIF
\ENDFOR
\end{algorithmic}
\textbf{Output:} Skeleton points $ \{ (\ppos^{k}, \phyp^{k}, \pvel^{k},\swtime^{k})\}_{k \in \NN}$. 
\medskip  

\end{algorithm}

Practically, ZZ updates of the $\pos$ component can be performed using Poisson thinning. In this case, an upper bound 
$\wt{M}_{i}: \posDomain \times \hypDomain \times \{-1,1\}^{\zetadim} \times \RR_+ \to \RR_+ $ satisfying $\lambda_{i}(\pos + s \vel,\hyp,\vel) \leq \wt{M}_{i}(\pos,\hyp,\vel,s)$ for all $s\geq 0$ is required.

The approach is particularly useful if the restriction of the ZZ process onto the component $\pos$ simplifies construction of upper bounds, and efficient MCMC updates for the remaining component $\hyp$ are available. Such a decomposition is often naturally available in the context of Bayesian posterior distributions with hierarchical priors. We describe the application of the GZZ sampler to such models in \cref{sec:ap:1}. In addition, in the context of Bayesian posterior computation, the GZZ sampler can be modified to allow for sub-sampling of data while exactly preserving the measure $\augtarget$. This is explored numerically in \cref{sec:appl_GZZ}.

\subsection{Bayesian posterior sampling with hierarchical priors}
\label{sec:ap:1}
Hierarchical Bayesian models can often be specified as
\begin{align*}
X_1, \dots, X_\nobs 
& \iid
f(x \mid \pos),
\quad
\pos \mid \hyp 
\sim 
h(\pos \mid \hyp), 
\quad
\hyp 
\sim 
p_0(\hyp),
\end{align*}
where $\hyp$ are hyperparameters with hyper-prior $p_0(\hyp)$, $h$ denotes the conditional distribution of the parameters given the hyperparameters, and $f$ denotes the likelihood of observations given parameters. Bayesian posterior sampling typically proceeds by sampling from 
\begin{equation*}
p(\pos,\hyp \mid X_1, \dots, X_\nobs) 
\propto 
\prod_{j=1}^\nobs f(X_j \mid \pos) \times h(\pos \mid \hyp) \, p_0(\hyp),
\end{equation*}
whose marginal distribution for $\xi$ is the posterior distribution of $\xi$ given $X_1,\dots,X_\nobs$, which is $p(\pos \mid X_1, \dots, X_\nobs) \propto \prod_{j=1}^\nobs f(X_j \mid \pos) \int_{\hypDomain} h(\pos \mid \hyp) \, p_0(\hyp) \, \dd \hyp$.
Letting $\zeta = (\pos,\hyp)$, this corresponds to sampling the Gibbs measure $\pi(\dd \zeta) = Z^{-1} \exp\{-U(\zeta)\} \, \dd \zeta$ with potential function
\begin{equation}\label{eq:decomp}
U(\zeta)
= 
U^0 (\pos,\hyp) + \sum_{j=1}^n U^{j}(\pos), 
\end{equation}
where $U^{0}(\pos,\hyp) = - \log h (\pos \mid \hyp)  - \log p_0 (\hyp)$ and $ U^{j}(\pos) = - \log f(X_j \mid \pos)$. The GZZ sampler can readily be applied in this context, with $\hyptranskernel$ corresponding to either an exact update for the hyperparameters (which is the case when using conditionally conjugate priors), or using a suitable Metropolis--Hastings (MH) scheme such as random walk MH or Hamiltonian Monte Carlo (HMC), when such an exact update is not possible. We consider numerical examples of this in \cref{sec:appl_GZZ}.

\subsection{Ergodic properties}\label{sec:ergodic:summary}
We provide a high level overview of theoretical results on the ergodic properties of the GZZ sampler in this section. Detailed conditions and theorems as well as proofs are deferred to \cref{sec:ergodic:properties:theorems}. These results pertain to the long term properties of trajectory averages 
\begin{equation*} %\label{eq:estimator}
\widehat{\varphi}_{t}
=
\frac{1}{t}\int_{0}^{t} \varphi\{\ppos(s),\pvel(s),\phyp(s)\} \, \dd s
\end{equation*}
of $\augtarget$-integrable functions $\varphi$, and as such are intended to justify the usage of the GZZ sampler as a Monte Carlo method. First, \cref{thm:ergodic} states that under relatively mild conditions (summarized in \cref{as:supp}) on the Markov transition kernel $\hyptranskernel$ and the refreshment rate functions $\lambda_{i} ~(i=1,\dots,\posdim)$, the trajectory averages converge almost surely to corresponding expectations, that is, 
\begin{equation*}
\lim_{t \to \infty} \widehat{\varphi}_{t}
=
\EE_{(\pos,\hyp,\vel)\sim \augtarget}\{\varphi(\pos,\hyp,\vel)\}
~~ \text{almost surely}.
\end{equation*}
Secondly, we show in \cref{thm:ergodic:2} that under additional conditions on the potential function $U$ (see \cref{as:pot}), the GZZ process is geometrically ergodic. This implies a central limit theorem (\cref{thm:clt}) of the form
\begin{equation*}
\sqrt{t} \left [ \widehat{\varphi}_{t}  -  \EE_{(\pos,\hyp,\vel)\sim \augtarget}\{\varphi(\pos,\hyp,\vel)\} \right ] \xrightarrow[t \to \infty]{\mathrm{law}}  \mathcal{N}(0,\sigma^{2}_{\varphi}) \quad \text{for some} ~ \sigma^{2}_{\varphi}>0
\end{equation*}
that holds for a wide class of real-valued functions $\varphi$.

\cref{as:supp} explicitly requires the refreshment rate function $\lambda_{i}$ to be strictly positive. This drastically simplifies the proof of irreducibility of the process. By doing so, we circumvent difficulties (as described and tackled in \citealp{bierkens2019ergodicity} in the case of the standard ZZ process) in the proof of the irreducibility of the process.  From a theoretical perspective, this makes the presented results less interesting. However, we expect that the GZZ process will typically be used in combination with a sub-sampling scheme so that vanishing refreshment rates are unlikely in practice. Likewise, conditions on the potential function $U$ are rather restrictive in terms of the tail properties of the corresponding density and the coupling between the parameters $\pos$ and the hyperparameters $\hyp$. We acknowledge that some of these conditions might not hold in practice.  Instead, we demonstrate in numerical experiments (see \cref{sec:appl_GZZ}) that the central limit theorem remains valid in settings beyond what it is covered by our theoretical results.

\section{Numerical examples} \label{sec:appl_GZZ}

% \subsection{Logistic regression}

Consider the following generic logistic regression model, 
\begin{align} \label{eq.logistic_model}
Y_j 
& \sim 
\mathrm{Bernoulli}\left (\frac1{1+e^{-\psi_j}} \right )
\quad (j = 1, \dots, \nobs),
\end{align}
where $Y_1, \dots, Y_\nobs \in \{0,1\}$ denote observations and $\psi_1, \dots, \psi_\nobs$ are linear predictors that are further assigned a model in a context-specific manner. This is a highly flexible model for which various complexities can be induced by considering different forms for the predictors. PDMP methods for logistic regression with simple (non-hierarchical) priors tend to be efficient \citep{bouchard2018bouncy, bierkens2019zig}, but it is not straightforward to modify these samplers to account for hierarchical structure. 

We run the GZZ sampler for $10^7$ iterations in all our examples. That is, we run \cref{alg:Gzigzag} for $k=1,\dots,10^7$ attempts. We consider sub-sampling, with a sub-sample of size $n_1 < n$ meaning that $n/n_1$ iterations of the GZZ sampler corresponds to approximately one epoch of data evaluation\footnote{In reality, this is actually less than one epoch as updating the hyperparameter $\hyp$ does not involve the data.}, and therefore $10^7$ iterations of GZZ corresponds to $n_1/n \times 10^7$ epochs of data evaluation. We compare the GZZ sampler to HMC-within-Gibbs. We run the Gibbs sampler for a total of $10^4$ iterations for each setting, which means that we make a total of $10^4$ HMC steps as well. For HMC, we consider the leapfrog integrator as described in, for example, Section 2.3 of \cite{neal2011mcmc}. Therefore, for $L$ leapfrog steps, we have $L \times 10^4$ epochs of data evaluation for HMC-within-Gibbs.

\subsection{Random effects model}
\label{sec.random_effects}

Random effects models are routinely applied in a wide variety of disciplines. We consider the following model as illustration,
\begin{align} \label{eq.re_model}
\begin{aligned}
Y_{ij} \mid \beta_j
& \sim 
\mathrm{Bernoulli} \left ( \frac1{1+e^{-\psi_{ij}}} \right ), 
\\
\psi_{ij} 
& = 
m + \beta_j + X_{ij}^\top \upsilon, 
\\
\beta_j 
& \iid 
\N(0, \phi^{-1}),
\end{aligned}
\end{align} 
where $j=1, \dots,K$ index $K$ groups and $i=1, \dots, \nobs$ index $\nobs$ subjects per group\footnote{We assume that each group is of the same size for simplicity; however, this can easily be extended.}, and $\beta_j$ is the random effect for the $j$th group. For the $i$th observation from the $j$th group, 
$Y_{ij} \in \{0,1\}$ denotes the response variable and 
$X_{ij} = (X_{ij1}, \dots, X_{ij\posdim}) \in \RR^\posdim$ denote covariates. In addition, $m$ denotes an overall intercept, $\upsilon = (\upsilon_1, \dots, \upsilon_\posdim)$ denotes the fixed effect coefficients, and $X_{ij}^\top \upsilon = \sum_{l=1}^\posdim X_{ijl} \upsilon_l$. 
We consider the following priors:
\begin{align*}
m 
& \sim
\N(0, \phi^{-1}), \quad 
\upsilon_l 
\iid \N(0, \sigma^2) ~~ (l = 1, \dots, \posdim), 
\\
\phi & \sim \Ga(a_\phi, b_\phi),
\quad
\sigma^2 \sim \IG(a_\sigma,b_\sigma),
\end{align*}
where $\Ga$ denotes a gamma distribution and $\IG$ denotes an inverse-gamma distribution. 
For this problem, we can use a ZZ process with sub-sampling to update $(\upsilon, m,\beta_1, \dots, \beta_K)$ conditionally on the hyperparameters $(\phi,\sigma^2)$, while the conditional distributions for the hyperparameters can be exactly sampled from; details are provided in \cref{app:mixed_eff}. In the notation of \cref{sec.gzz}, we have $\pos = (\upsilon_1, \dots, \upsilon_\posdim, m, \beta_1, \dots, \beta_K)$ and $\alpha=(\phi, \sigma^2)$. 

We consider synthetic data generated from model \eqref{eq.re_model} with true $(m, \delta, \pos) = (m_{\text{true}}, \delta_{\text{true}}, \pos_{\text{true}}) \in \RR^{1+K+\posdim}$. The covariates $X_{ijl}$s are sampled from the mixture distribution $\covdist_\sparsity(\dd x) = \sparsity \delta_0 (\dd x) + (1-\sparsity) \rho(\dd x)$, where $\delta_0(\dd x)$ is a point mass at zero, $\rho$ is a standard normal density, and $\sparsity \in (0,1]$ denotes the level of sparsity among the covariates.   

In a first experiment, we study the effect of the switching rate $\swrate$ on the mixing of process; recall that this is given in \cref{eq.gzz_generator}. To this end, we consider a simple setup with $n=10$, $K=2$, and $\posdim=2$, and we also choose $\sparsity = 0.5$. 
We run the GZZ sampler for various values of the switching rate for mini-batch size ten.
We plot the integrated auto-correlation time of the slowest component of $\xi$ in the left panel of \cref{fig.mixing_random_effects}.
The mixing improves to a certain point as the switching rate increases, beyond which the improvement tapers off. In particular, for $\eta \leq 10^{-1}$ the integrated auto-correlation time of the slowest component is approximately proportional to $\eta^{-1}$.

In another experiment, we compare the mixing of the process to the size of the mini-batch used. This is shown in the right panel of \cref{fig.mixing_random_effects}.
When the switching rate is low, increasing the mini-batch size does not have a noticeable effect on the mixing of the process. However, when the switching rate is in the ``flat'' part of the left panel of \cref{fig.mixing_random_effects} (that is, $\swrate = 6.47$), increasing the mini-batch size has a clear effect on the mixing of the process. 

\begin{remark}
We obtain the integrated auto-correlation times as follows. We first extract equally-spaced samples from the continuous-time process obtained by running the GZZ sampler. We plot each component of $\xi$'s auto-correlation function and calculate its integrated auto-correlation time by observing when the auto-correlation function converges to zero and then summing the auto-correlation function up to that time. We do not display these plots here, but these can be found in the notebooks accompanying our code.
\end{remark}

\begin{figure}
\centering 
\includegraphics[width=0.44\textwidth]{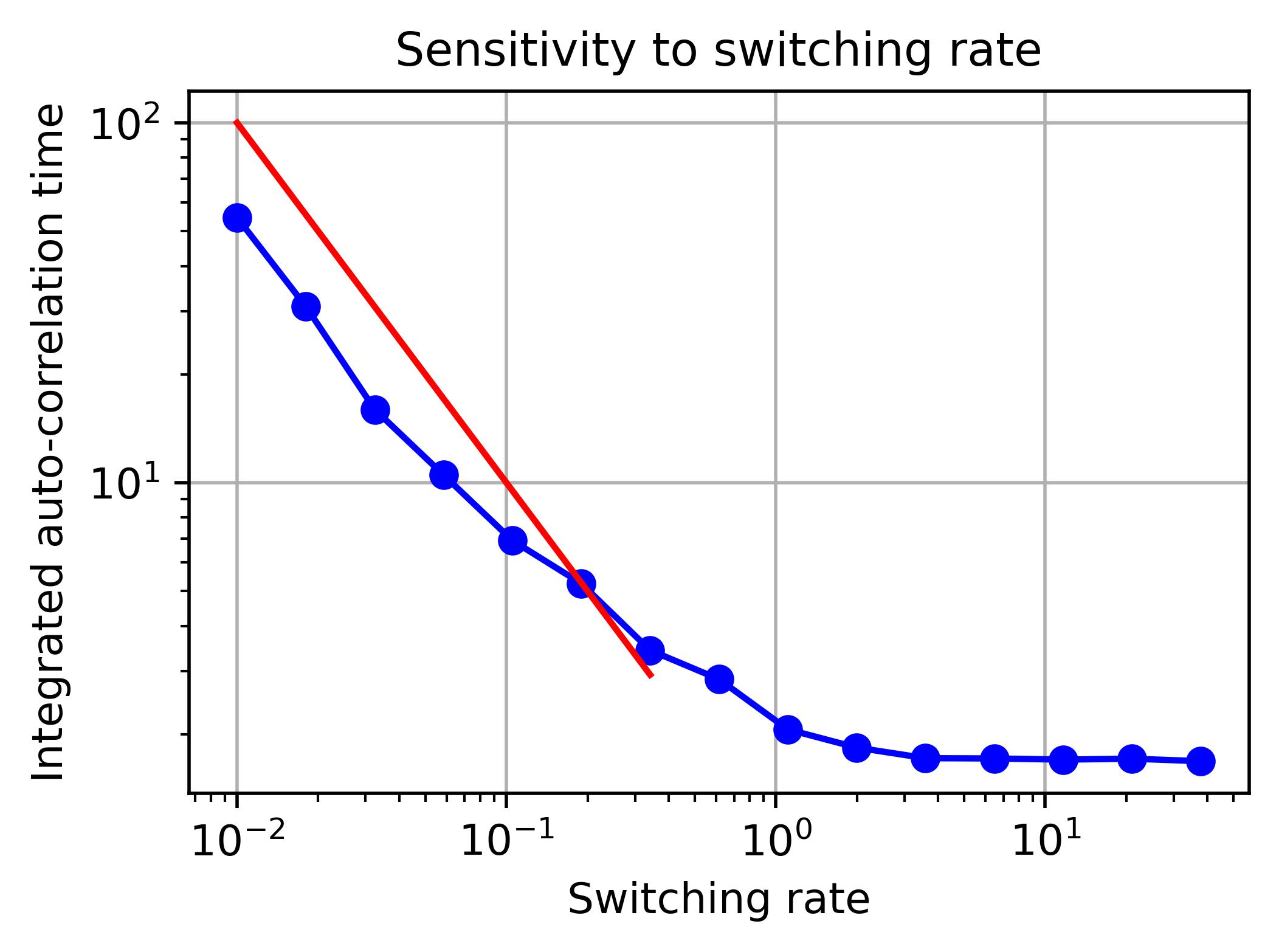}
\includegraphics[width=0.55\textwidth]{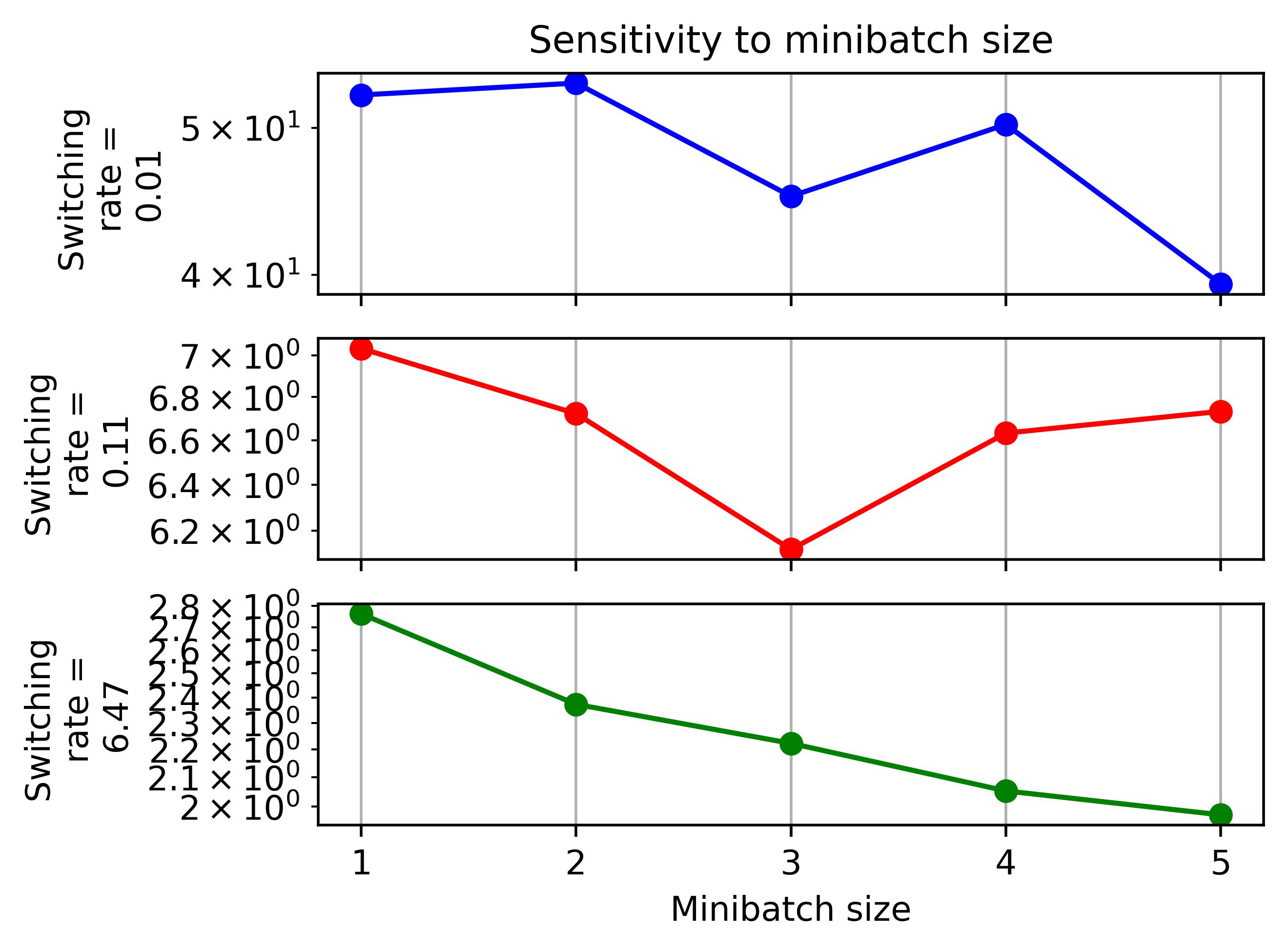}
\caption{Sensitivity to switching rate $\swrate$ and mini-batch size for the random effects model. The red line in the left plot shows the graph of $\eta^{-1}$.}
\label{fig.mixing_random_effects}
\end{figure}

Next, we compare the GZZ sampler to HMC-within-Gibbs.
We choose $\sparsity = 5 \times 10^{-2}$, which means that the covariates are $95\%$ sparse. For HMC-within-Gibbs, we replace the ZZ updating by HMC. 
In this case, we choose $n=100$ and $\posdim=5$, and vary the number of groups $K$. As $K$ increases, both the dimension of the sampling problem $(1+K+\posdim)$ as well as the total number of observations $K \times \nobs$ increases. 
We tune HMC by choosing a range of different leapfrog steps and stepsizes, and looking at cases where the acceptance rate is close to the optimal acceptance rate of $0.651$ \citep{beskos2013optimal}. Among them, we choose the combination of step-size and number of leapfrog steps which gives the highest effective sample size per epoch of data evaluation.
We plot the relative effective sample size per epoch of data evaluation for GZZ with sub-sampling divided by the same for HMC in \cref{fig.randomeffects_GZZvsHMC}, where we observe that the relative performance of using GZZ over HMC increases as the number of groups increases.

\begin{remark}
The total number of observations $K n$ increases as the number of groups $K$ increases, which makes each iteration of HMC slower. However, since GZZ uses sub-sampling, the time per iteration of GZZ remains the same. The number of random effects increases as the number of groups increases, which means that the dimension of the parameter space increases (if we treat the random effects as ``parameters'' whose posterior is to be sampled from). While it is true that GZZ performs worse as the dimension increases, this is offset by the slower run time of HMC.
\end{remark}

\begin{figure}
\centering 
\includegraphics[width=0.6\textwidth]{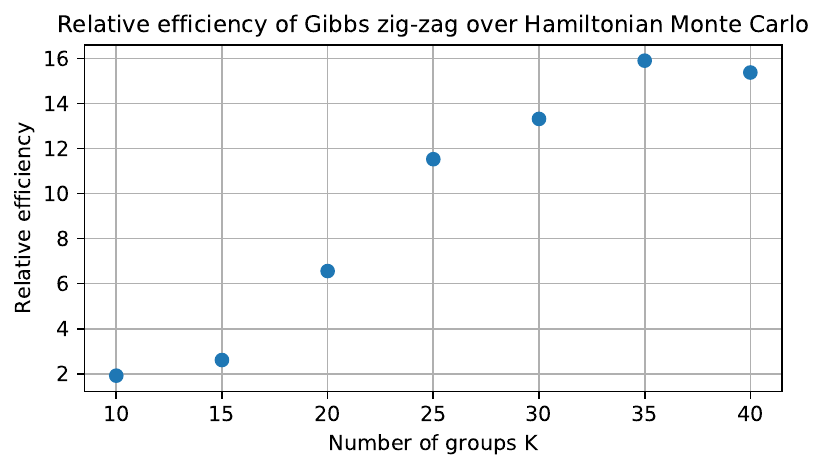}
\caption{Comparison of effective sample size per epoch of data evaluation for Gibbs zig-zag with sub-sampling and Hamiltonian Monte Carlo for the random effects model.}
\label{fig.randomeffects_GZZvsHMC}
\end{figure}

We acknowledge that the assumptions made in order to prove the central limit theorem (\cref{thm:clt}), namely Assumptions \ref{as:supp} and \ref{as:pot}, do not hold for this example. We nevertheless demonstrate numerically that a central limit theorem does appear to hold for this example. To this end, we choose a simple setting with $K=5$ groups and $n=50$ observations per group, and we choose $p=10$ covariates. We let the test function $\varphi$ be simply the identity function and estimate $\varphi_\true := \EE_{(\pos,\hyp,\vel)\sim \augtarget}\{\varphi(\pos,\hyp,\vel)\}$ by running \cref{alg:Gzigzag} for a very long time. 
We then choose several different time horizons $T_1 < \cdots < T_M$ and run the GZZ sampler up to each $T_m$; this can be achieved by running \cref{alg:Gzigzag} till the total time $T^k$ reaches $T_m$. We run the GZZ sampler independently $R = 2 \times 10^2$ times for each $T_m$ and obtain estimates $\widehat{\varphi}_{T_m, \, r}$ for $m=1, \dots, M$ and $r=1,\dots,R$. \cref{thm:clt} implies that $\{ T_m^{1/2} ( \widehat{\varphi}_{T_m,r}  - \varphi_\true ) \}_{r=1}^R$ should converge to samples from a Gaussian distribution with zero mean as $m$ increases. We demonstrate this by noting that if $X \sim \N_p(\mu, \Sigma)$, then $(X-\mu)^\top \Sigma^{-1} (X-\mu) \sim \chi^2_p$. This means that $\{ [ T_m^{1/2} ( \widehat{\varphi}_{T_m,r} - \varphi_\true ) ]^\top \widehat{\Sigma}_m^{-1} [ T_m^{1/2} ( \widehat{\varphi}_{T_m,r} - \varphi_\true ) ] \}_{r=1}^R$ should converge to samples from a $\chi^2_p$ distribution as $m$ increases, where $\widehat{\Sigma}_m$ is the empirical covariance matrix of $\{ T_m^{1/2} [ \widehat{\varphi}_{T_m,r}  - \varphi_\true ] \}_{r=1}^R$.
We display QQ plots for this in \cref{fig.CLT-mixed_effects}, where we observe that a central limit theorem appears to hold in this setting.

\begin{figure}
\centering 
\includegraphics[width=\textwidth]{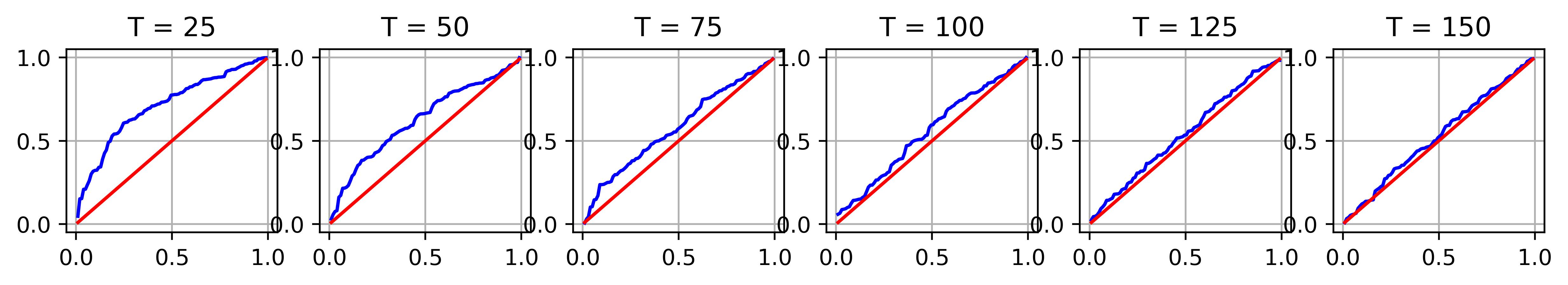}
\caption{QQ plots for the mixed effects model providing empirical evidence for a central limit theorem.}
\label{fig.CLT-mixed_effects}
\end{figure}

\subsection{Shrinkage prior}
\label{sec.shrinkage_prior}

Consider the case where we have $\posdim$ covariates and let $X_j =  (X_{j1}, \dots, X_{j \posdim})$ be the covariates for the $j$th observation $Y_j$. Equation \eqref{eq.logistic_model} then corresponds to a typical logistic regression model with $\psi_j = \upsilon_0 + \sum_{i=1}^\posdim X_{ji} \upsilon_j$, 
where $\upsilon = (\upsilon_1, \dots, \upsilon_\posdim)$ are coefficients for the covariates and $\upsilon_0$ is an intercept term. 
Even when $\posdim$ is relatively small compared to $\nobs$, the posterior for $\upsilon$ is not concentrated around a reference point if the covariates are sparse and the prior is isotropic Gaussian. 
We instead use the GZZ sampler to employ a shrinkage prior for $\upsilon$. A popular shrinkage prior is the spike-and-slab prior \citep{mitchell1988bayesian, ishwaran2005spike}, which is a mixture of a spike at zero and a higher variance component. We consider the following specification of the spike-and-slab prior:
\begin{align*}
\upsilon_i 
& \ind 
\gamma_i \, \N ( 0, \nu \tau_i^2) + (1-\gamma_i) \, \N(0, \tau_i^2 ), 
\\
\gamma_i
& \iid
\textrm{Bernoulli}(\pi)
~~ (i = 1, \dots, \posdim),
\\
\nu & \sim \IG(a_\nu, b_\nu), 
\quad
\pi \sim \mathrm{Beta}(a_\pi, b_\pi),
\end{align*}
where $\gamma_i \in \{0,1\}$ $(i = 1, \dots, \posdim)$, and we choose $\upsilon_0 \sim \N(0,\sigma_0^2)$ for the intercept. 
Finally, as recommended by \cite{polson2012half}, we choose i.i.d. half-Cauchy priors for the $\tau_i$s as $p_0(\tau_i) \propto ( 1 + \tau_i^2/d_\tau )^{-(d_\tau+1)/2}$ $(i=1,\dots,p)$.
In terms of the notation of \cref{sec.gzz}, a ZZ process with sub-sampling can be used to update $\pos = (\upsilon_0, \dots, \upsilon_\posdim)$ conditionally on the hyperparameters $\alpha = (\gamma_1, \dots, \gamma_\posdim, \tau^2, \pi, \nu)$, while the conditional distributions for the hyperparameters can be sampled using MCMC update steps. Details are provided in \cref{app:spikeslab_conditionals}. In contrast to the random effects model of \cref{sec.random_effects}, the dimension of the hyperparameter $\hyp$ is more than twice that of the parameter $\pos$ in this case. 
We consider synthetic data with the covariates being generated in the same way in \cref{sec.random_effects}.  The responses $Y_i$ are sampled from model \eqref{eq.logistic_model} with ``true'' $\upsilon = \upsilon_{\text{true}} \in \RR^{\posdim+1}$.

In a first experiment, we study the effect of varying mini-batch sizes and varying switching rates $\swrate$ on the efficiency of the GZZ sampler. We choose a simple example with $n=50$ and $\posdim=20$, and $\sparsity = 0.4$, and we make the ``true'' $(\upsilon_1,\dots,\upsilon_\posdim)$ sparse by setting only $20\%$ of its components to be non-zero. 
We run the GZZ sampler for various values of the switching rate for mini-batch size ten and plot the integrated auto-correlation time of the slowest component of $\xi$ in the left panel of \cref{fig.mixing_shrinkageprior}. 
and look at the sensitivity to the mini-batch size in the right panel of \cref{fig.mixing_shrinkageprior}. The results are similar to those in \cref{sec.random_effects}. In particular, the mixing improves to a certain point with increasing switching rate, beyond which it tapers off, and for $\eta \leq 10^{-1}$ the integrated auto-correlation time of the slowest component is again approximately proportional to $\eta^{-1}$. 
Increasing the mini-batch size does not have a noticeable effect on the mixing for low switching rates and has a clear effect when the switching rate is sufficiently high.
\begin{figure}
\centering 
\includegraphics[width=0.44\textwidth]{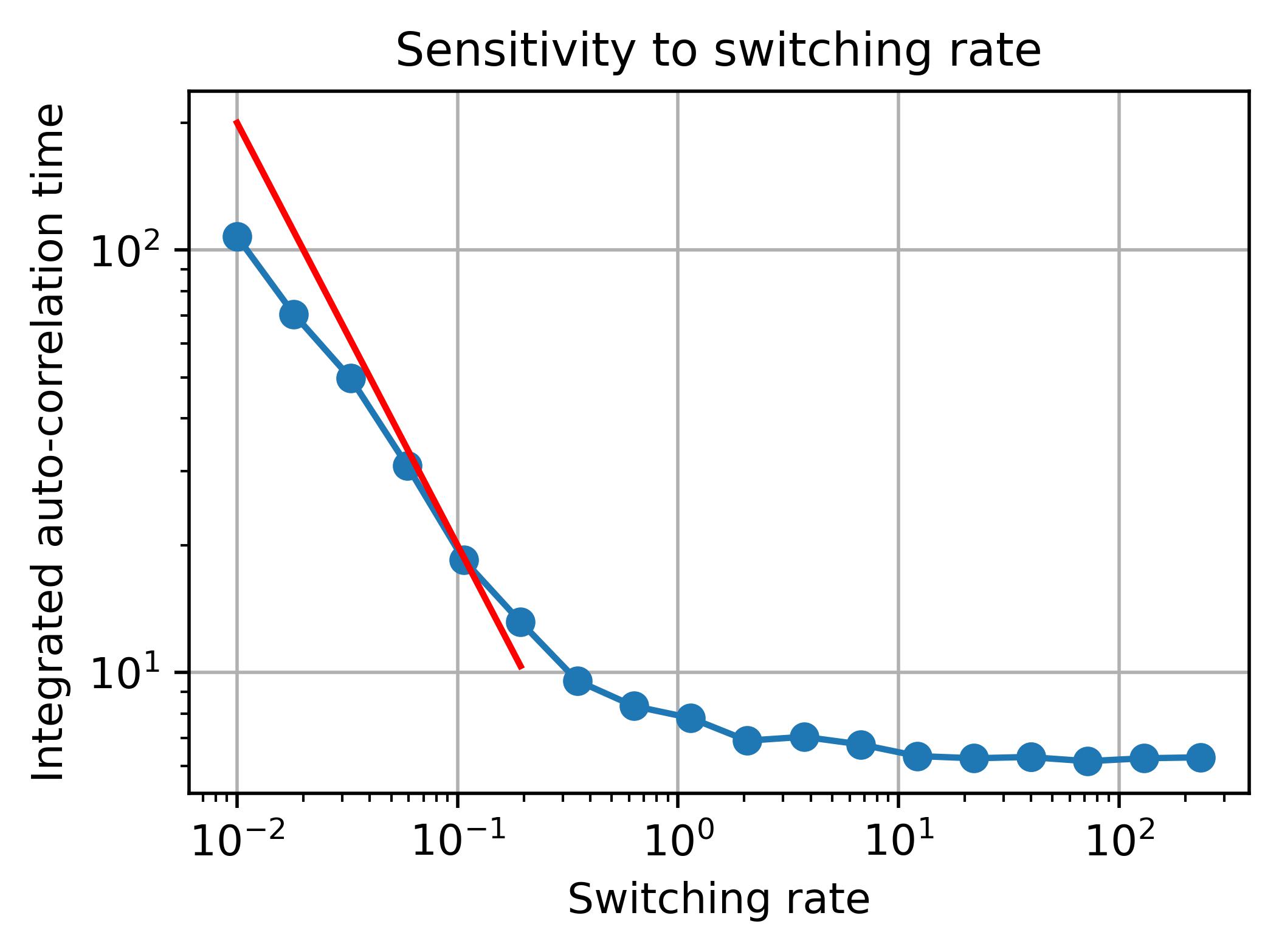}
\includegraphics[width=0.55\textwidth]{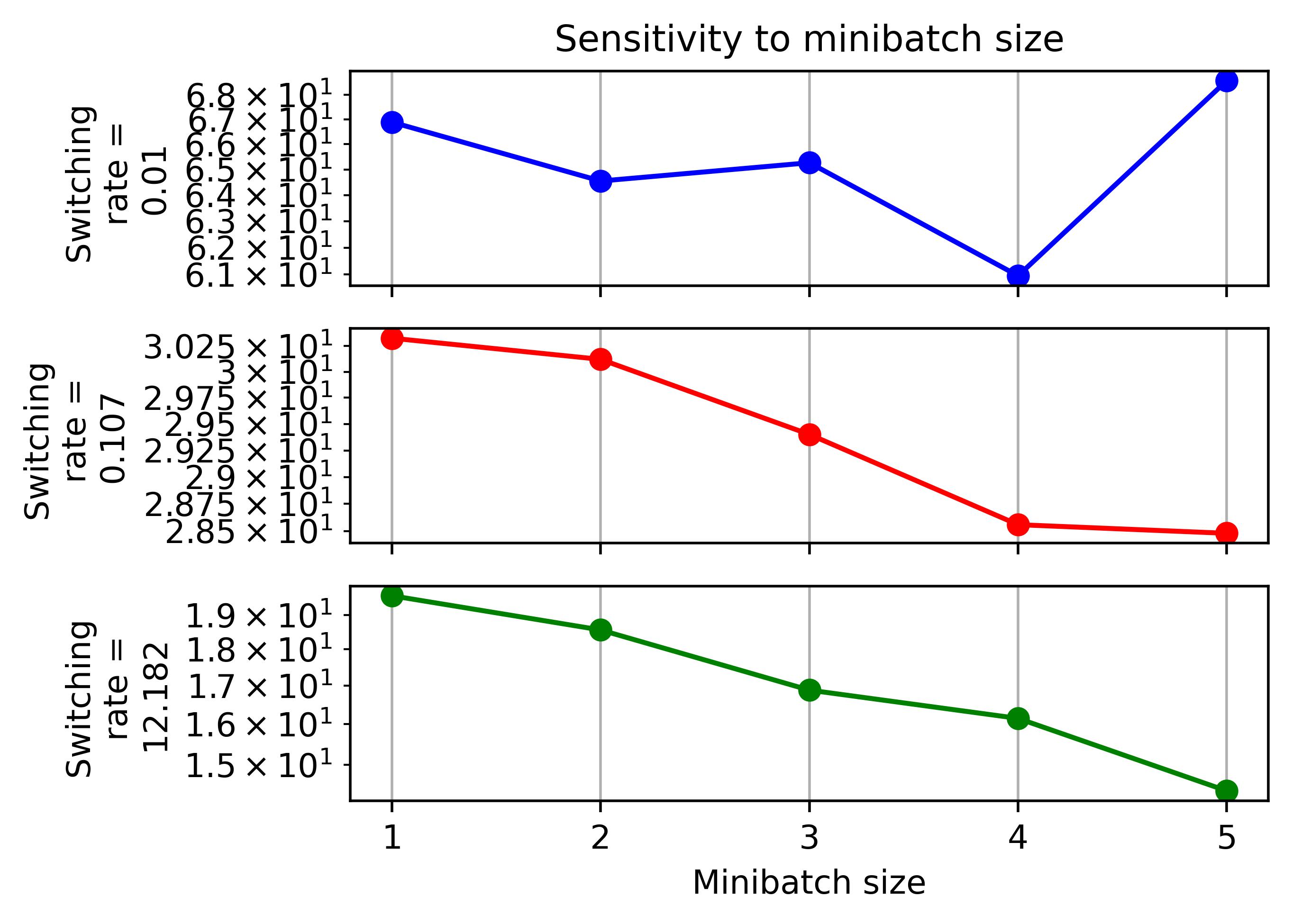}
\caption{Sensitivity to switching rate $\swrate$ and mini-batch size for logistic regression with spike-and-slab prior. The red line in the left plot shows the graph of $\eta^{-1}$}.
\label{fig.mixing_shrinkageprior}
\end{figure}

We compare the GZZ sampler to HMC-within-Gibbs. We consider $\posdim=10^2$ and varying values of $n$. We also choose the ``true'' $(\upsilon_1,\dots,\upsilon_\posdim)$ to be sparse with only 10\% of its components being non-zero. For each value of $n$, we choose $\sparsity$ such that $\sparsity \times n$ is fixed at $50$.
We tune HMC in the same way as in \cref{sec.random_effects} and compare the effective sample size per epoch of data evaluation of the GZZ sampler with sub-sampling and HMC in the left plot of \cref{fig.shrinkage_GZZvsHMC}. We observe that as $n$ increases, the GZZ sampler improves upon HMC; this is due to GZZ being faster due to sub-sampling.

We also perform experiments where we fix $n=10^2$ and consider increasing values of $p$. The ``true'' $(\upsilon_1,\dots,\upsilon_\posdim)$ is again chosen to be sparse with only 10\% of its components being non-zero for each value of $p$. The right plot of \cref{fig.shrinkage_GZZvsHMC} displays the effective sample size per epoch of data evaluation of the GZZ sampler with sub-sampling as compared to HMC. We observe that HMC becomes more efficient as compared to GZZ as the dimension $p$ increases.

\begin{figure}
\centering 
\includegraphics[width=0.49\textwidth]{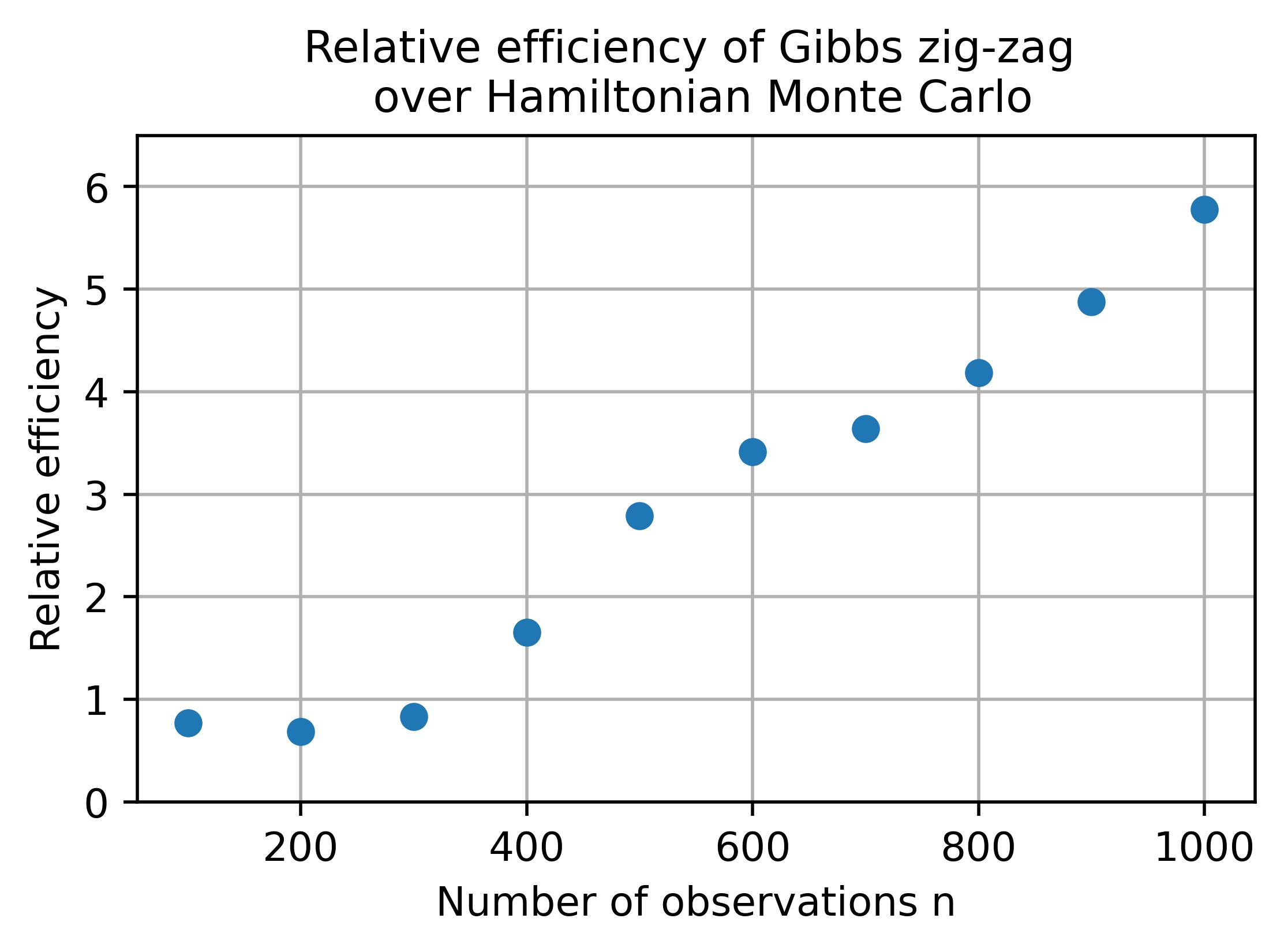}
\includegraphics[width=0.49\textwidth]{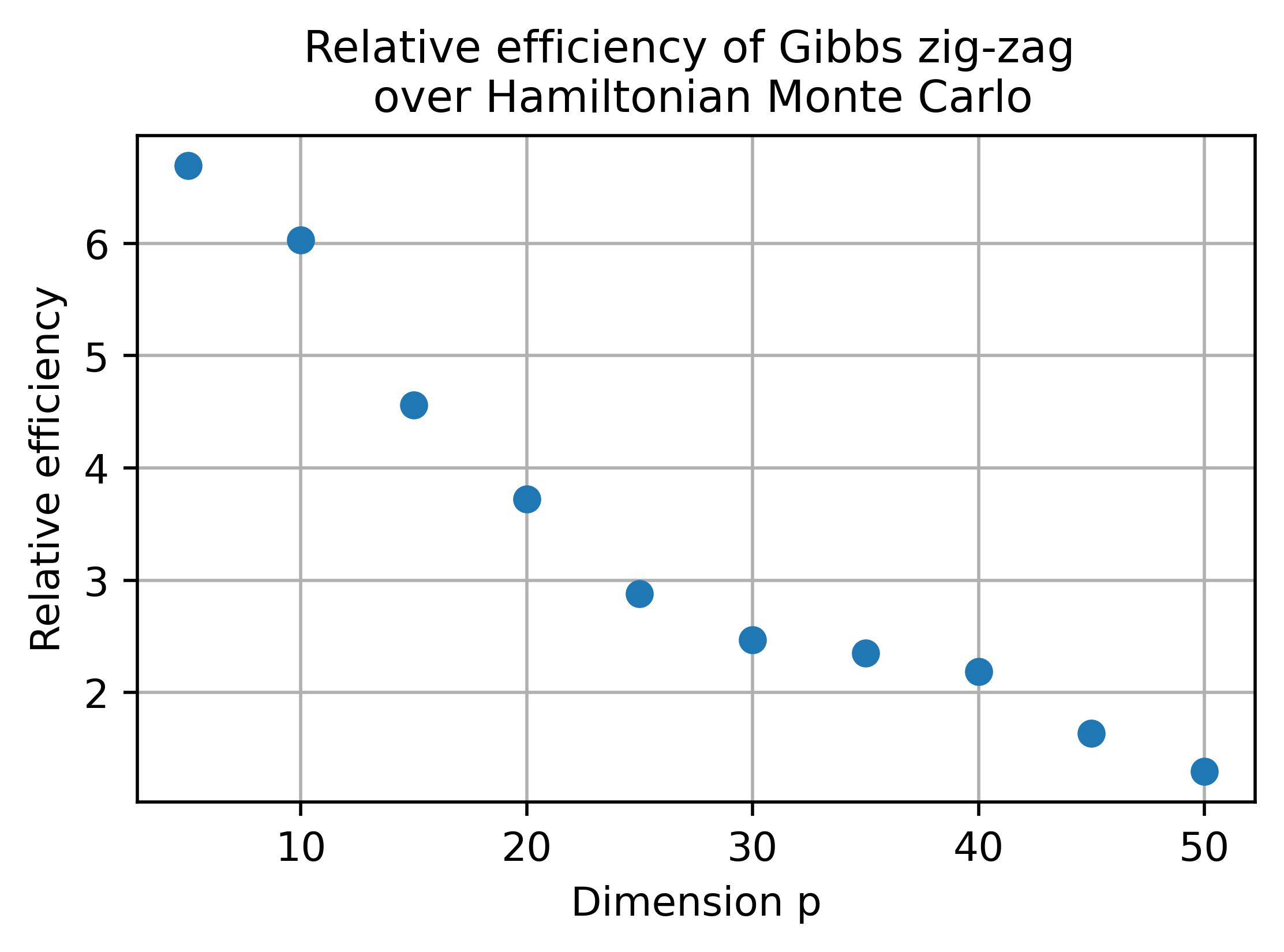}
\caption{Comparison of effective sample size per epoch of data evaluation for Gibbs zig-zag with sub-sampling and Hamiltonian Monte Carlo while using a shrinkage prior; the left plot is for $p=10^2$, and the right plot is for $n=10^2$.}
\label{fig.shrinkage_GZZvsHMC}
\end{figure}

\subsection{Choice of rate parameter  $\eta$}

In the above reported numerical experiments, we observed that the integrated auto-correlation time of the slowest mixing component is monotonically decreasing, approximately proportional to $\eta^{-1}$ for sufficiently small $\eta$, and approximately constant for large values of $\eta$. This observation is consistent with the large-deviation results for similar systems (see \citealp{lu2019methodological}) and can be used to derive the following heuristic for the parametrization of $\eta$. 

Let $C_{\rm Gibbs}$ and $C_{\rm ZZ}$ denote the total computational time incurred for updates of the hyperparameters $\alpha$ and the parameters $\pos$ during a simulation of the GZZ process, respectively. We suggest choosing $\eta$ such that $C_{\rm Gibbs}$ and $C_{\rm ZZ}$ are of comparable magnitude (for example, such that $r_{\rm Gibbs} = { C_{\rm Gibbs}}{(C_{\rm Gibbs}+C_{\rm ZZ})^{-1}} \approx 0.2$). This can be easily achieved in practice since $C_{\rm Gibbs} \propto \eta T$, and thus the computational time associated with Gibbs updates can be easily controlled by varying the value of $\eta$.

The motivation behind the heuristic is as follows. Let $\eta_{{\rm opt}}>0$ denote a value of $\eta$ that results in maximal sampling efficiency (measured in terms of effective sample size of the slowest mixing component per computational cost). 
If the determined value of $\eta$ is larger than $\eta_{{\rm opt}}$, then it follows that the loss in sampling efficiency relative to an optimal choice of $\eta$ is bounded from above by $r_{\rm Gibbs}$. Otherwise, if the determined value of $\eta$ is smaller than $\eta_{\rm opt}$, then, since the decrease of the integrated auto-correlation time is at most proportional to $\eta^{-1}$, a further increase of $\eta$ would not result in significant increase of sampling efficiency.

\begin{figure}
\centering 
\includegraphics[width=\textwidth]{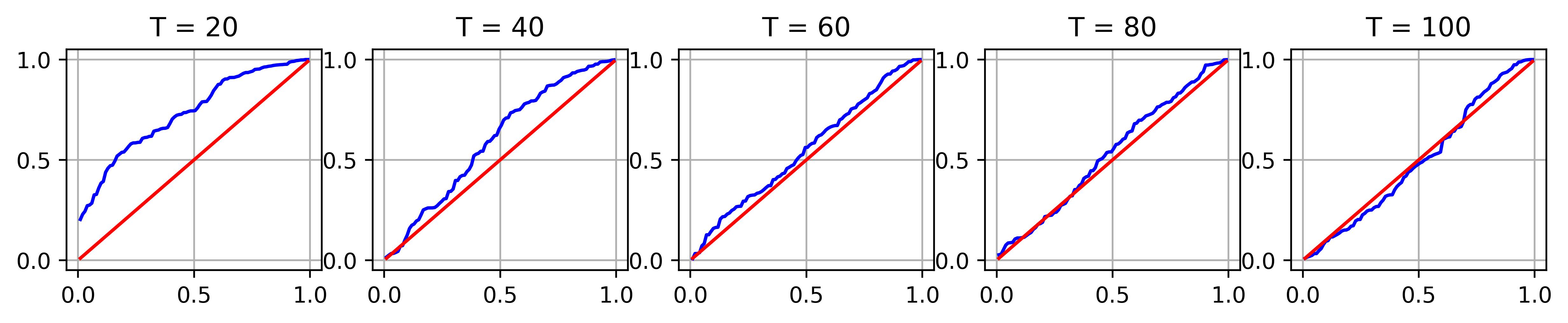}
\caption{QQ plots for the spike-and-slab prior.}
\label{fig.CLT-shrinkage_prior}
\end{figure}

\section{Discussion} 
\label{sec:conclusion}

Piecewise deterministic Markov process (PDMP) methods present a promising alternative to traditional (reversible) MCMC algorithms for sampling from posteriors in Bayesian inference. In this paper, we have combined one of the popular PDMPs, the zig-zag process, with Gibbs-like updates. Other variants of the framework that incorporate different PDMPs can be straightforwardly implemented as well (see \cref{app:Gibbs-PMDP}).
PMDP-based sampling schemes have found limited applications in past years, mainly due to the fact that for many sampling problems the construction of suitable tight upper bound is cumbersome, if not impossible. This includes the type of posterior sampling problems with hierarchical priors considered in this article. Thus, the proposed framework contributes to extending the applicability of PDMP-based sampling.

In terms of performance, our approach inherits both positive and negative features of classical PDMP-based approaches. Exact sub-sampling allows for computationally very efficient and asymptotically exact sampling in the presence of large data. However, as also demonstrated for other PDMP sampling methods (see, for example, \citealp{quiroz2021block}), sampling efficiency in comparison to HMC-within-Gibbs tends to deteriorate as the dimensionality of the sampling problem increases relative to the number of observations.

There are many interesting follow-up directions. While we have focused on PDMP schemes that preserve the exact target distribution, it could be useful to combine Gibbs-like updates with PDMP schemes that only approximately preserve the target distribution like those in \cite{pakman2017binary, cotter2020nuzz}. Theoretically, it would be interesting to study high-dimensional scaling limits of the GZZ process along the lines of \cite{bierkens2022high, deligiannidis2021randomized}. Moreover, the derivation of $\swrate$-dependent spectral estimates for the generator of the GZZ process using the Hypocoercivity framework by \cite{dolbeault2015hypocoercivity} (see \cite{andrieu2021hypocoercivity} for an adoption of that framework to PDMPs) would be of interest in order to gain a better understanding of the effect of parameter choices for $\swrate$ on sampling efficiency of the GZZ sampler. 

\section*{Acknowledgment}

DS and DD acknowledge support from National Science Foundation grant 1546130.
MS and DS acknowledge support from grant DMS-1638521 from SAMSI. The work of JL is supported in part by the National Science Foundation via grants DMS-1454939 and CCF-1934964 (Duke TRIPODS).

\bibliographystyle{apalike}
\bibliography{references}

\begin{thebibliography}{}

\bibitem[Andrieu et~al., 2021]{andrieu2021hypocoercivity}
Andrieu, C., Durmus, A., N{\"u}sken, N., and Roussel, J. (2021).
\newblock Hypocoercivity of piecewise deterministic {M}arkov process-{M}onte
  {C}arlo.
\newblock {\em The Annals of Applied Probability}, 31(5):2478--2517.

\bibitem[Andrieu and Roberts, 2009]{andrieu2009pseudo}
Andrieu, C. and Roberts, G.~O. (2009).
\newblock The pseudo-marginal approach for efficient {M}onte {C}arlo
  computations.
\newblock {\em The Annals of Statistics}, 37(2):697--725.

\bibitem[Beal, 2003]{beal2003variational}
Beal, M.~J. (2003).
\newblock {\em Variational algorithms for approximate {B}ayesian inference}.
\newblock University of London, London.

\bibitem[Bena{\"\i}m et~al., 2015]{benaim2015qualitative}
Bena{\"\i}m, M., Le~Borgne, S., Malrieu, F., and Zitt, P.-A. (2015).
\newblock Qualitative properties of certain piecewise deterministic markov
  processes.
\newblock {\em Annales de l'IHP Probabilit{\'e}s et statistiques},
  51(3):1040--1075.

\bibitem[Beskos et~al., 2013]{beskos2013optimal}
Beskos, A., Pillai, N., Roberts, G., Sanz-Serna, J.-M., and Stuart, A. (2013).
\newblock Optimal tuning of the hybrid {M}onte {C}arlo algorithm.
\newblock {\em Bernoulli}, 19(5A):1501--1534.

\bibitem[Bhattacharya, 1982]{bhattacharya1982functional}
Bhattacharya, R.~N. (1982).
\newblock On the functional central limit theorem and the law of the iterated
  logarithm for markov processes.
\newblock {\em Zeitschrift f{\"u}r Wahrscheinlichkeitstheorie und Verwandte
  Gebiete}, 60(2):185--201.

\bibitem[Bierkens et~al., 2018]{bierkens2018piecewise}
Bierkens, J., Bouchard-C{\^o}t{\'e}, A., Doucet, A., Duncan, A.~B., Fearnhead,
  P., Lienart, T., Roberts, G., and Vollmer, S.~J. (2018).
\newblock Piecewise deterministic {M}arkov processes for scalable {M}onte
  {C}arlo on restricted domains.
\newblock {\em Statistics \& Probability Letters}, 136:148--154.

\bibitem[Bierkens and Duncan, 2017]{bierkens2017limit}
Bierkens, J. and Duncan, A. (2017).
\newblock Limit theorems for the zig-zag process.
\newblock {\em Advances in Applied Probability}, 49(3):791--825.

\bibitem[Bierkens et~al., 2019a]{bierkens2019zig}
Bierkens, J., Fearnhead, P., and Roberts, G. (2019a).
\newblock The zig-zag process and super-efficient sampling for {B}ayesian
  analysis of big data.
\newblock {\em The Annals of Statistics}, 47(3):1288--1320.

\bibitem[Bierkens et~al., 2022]{bierkens2022high}
Bierkens, J., Kamatani, K., and Roberts, G.~O. (2022).
\newblock High-dimensional scaling limits of piecewise deterministic sampling
  algorithms.
\newblock {\em The Annals of Applied Probability (to appear)}.

\bibitem[Bierkens et~al., 2019b]{bierkens2019ergodicity}
Bierkens, J., Roberts, G.~O., and Zitt, P.-A. (2019b).
\newblock Ergodicity of the zigzag process.
\newblock {\em The Annals of Applied Probability}, 29(4):2266--2301.

\bibitem[Bouchard-C{\^o}t{\'e} et~al., 2018]{bouchard2018bouncy}
Bouchard-C{\^o}t{\'e}, A., Vollmer, S.~J., and Doucet, A. (2018).
\newblock The bouncy particle sampler: A nonreversible rejection-free {M}arkov
  chain {M}onte {C}arlo method.
\newblock {\em Journal of the American Statistical Association}, pages 1--13.

\bibitem[Chen and Hwang, 2013]{chen2013accelerating}
Chen, T.-L. and Hwang, C.-R. (2013).
\newblock Accelerating reversible {M}arkov chains.
\newblock {\em Statistics \& Probability Letters}, 83(9):1956--1962.

\bibitem[Cotter et~al., 2020]{cotter2020nuzz}
Cotter, S., House, T., and Pagani, F. (2020).
\newblock The {NuZZ}: Numerical zigzag sampling for general models.
\newblock {\em arXiv preprint arXiv:2003.03636}.

\bibitem[Del~Moral et~al., 2006]{del2006sequential}
Del~Moral, P., Doucet, A., and Jasra, A. (2006).
\newblock Sequential {M}onte {C}arlo samplers.
\newblock {\em Journal of the Royal Statistical Society: Series B (Statistical
  Methodology)}, 68(3):411--436.

\bibitem[Deligiannidis et~al., 2021]{deligiannidis2021randomized}
Deligiannidis, G., Paulin, D., Bouchard-C{\^o}t{\'e}, A., and Doucet, A.
  (2021).
\newblock Randomized {H}amiltonian {M}onte {C}arlo as scaling limit of the
  bouncy particle sampler and dimension-free convergence rates.
\newblock {\em The Annals of Applied Probability}, 31(6):2612--2662.

\bibitem[Diaconis et~al., 2000]{diaconis2000analysis}
Diaconis, P., Holmes, S., and Neal, R.~M. (2000).
\newblock {A}nalysis of a nonreversible {M}arkov chain sampler.
\newblock {\em Annals of Applied Probability}, pages 726--752.

\bibitem[Dolbeault et~al., 2015]{dolbeault2015hypocoercivity}
Dolbeault, J., Mouhot, C., and Schmeiser, C. (2015).
\newblock Hypocoercivity for linear kinetic equations conserving mass.
\newblock {\em Transactions of the American Mathematical Society},
  367(6):3807--3828.

\bibitem[Duane et~al., 1987]{duane1987hybrid}
Duane, S., Kennedy, A.~D., Pendleton, B.~J., and Roweth, D. (1987).
\newblock Hybrid {M}onte {C}arlo.
\newblock {\em Physics Letters B}, 195(2):216--222.

\bibitem[Fearnhead et~al., 2018]{fearnhead2018piecewise}
Fearnhead, P., Bierkens, J., Pollock, M., and Roberts, G.~O. (2018).
\newblock Piecewise deterministic {M}arkov processes for continuous-time
  {M}onte {C}arlo.
\newblock {\em Statistical Science}, 33(3):386--412.

\bibitem[Geman and Geman, 1987]{geman1987stochastic}
Geman, S. and Geman, D. (1987).
\newblock Stochastic relaxation, {G}ibbs distributions, and the bayesian
  restoration of images.
\newblock In {\em Readings in Computer Vision}, pages 564--584. Elsevier.

\bibitem[Hairer and Mattingly, 2011]{hairer2011yet}
Hairer, M. and Mattingly, J.~C. (2011).
\newblock Yet another look at {H}arris' ergodic theorem for {M}arkov chains.
\newblock In {\em Seminar on Stochastic Analysis, Random Fields and
  Applications VI}, pages 109--117. Springer.

\bibitem[Hastings, 1970]{hastings1970monte}
Hastings, W.~K. (1970).
\newblock Monte {C}arlo sampling methods using {M}arkov chains and their
  applications.
\newblock {\em Biometrika}, 57(1):97--109.

\bibitem[Ishwaran and Rao, 2005]{ishwaran2005spike}
Ishwaran, H. and Rao, J.~S. (2005).
\newblock Spike and slab variable selection: frequentist and {B}ayesian
  strategies.
\newblock {\em The Annals of Statistics}, 33(2):730--773.

\bibitem[Jacob and Thiery, 2015]{jacob2015nonnegative}
Jacob, P.~E. and Thiery, A.~H. (2015).
\newblock On nonnegative unbiased estimators.
\newblock {\em The Annals of Statistics}, 43(2):769--784.

\bibitem[Johndrow and Mattingly, 2017]{johndrow2017error}
Johndrow, J.~E. and Mattingly, J.~C. (2017).
\newblock Error bounds for approximations of {M}arkov chains.
\newblock {\em arXiv:1711.05382}.

\bibitem[Johndrow et~al., 2015]{johndrow2015approximations}
Johndrow, J.~E., Mattingly, J.~C., Mukherjee, S., and Dunson, D. (2015).
\newblock Approximations of {M}arkov chains and high-dimensional {B}ayesian
  inference.
\newblock {\em arXiv:1508.03387}.

\bibitem[Lelievre and Stoltz, 2016]{lelievre2016partial}
Lelievre, T. and Stoltz, G. (2016).
\newblock Partial differential equations and stochastic methods in molecular
  dynamics.
\newblock {\em Acta Numerica}, 25:681--880.

\bibitem[Lewis and Shedler, 1979]{lewis1979simulation}
Lewis, P.~W. and Shedler, G.~S. (1979).
\newblock Simulation of nonhomogeneous {P}oisson processes by thinning.
\newblock {\em Naval Research Logistics Quarterly}, 26(3):403--413.

\bibitem[Lu and Vanden-Eijnden, 2019]{lu2019methodological}
Lu, J. and Vanden-Eijnden, E. (2019).
\newblock Methodological and computational aspects of parallel tempering
  methods in the infinite swapping limit.
\newblock {\em Journal of Statistical Physics}, 174(3):715--733.

\bibitem[Maclaurin and Adams, 2015]{maclaurin2015firefly}
Maclaurin, D. and Adams, R.~P. (2015).
\newblock Firefly {M}onte {C}arlo: exact {MCMC} with subsets of data.
\newblock In {\em Twenty-Fourth International Joint Conference on Artificial
  Intelligence}, pages 4289--4295.

\bibitem[Metropolis et~al., 1953]{metropolis1953equation}
Metropolis, N., Rosenbluth, A.~W., Rosenbluth, M.~N., Teller, A.~H., and
  Teller, E. (1953).
\newblock Equation of state calculations by fast computing machines.
\newblock {\em The Journal of Chemical Physics}, 21(6):1087--1092.

\bibitem[Meyn and Tweedie, 1993]{meyn1993stability}
Meyn, S.~P. and Tweedie, R.~L. (1993).
\newblock Stability of markovian processes ii: Continuous-time processes and
  sampled chains.
\newblock {\em Advances in Applied Probability}, 25(3):487--517.

\bibitem[Meyn and Tweedie, 2012]{meyn2012markov}
Meyn, S.~P. and Tweedie, R.~L. (2012).
\newblock {\em Markov chains and stochastic stability}.
\newblock Springer Science \& Business Media.

\bibitem[Mitchell and Beauchamp, 1988]{mitchell1988bayesian}
Mitchell, T.~J. and Beauchamp, J.~J. (1988).
\newblock Bayesian variable selection in linear regression.
\newblock {\em Journal of the American Statistical Association},
  83(404):1023--1032.

\bibitem[Neal, 2011]{neal2011mcmc}
Neal, R.~M. (2011).
\newblock {MCMC} using {H}amiltonian dynamics.
\newblock {\em Handbook of Markov chain Monte Carlo}, 2(11):2.

\bibitem[Pakman, 2017]{pakman2017binary}
Pakman, A. (2017).
\newblock Binary bouncy particle sampler.
\newblock {\em arXiv:1711.00922}.

\bibitem[Pakman et~al., 2017]{pakman2016stochastic}
Pakman, A., Gilboa, D., Carlson, D., and Paninski, L. (2017).
\newblock Stochastic bouncy particle sampler.
\newblock In {\em Proceedings of the 34th International Conference on Machine
  Learning}, volume~70 of {\em Proceedings of Machine Learning Research}, pages
  2741--2750. PMLR.

\bibitem[Peters, 2012]{peters2012rejection}
Peters, E.~A. (2012).
\newblock Rejection-free {M}onte {C}arlo sampling for general potentials.
\newblock {\em Physical Review E}, 85(2):026703.

\bibitem[Pillai and Smith, 2014]{pillai2014ergodicity}
Pillai, N.~S. and Smith, A. (2014).
\newblock Ergodicity of approximate {MCMC} chains with applications to large
  data sets.
\newblock {\em arXiv:1405.0182}.

\bibitem[Polson and Scott, 2012]{polson2012half}
Polson, N.~G. and Scott, J.~G. (2012).
\newblock On the half-{C}auchy prior for a global scale parameter.
\newblock {\em Bayesian Analysis}, 7(4):887--902.

\bibitem[Quiroz et~al., 2018]{quiroz2018speeding}
Quiroz, M., Kohn, R., Villani, M., and Tran, M.-N. (2018).
\newblock Speeding up {MCMC} by efficient data subsampling.
\newblock {\em Journal of the American Statistical Association}, pages 1--13.

\bibitem[Quiroz et~al., 2021]{quiroz2021block}
Quiroz, M., Tran, M.-N., Villani, M., Kohn, R., and Dang, K.-D. (2021).
\newblock The block-{P}oisson estimator for optimally tuned exact subsampling
  {MCMC}.
\newblock {\em Journal of Computational and Graphical Statistics},
  30(4):877--888.

\bibitem[Rey-Bellet and Spiliopoulos, 2015]{rey2015irreversible}
Rey-Bellet, L. and Spiliopoulos, K. (2015).
\newblock Irreversible {L}angevin samplers and variance reduction: a large
  deviations approach.
\newblock {\em Nonlinearity}, 28(7):2081.

\bibitem[Roberts and Tweedie, 1996]{roberts1996exponential}
Roberts, G.~O. and Tweedie, R.~L. (1996).
\newblock Exponential convergence of {L}angevin distributions and their
  discrete approximations.
\newblock {\em Bernoulli}, 2(4):341--363.

\bibitem[Sen et~al., 2020]{sen2020efficient}
Sen, D., Sachs, M., Lu, J., and Dunson, D.~B. (2020).
\newblock Efficient posterior sampling for high-dimensional imbalanced logistic
  regression.
\newblock {\em Biometrika}, 107(4):1005--1012.

\bibitem[Sun et~al., 2010]{sun2010improving}
Sun, Y., Schmidhuber, J., and Gomez, F.~J. (2010).
\newblock Improving the asymptotic performance of {M}arkov chain
  {M}onte-{C}arlo by inserting vortices.
\newblock In {\em Advances in Neural Information Processing Systems}, pages
  2235--2243.

\bibitem[Tweedie, 1994]{tweedie1994topological}
Tweedie, R.~L. (1994).
\newblock Topological conditions enabling use of {H}arris methods in discrete
  and continuous time.
\newblock {\em Acta Applicandae Mathematica}, 34(1-2):175--188.

\bibitem[Vanetti et~al., 2017]{vanetti2017piecewise}
Vanetti, P., Bouchard-C{\^o}t{\'e}, A., Deligiannidis, G., and Doucet, A.
  (2017).
\newblock Piecewise deterministic {M}arkov chain {M}onte {C}arlo.
\newblock {\em arXiv:1707.05296}.

\bibitem[Welling and Teh, 2011]{welling2011bayesian}
Welling, M. and Teh, Y.~W. (2011).
\newblock Bayesian learning via stochastic gradient {L}angevin dynamics.
\newblock In {\em Proceedings of the 28th International Conference on Machine
  Learning (ICML-11)}, pages 681--688.

\bibitem[Wu and Robert, 2020]{wu2020coordinate}
Wu, C. and Robert, C.~P. (2020).
\newblock Coordinate sampler: a non-reversible {G}ibbs-like {MCMC} sampler.
\newblock {\em Statistics and Computing}, 30(3):721--730.

\bibitem[Zhang et~al., 2021]{zhang2021large}
Zhang, Z., Nishimura, A., Bastide, P., Ji, X., Payne, R.~P., Goulder, P.,
  Lemey, P., and Suchard, M.~A. (2021).
\newblock Large-scale inference of correlation among mixed-type biological
  traits with phylogenetic multivariate probit models.
\newblock {\em The Annals of Applied Statistics}, 15(1):230--251.

\bibitem[Zhao and Bouchard-C{\^o}t{\'e}, 2021]{zhao2021analysis}
Zhao, T. and Bouchard-C{\^o}t{\'e}, A. (2021).
\newblock Analysis of high-dimensional continuous time {M}arkov chains using
  the local bouncy particle sampler.
\newblock {\em Journal of Machine Learning Research}, 22(91):1--41.

\end{thebibliography}

\newpage 
\appendix

\section{Ergodic properties and central limit theorem}\label{sec:ergodic:properties:theorems}

\subsection{Additional notations}

In the following, we use $\PP_{(\pos,\hyp,\vel)}( \cdot ) = \PP\{\cdot  \mid  [\ppos(0),\phyp(0),\pvel(0)]=(\pos,\hyp,\vel) \} $ as a shorthand for probabilities in terms of the path measure of the GZZ process with initial value $(\pos,\hyp,\vel)$. Similarly, we use the shorthand $\EE_{(\pos,\hyp,\vel)}( \cdot ) = \EE\{\cdot  \mid  [\ppos(0),\phyp(0),\pvel(0)]=(\pos,\hyp,\vel) \} $ for expectations with respect to the same path measure. 
Moreover, for given $t\geq 0$ we denote by 
\begin{equation*}
\transkernel_{t}\{(\pos,\hyp,\vel),\cdot\} 
=
\PP_{(\pos,\hyp,\vel)} [ \{ \ppos(t),\pvel(t),\phyp(t)\} \in \cdot],
\end{equation*}
the transition kernel associated with the GZZ process. The transition kernel $\transkernel_{t}$ may be considered as an operator on the set of probability measures on $\domain$ whose action on a probability measure $\nu$ is defined as
\begin{equation*}
(\transkernel_{t} \nu)(\cdot) = \sum_{\vel \in \velDomain} \int_{\posDomain\times \hypDomain} \transkernel_{t}\{(\pos,\hyp,\vel),\cdot\} \nu(\dd\pos\,\dd \hyp, \vel).
\end{equation*}

\subsection{Invariant measure and Harris recurrence} 

We first assert that $\augtarget$ is indeed an invariant measure of the GZZ process.
\begin{proposition}\label{prop:inv}
The GZZ process has $\augtarget(\dd \pos\,\dd \hyp, \vel)$ as an invariant measure; that is, $\transkernel_{t} \augtarget = \augtarget$ for $t\geq 0$.
\end{proposition}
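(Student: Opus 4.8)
The plan is to show that $\augtarget$ is infinitesimally invariant for the generator $\Lcgzz = \Lczz + \swrate \Lcg$, i.e. that $\int_\domain (\Lcgzz f)\, \dd\augtarget = 0$ for all $f$ in a suitable core of test functions (e.g. $f \in \test$ with appropriate decay/compact support so that all boundary terms vanish and the integration by parts below is justified), and then upgrade this to the stated identity $\transkernel_t \augtarget = \augtarget$ via the standard argument that the generator identity, together with well-posedness of the PDMP semigroup, determines invariance of the measure. The core computation splits along the superposition: it suffices to check $\int \Lczz f \, \dd\augtarget = 0$ and $\int \Lcg f \, \dd\augtarget = 0$ separately.

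For the $\Lcg$ term, fix $\vel$ and $\pos$ and disintegrate $\augtarget(\dd\pos\,\dd\hyp,\vel) = \pi(\dd\hyp \mid \pos)\, \pi(\dd\pos)\, \mu(\vel)$. By the assumed $\pi(\dd\hyp\mid\pos)$-invariance of the Markov kernel $\hyptranskernel\{(\hyp,\pos),\cdot\}$, the inner integral $\int_{\hypDomain}\!\int_{\hypDomain}\{f(\pos,\hyp',\vel) - f(\pos,\hyp,\vel)\}\,\hyptranskernel\{(\hyp,\pos),\dd\hyp'\}\,\pi(\dd\hyp\mid\pos)$ vanishes for every $(\pos,\vel)$ (the two terms integrate to $\int f(\pos,\cdot,\vel)\,\dd\pi(\cdot\mid\pos)$ each), so integrating the remaining $\pos,\vel$ variables gives $\int \Lcg f\, \dd\augtarget = 0$. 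This step is essentially definitional once the disintegration is set up correctly.

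For the $\Lczz$ term, the computation is the classical zig-zag invariance argument of \cite{bierkens2019zig}, now carried out fibrewise in $\hyp$. Writing $\int \Lczz f\, \dd\augtarget$ as a sum over $\vel \in \velDomain$ of integrals against $\frac{1}{Z}e^{-U(\pos,\hyp)}\,\dd\pos\,\dd\hyp\,\mu(\vel)$, the transport part $\sum_i \vel_i \partial_{\pos_i} f$ is integrated by parts in $\pos_i$, producing $\sum_i \vel_i (\partial_{\pos_i} U) f$ (times $e^{-U}$); the jump part contributes $\sum_i \lambda_i(\pos,\hyp,\vel)\{f(\pos,\hyp,F_i\vel) - f(\pos,\hyp,\vel)\}$. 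One then reindexes the $\vel$-sum by $\vel \mapsto F_i \vel$ in the appropriate pieces and uses the skew-symmetry identity $\lambda_i(\pos,\hyp,\vel) - \lambda_i(\pos,\hyp,F_i\vel) = \vel_i \partial_{\pos_i} U(\pos,\hyp)$, which holds because $\{a\}^+ - \{-a\}^+ = a$ and the refreshment rates $\gamma_i$ cancel; the transport and jump contributions then cancel term by term. Here $\hyp$ is a passive parameter throughout, since $\Lczz$ involves no $\hyp$-derivatives and the rates depend on $\hyp$ only parametrically.

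The main obstacle is not any single algebraic step but the functional-analytic bookkeeping needed to make the argument rigorous: identifying a core of test functions on the mixed continuous/discrete state space $\domain$ on which $\Lcgzz$ is well-defined and on which the integration by parts in $\pos$ incurs no boundary terms (on $\posDomain = \RR^\posdim$ this needs decay of $f e^{-U}$ at infinity, which is delivered by working with compactly supported $f$ or by a truncation/limiting argument), and then invoking the appropriate PDMP theory — e.g. that the extended generator of this PDMP agrees with $\Lcgzz$ on this core and that $\int \Lcgzz f\,\dd\augtarget = 0$ on the core implies $\transkernel_t\augtarget = \augtarget$ (a consequence of Dynkin's formula / the forward equation, given non-explosiveness of the process). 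I would handle non-explosiveness either by citing the standard conditions for PDMPs or by noting it follows from the global control on the jump rates assumed elsewhere in the paper; the bulk of the written proof will therefore be citing \cite{bierkens2019zig} for the zig-zag fibre computation and the relevant PDMP reference for the core/invariance passage, with the $\Lcg$ piece spelled out explicitly.
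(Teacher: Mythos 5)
Your proposal is correct and takes essentially the same approach as the paper: decompose $\Lcgzz=\Lczz+\swrate\Lcg$, disintegrate $\augtarget$ along the appropriate fibres, and show each piece integrates to zero, with the $\Lcg$ piece following from $\pi(\dd\hyp\mid\pos)$-invariance of $\hyptranskernel$ and the $\Lczz$ piece reducing to the standard zig-zag invariance identity for fixed $\hyp$. The only difference is that the paper simply cites Theorem~2.2 of \cite{bierkens2019zig} for the fibrewise zig-zag calculation and does not elaborate on the passage from the generator identity to $\transkernel_t\augtarget=\augtarget$, whereas you spell both out.
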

 
In order to show that the invariant measure $\augtarget$ is unique, we require the following assumption pertaining to the transition kernel $\hyptranskernel$ for hyperparameter updates and the switching rates $\lambda_{i} ~(i=1,\dots,\posdim)$. 
\begin{assumption}[on $\hyptranskernel$ and $\lambda_{i} ~(i=1,\dots,\posdim)$]\label{as:supp} \

\begin{enumerate}[(A)]
\item \label{as:supp:it:1}
The Markov transition kernel $\hyptranskernel$ possesses a smooth density, and for any $(\pos,\hyp)\in \hypDomain$, its associated probability measure has full support on $\hypDomain$,  that is,
\begin{equation*}
\hyptranskernel \left \{ ( \pos,\hyp),A \right \}  
=
\int_{A} \hyptransdens\{ (\pos,\hyp), \hyp^{\prime} \} \, \dd \hyp^{\prime}, 
\end{equation*}
with $\hyptransdens\in \C^{\infty} \left [ (\posDomain \times \hypDomain) \times \hypDomain,  (0,\infty)\right ]$ and $\hyptransdens\{(\pos,\hyp),\cdot\}>0$ for all $(\pos,\hyp) \in \posDomain\times\hypDomain$ and all measurable sets $A\subset \hypDomain$. 
\item \label{as:supp:it:2}
The switching rates are bounded away from zero, that is, there exists $\lambdamin>0$ such that $\lambda_{i}(\pos,\hyp,\vel) \geq \lambdamin$ for all $i=1,\dots,\posdim$ and all $(\pos,\hyp,\vel) \in \domain$.
\end{enumerate}
\end{assumption}
For $\augtarget$-integrable $\varphi$, let
\begin{equation*} %\label{eq:estimator}
\widehat{\varphi}_{t}
=
\frac{1}{t}\int_{0}^{t} \varphi\{\ppos(s),\pvel(s),\phyp(s)\} \, \dd s,
\end{equation*}
as previously defined, be the corresponding finite trajectory average up to time $t$. Uniqueness of the invariant measure as well as some other regularity properties of the GZZ process, which hold under \cref{as:supp}, ensure that a law of large numbers (path-wise ergodicity) holds for $\widehat{\varphi}_{t} $ as $t\to \infty$. This is made precise in the following theorem.
\begin{theorem}\label{thm:ergodic}
If \cref{as:supp} is satisfied, then the GZZ process is ergodic with unique invariant measure $\augtarget$. In particular, the process is path-wise ergodic in the sense that
\begin{equation*}
\lim_{t \to \infty} \widehat{\varphi}_{t}
=
\EE_{(\pos,\hyp,\vel)\sim \augtarget}\{\varphi(\pos,\hyp,\vel)\}
~~ \text{almost surely}
\end{equation*}
for any real-valued $\augtarget$-integrable test function $\varphi$.
\end{theorem}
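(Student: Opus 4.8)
The plan is to deduce \cref{thm:ergodic} from the general theory of positive Harris recurrent Markov processes. Since \cref{prop:inv} already supplies an invariant probability measure $\augtarget$, it suffices to establish that the GZZ process is well defined and non-explosive, that it is $\psi$-irreducible and aperiodic, and that it enjoys enough regularity (a Feller/$T$-process property) for compact sets to be petite; positive Harris recurrence, uniqueness of $\augtarget$, and the path-wise law of large numbers then follow from standard results (e.g. Meyn--Tweedie, or the PDMP-specific account of Davis).

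First I would check non-explosion. The jump times of the process are the superposition of the Gibbs events, which form a Poisson process of constant rate $\swrate$, and the bounce events, whose hazard rates $\wt{m}_i(t,\hyp)$ in \cref{eq:mod:rate} are continuous in $t$ and hence locally bounded along the piecewise-linear trajectory $\ppos(t)$; standard PDMP criteria then rule out accumulation of jumps in finite time, so the process defined by \cref{alg:Gzigzag} is a well-defined, non-explosive, strong Markov process on $\domain$. Next I would record the regularity properties: continuity of the deterministic flow, continuity of the switching rates, and the smoothness of $\hyptransdens$ from \cref{as:supp:it:1} imply that $\transkernel_t$ maps bounded continuous functions to bounded lower-semicontinuous functions, so that the GZZ process is a $T$-process; consequently every compact subset of $\domain$ is petite once irreducibility is in hand.

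The substantive step --- and the one I expect to be the main obstacle --- is $\psi$-irreducibility: I would show that for every starting state $(\pos,\hyp,\vel)\in\domain$ and every non-empty open set $O\subset\domain$ there is a time $t>0$ with $\transkernel_t\{(\pos,\hyp,\vel),O\}>0$, with $\psi$ the product of Lebesgue measure on $\posDomain\times\hypDomain$ and counting measure on $\velDomain$ (which dominates $\augtarget$). The idea is to build an explicit positive-probability trajectory. Shrinking $O$ to a product of a ball around a target $\pos^\star$, a ball around a target $\hyp^\star$, and a single velocity $\vel^\star$: using \cref{as:supp:it:2}, the bound $\lambda_i\geq\lambdamin>0$ guarantees that on any bounded time window each velocity component flips with positive probability, so the zig-zag part of the dynamics can steer $\ppos$ from $\pos$ into the target ball while arriving with velocity sign pattern $\vel^\star$ --- a controllability argument for the zig-zag flow of the kind used in the ergodicity analysis of the plain zig-zag process. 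Simultaneously, one conditions on exactly one Gibbs event occurring in a suitable sub-window; at that event the hyperparameter is redrawn from $\hyptranskernel$, and since $\hyptransdens$ is continuous and strictly positive everywhere by \cref{as:supp:it:1}, it places positive mass on the target ball around $\hyp^\star$. Each ingredient has strictly positive probability --- the Gibbs-event count is Poisson, the relevant rates and densities are bounded below on the compact sets traversed --- and, by the strong Markov property, so does the concatenated path; hence $\transkernel_t\{(\pos,\hyp,\vel),O\}>0$. Aperiodicity is immediate because the event times have absolutely continuous distributions, so no cyclic decomposition of the state space can occur.

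Finally I would assemble the pieces: $\psi$-irreducibility together with the $T$-process property makes all compact sets petite, and combined with the existence of the invariant probability measure $\augtarget$ this yields, by the standard equivalence for Harris processes, that the GZZ process is positive Harris recurrent; in particular $\augtarget$ is its unique invariant measure, proving the first assertion of \cref{thm:ergodic}. The path-wise law of large numbers then follows from the ergodic theorem for positive Harris recurrent continuous-time Markov processes, which gives $\lim_{t\to\infty}\widehat{\varphi}_t=\EE_{(\pos,\hyp,\vel)\sim\augtarget}\{\varphi(\pos,\hyp,\vel)\}$ almost surely from every initial condition, for every $\augtarget$-integrable $\varphi$. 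The delicate part throughout is the irreducibility/controllability argument --- in particular simultaneously steering the $\pos$-trajectory and its terminal velocity with the zig-zag dynamics while inserting a single well-timed Gibbs refresh --- whereas non-explosion, the $T$-process property, and the final invocation of the ergodic theorem are comparatively routine.
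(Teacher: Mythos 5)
Your overall route is the same as the paper's: build a $T$-process via regularity of the flow, the rates, and $\hyptransdens$; obtain $\psi$-irreducibility from a controllability argument (your ``steer the zig-zag, insert one well-timed Gibbs refresh'' path is exactly the content of the paper's \cref{lem:cont:fix}); and then invoke Meyn--Tweedie $T$-process theory, ending with the skeleton-chain step needed to pass from Harris recurrence to the continuous-time ergodic theorem.

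The gap is in the final assembly. You assert that $\psi$-irreducibility, the $T$-process property (so compact sets are petite), and the existence of the invariant probability measure $\augtarget$ already yield positive Harris recurrence ``by the standard equivalence for Harris processes.'' That equivalence is not what you need: for a $\psi$-irreducible $T$-process, Harris recurrence is equivalent to \emph{non-evanescence}, i.e.\ $\PP_{(\pos,\hyp,\vel)}\{\abs{(\ppos,\phyp)}\to\infty\}=0$ for \emph{every} starting state (Meyn and Tweedie 1993, Thm.~3.2). The invariant probability measure, via tightness, rules out evanescence only for $\augtarget$-almost every initial condition; it does not by itself exclude escape to infinity from a $\augtarget$-null set of starting points, which is precisely what is required to conclude the path-wise law of large numbers from every initial condition. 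The paper closes this hole with \cref{lem:non:evan}, and that lemma is not free: its proof leans on \cref{lem:abs:cont}, which establishes that after the (a.s.\ finite, thanks to $\lambda_i\geq\lambdamin$) stopping time at which the hyperparameter has been refreshed and $(\posdim-1)$ distinct velocity components have flipped, the law of $\{\ppos,\phyp\}$ is absolutely continuous with respect to Lebesgue measure; conditioning on this stopping time then transfers the a.e.\ non-evanescence to all starting points. Your proposal contains no analogue of either lemma, so the conclusion ``positive Harris recurrent from every state'' is not actually established. The remaining pieces --- non-explosion, the $T$-process property, controllability-based irreducibility, and the skeleton-chain argument --- are in line with what the paper does.
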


\subsection{Geometric ergodicity and central limit theorem} 
In addition to path-wise ergodicity of the process, we show exponential convergence (geometric ergodicity) of the GZZ process for the practically relevant case where updates of the hyperparameters are performed as Gibbs updates, that is, $\hyptranskernel\{(\pos,\hyp),\cdot \} = \target( \cdot \mid \pos)$.

More precisely, we show exponential decay of the semi-group operators $(\exp({t\Lcgzz}))_{t\geq 0}$ in a suitable weighted $L^{\infty}$-space as $t\to \infty$, where
\begin{equation*}
\{\exp(t\Lcgzz)\varphi\}(\pos,\hyp,\vel) 
=
\EE_{(\pos,\hyp,\vel)}[\varphi\{\ppos(t),\phyp(t),\pvel(t)\}]    
\end{equation*}
denotes the evolution operator associated with the GZZ process. In order for exponential convergence to hold we require the potential function $\pot$ to satisfy certain asymptotic growth conditions and we require the excess switching rates $\gamma_{i} ~ (i=1,\dots,\posdim)$ to be bounded.

\begin{assumption}[On potential function $U$ and excess switching rates $\gamma_{i}$]\label{as:pot}\
\begin{enumerate}[(A)]
\item \label{as:pot:it:1} 
%The potential function  $\pot$ satisfies the asymptotic growth conditions
There exist continuous functions $g_{i} : \posDomain \rightarrow [0,\infty)~ (i=1,2)$, satisfying $g_{i}(\pos) \rightarrow 0$ as $\abs{\pos} \rightarrow \infty$ and a constant $c>0$ so that the inequalities
\begin{equation}
\frac{\max \{ 1,\norm{ {\rm Hess}_{\pos} \pot(\pos,\hyp)} \} }{|\nabla_{\pos} \pot(\pos,\hyp)|}
\leq g_{1}(\pos) 
\quad
\text{and}
\quad 
\frac{\abs{\nabla_{\pos} \pot(\pos,\hyp) }}{\pot(\pos,\hyp)} \leq g_{2}(\pos),
\end{equation}
hold for all $\hyp \in \hypDomain$ and $\pos \in \posDomain$ with $\abs{\pos}>c$.
Here ${\rm Hess}_{\pos} \pot$ and $\nabla_{\pos}\pot$ denote the Hessian and gradient of the function $\pos \mapsto U(\pos,\hyp)$,
respectively, and $\abs{\cdot}$ and $\norm{\cdot}$ denote the Euclidean norm and the Frobenius norm, respectively.
\item \label{as:pot:it:3}
The excess switching rates $\gamma_{i} ~ (i=1,\dots,\posdim)$ are bounded from above, that is, there exists $\gammamax>0$ so that
\begin{equation}\label{eq:def:lya}
\sup_{(\pos,\hyp)\in \posDomain\times \hypDomain} \gamma_{i}(\pos,\hyp) \leq \gammamax.
\end{equation}
\item \label{as:pot:it:2} 
Let $\delta>0$ and $a>0$ be such that $0\leq \gammamax \delta <a  < 1$ with $\gammamax$ as specified in \cref{as:pot}. Define the function 
\begin{equation*}
V(\pos,\hyp,\vel)  
=
\exp \left [ a \pot(\pos,\hyp) + \sum_{i=1}^{\posdim} \phi \left \{ \vel_{i} \partial_{\pos_{i}}\pot(\pos,\hyp) \right \} \right ]
\end{equation*}
where $\phi(s) = {\rm sign}(s) \log(1+\delta \abs{s})/2$. There exist a choice of $a$ and $\delta$, and a constants $r>0$ and $c>0$ such that the inequality 
\begin{equation} \label{eq:lya:eqn:1}
\int_{\hypDomain} \frac{V(\pos,\aatt,\vel)}{V(\pos,\aa,\vel) } \exp\{-\pot(\pos,\aatt)\} \dd \aatt + r
<
\int_{\hypDomain} \exp\{-\pot(\pos,\aatt)\} \dd \aatt
\end{equation}
holds for all $(\pos,\aa) \in \posDomain \times \hypDomain$ with $\abs{(\pos,\aa)}>c$, and all $\vel \in \{-1,1\}^{\posdim}$.
\end{enumerate}
\end{assumption} 
The following theorem provides a simple (yet restrictive) condition on the form of the potential energy function $U(\pos,\hyp,\vel)$ which is sufficient for \cref{as:pot}\eqref{as:pot:it:2}, to be satisfied.

\begin{theorem}\label{thm:cond:pot}
\cref{as:pot}\eqref{as:pot:it:2} is satisfied if \cref{as:supp} and  \cref{as:pot}\eqref{as:pot:it:3}, hold, and the potential function $U$ can be decomposed as $U(\pos,\hyp)= U_{1}(\pos) + \bfunc(\pos,\hyp) + U_{2}(\hyp)$, where  $\bfunc$ is such that 
the absolute values of $\bfunc$ and its derivatives are bounded, that is, there exists $\bmax >0$ such that 
\begin{equation*}
\abs{\bfunc(\pos,\hyp)} \leq \bmax 
\quad \text{and} \quad
\abs{ \partial_{\pos_{i}} \bfunc(\pos,\hyp)} \leq \bmax 
\end{equation*}
for all $(\pos,\hyp) \in \posDomain \times \hypDomain$, and $i = 1, \dots, \posdim$. 
\end{theorem}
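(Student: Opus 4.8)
The plan is to verify \eqref{eq:lya:eqn:1} by substituting the decomposition $\pot(\pos,\hyp)=U_{1}(\pos)+\bfunc(\pos,\hyp)+U_{2}(\hyp)$ into $V$ and exploiting that, in the ratio $V(\pos,\tthyp,\vel)/V(\pos,\hyp,\vel)$, every contribution depending on $(\pos,\vel)$ but not on the hyperparameter cancels. Since $\partial_{\pos_{i}}\pot(\pos,\hyp)=\partial_{\pos_{i}}U_{1}(\pos)+\partial_{\pos_{i}}\bfunc(\pos,\hyp)$, neither $\pot(\pos,\tthyp)-\pot(\pos,\hyp)=\{\bfunc(\pos,\tthyp)-\bfunc(\pos,\hyp)\}+\{U_{2}(\tthyp)-U_{2}(\hyp)\}$ nor $\partial_{\pos_{i}}\pot(\pos,\tthyp)-\partial_{\pos_{i}}\pot(\pos,\hyp)=\partial_{\pos_{i}}\bfunc(\pos,\tthyp)-\partial_{\pos_{i}}\bfunc(\pos,\hyp)$ involves $U_{1}$. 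Using that $\phi$ is $(\delta/2)$-Lipschitz (because $\abs{\phi'(s)}=\delta/(2+2\delta\abs{s})\le\delta/2$) together with $\abs{\bfunc}\le\bmax$ and $\abs{\partial_{\pos_{i}}\bfunc}\le\bmax$, taking logarithms gives
\[ \frac{V(\pos,\tthyp,\vel)}{V(\pos,\hyp,\vel)}\;\le\;C\exp\!\left[a\{U_{2}(\tthyp)-U_{2}(\hyp)\}\right],\qquad C:=\exp\{2a\bmax+\posdim\delta\bmax\}, \]
uniformly in $(\pos,\vel)$, with $C\to1$ as $(a,\delta)\to(0,0)$.

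Inserting this bound into the left integral of \eqref{eq:lya:eqn:1} and writing $\exp\{-\pot(\pos,\tthyp)\}=\exp\{-U_{1}(\pos)\}\exp\{-\bfunc(\pos,\tthyp)\}\exp\{-U_{2}(\tthyp)\}$, the common factor $\exp\{-U_{1}(\pos)\}$ can be pulled out of both integrals and $\exp\{-\bfunc(\pos,\cdot)\}\in[e^{-\bmax},e^{\bmax}]$ only introduces bounded constants, so \eqref{eq:lya:eqn:1} reduces to
\[ \left\{1-C\,e^{2\bmax}\,\kappa(a)\,e^{-aU_{2}(\hyp)}\right\}\int_{\hypDomain}e^{-\pot(\pos,\tthyp)}\,\dd\tthyp\;>\;r,\qquad \kappa(a):=\frac{\int_{\hypDomain}e^{-(1-a)U_{2}}}{\int_{\hypDomain}e^{-U_{2}}}, \]
for $(\pos,\hyp)$ outside a large ball, where $\kappa(a)\to1$ as $a\to0$ by dominated convergence (the only integrability needed of $U_{2}$ being $\int_{\hypDomain}e^{-(1-a_{0})U_{2}}\,\dd\tthyp<\infty$ for some $a_{0}>0$). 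The admissible range $0\le\gammamax\delta<a<1$ is nonempty, and $a,\delta$ may be taken arbitrarily small, precisely because $\gammamax<\infty$ by \cref{as:pot}, \eqref{as:pot:it:3}; \cref{as:supp} supplies the smoothness and full support of $\hyptranskernel$ (automatic here, since $\hyptranskernel\{(\pos,\hyp),\cdot\}=\target(\cdot\mid\pos)$) and the lower bound on the switching rates.

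To finish, I would pick $a$ small — hence $\delta<a/\gammamax$ — so that $C\,e^{2\bmax}\kappa(a)$ is as close to $1$ as desired, and then bound $\int_{\hypDomain}e^{-\pot(\pos,\tthyp)}\,\dd\tthyp$ from below: for $\abs{\hyp}\to\infty$ the bracket $1-C\,e^{2\bmax}\kappa(a)e^{-aU_{2}(\hyp)}$ tends to $1$, using $U_{2}(\hyp)\to\infty$ (properness of $\target$ in the $\hyp$-direction, with $\bfunc$ bounded), so a small fixed $r>0$ suffices there. The step I expect to be the real obstacle is the complementary regime $\abs{\pos}\to\infty$ with $\hyp$ bounded, in which $\int_{\hypDomain}e^{-\pot(\pos,\tthyp)}\,\dd\tthyp=e^{-U_{1}(\pos)}\int_{\hypDomain}e^{-\bfunc(\pos,\tthyp)-U_{2}(\tthyp)}\,\dd\tthyp$ may be arbitrarily small and the Gibbs block alone contracts $V$ too weakly to beat a fixed $r$: reconciling the two regimes into a single pair $(r,c)$ requires careful accounting of the bounded $\bfunc$-terms, and, within the overall ergodicity argument, of the decay of $V$ in the $\pos$-direction supplied by $\Lczz$ via the growth conditions of \cref{as:pot}, \eqref{as:pot:it:1}.
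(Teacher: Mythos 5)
Your proposal follows the same route as the paper's proof of \cref{thm:cond:pot}: factor $V = V_0\prod_{i=1}^{\posdim} V_i$, bound each $V_i$-ratio uniformly in $(\pos,\vel)$ using the fact that $\phi$ has derivative bounded by $\delta/2$ together with $\abs{\partial_{\pos_i}\bfunc}\le\bmax$, reduce \eqref{eq:lya:eqn:1} to a statement about the $V_0$-ratio, factor out the common $e^{-U_1(\pos)}$ on both sides, absorb $e^{\pm \bfunc}\in[e^{-\bmax},e^{\bmax}]$ into constants, and invoke the decay of $e^{-aU_2(\cdot)}$. Up to minor bookkeeping (you track $\kappa(a)$ explicitly and note $a,\delta\downarrow0$ is available; the paper instead fixes $a,\delta$ and writes the analogous bounds as \eqref{eq:lya:eqn:2}--\eqref{eq:lya:eqn:3}), this is the paper's argument.

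The concern you flag at the end, however, is real, and the paper's own proof does not resolve it either. After cancelling the boundedness of $\bfunc$, the required inequality is of the form
\begin{equation*}
r < e^{-U_1(\pos)}\Bigl[c_0 - c\,c_0'\,e^{-aU_2(\aa)}\Bigr],
\end{equation*}
where $\aa$ is the free hyperparameter in \eqref{eq:lya:eqn:1}. The bracket is eventually bounded below by a positive constant once $\abs{\aa}$ is large, but the prefactor $e^{-U_1(\pos)}$ is not bounded below as $\abs{\pos}\to\infty$, so no single $r>0$ can work uniformly over $\{\abs{(\pos,\aa)}>c\}$. More directly: since $V>0$, the left side of \eqref{eq:lya:eqn:1} always exceeds $r$, hence \eqref{eq:lya:eqn:1} forces $\int_{\hypDomain}e^{-\pot(\pos,\aatt)}\dd\aatt > r$, which fails as soon as $U_1(\pos)$ is large. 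Consistent with this, the paper's proof concludes only that \eqref{eq:lya:eqn:1} holds ``for sufficiently large $\aatt$,'' which is weaker than what the statement of \cref{as:pot}\eqref{as:pot:it:2} demands, and silently omits the regime $\abs{\pos}\to\infty$ with $\aa$ bounded. As you suspect, closing that regime should come from $\Lczz$ via \cref{as:pot}\eqref{as:pot:it:1}, not from the Gibbs block; a consistent formulation would weaken \cref{as:pot}\eqref{as:pot:it:2} to require \eqref{eq:lya:eqn:1} only for $\aa$ outside a compact subset of $\hypDomain$, and combine the two Lyapunov estimates in \cref{lem:lyapunov} by splitting the complement of $C$ accordingly rather than requiring each of $\Lczz V$ and $\Lcg V$ to be negative separately. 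So: your argument is essentially the paper's, and the gap you identify is a genuine gap shared by the paper's proof.
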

For $V : \domain \to [1,\infty)$, define the corresponding weighted $L^{\infty}$-norm as 
\begin{equation*}
\norm{\varphi}_{L^{\infty}_{V}}
= 
\norm*{\frac{\varphi}{V}}_{L^{\infty}}, \quad \varphi : \domain \to \RR ~ \text{measurable},
\end{equation*}
and denote by $L^{\infty}_{V}(\domain)$  the Banach space induced by this norm. 
Under \cref{as:supp,as:pot}, there exists a suitable function $V$ such that the difference between $\exp(t\Lcgzz)\varphi$ and the expected value of $\varphi$ under the target measure decays exponentially ${L^{\infty}_{V}(\domain)}$ as $t\to\infty$.
\begin{theorem} \label{thm:ergodic:2}
Let \cref{as:supp,as:pot} be satisfied and $\hyptranskernel\{(\pos,\hyp),\cdot \} = \target( \cdot \mid \pos)$, and consider the function $V$ as defined in \eqref{eq:def:lya}.

There exist $\const >0$ and $\lambda>0$ such that
\begin{equation}\label{eq:L2:exp:conv}
\forall \, t \geq 0 ~\text{and}~ \forall \, \varphi \in L^{\infty}_{V}(\domain), \quad \norm*{e^{t\Lcgzz}\varphi - \int_\domain \varphi \,\dd \pi}_{L^{\infty}_{V}} \leq \const \,e^{-t\lambda} \norm*{ \varphi - \int_\domain \varphi \,\dd \pi}_{L^{\infty}_{V}}.
\end{equation}
\end{theorem}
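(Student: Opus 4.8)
The plan is to deduce geometric ergodicity in the weighted $L^{\infty}_V$ space from a Foster--Lyapunov drift condition together with a minorization (``small set'') condition, via the standard Harris-type theorem for continuous-time Markov processes (in the form given, e.g., by Down--Meyn--Tweedie or Hairer--Mattingly). Concretely, I would establish two ingredients and then invoke the abstract theorem. First, a \emph{drift condition}: there exist constants $\kappa>0$, $K<\infty$ such that $\Lcgzz V \leq -\kappa V + K \indicator_{\mathcal{C}}$ for some compact set $\mathcal{C}\subset\domain$, where $V$ is exactly the function defined in \eqref{eq:def:lya}. Second, a \emph{minorization condition} on $\mathcal{C}$: there exist $t_0>0$ and a probability measure $\nu$ with $\transkernel_{t_0}\{(\pos,\hyp,\vel),\cdot\} \geq \beta\, \nu(\cdot)$ uniformly over $(\pos,\hyp,\vel)\in\mathcal{C}$ for some $\beta>0$. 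Given these, the abstract result yields precisely the contraction \eqref{eq:L2:exp:conv} in $L^{\infty}_V$.

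The drift computation is where \cref{as:pot} enters. Writing $\Lcgzz = \Lczz + \swrate\Lcg$, I would compute $\Lczz V / V$ and $\Lcg V / V$ separately. For the zig-zag part, differentiating $V(\pos,\hyp,\vel) = \exp[a U(\pos,\hyp) + \sum_i \phi\{\vel_i \partial_{\pos_i}U(\pos,\hyp)\}]$ along the flow $\pos \mapsto \pos + t\vel$ and adding the switching terms, the transport term contributes $a\sum_i \vel_i\partial_{\pos_i}U + \sum_i \phi'(\cdot)\,\vel_i \partial_{\pos_i}^2 U \cdot(\text{stuff})$ and each switch of coordinate $i$ changes $\phi\{\vel_i\partial_{\pos_i}U\}$ to $\phi\{-\vel_i\partial_{\pos_i}U\} = -\phi\{\vel_i\partial_{\pos_i}U\}$ with rate $\lambda_i = [\vel_i\partial_{\pos_i}U]^+ + \gamma_i$. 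The choice $\phi(s) = \sign(s)\log(1+\delta|s|)/2$ is engineered so that the dominant contribution from these switching terms, combined with the $a\sum_i\vel_i\partial_{\pos_i}U$ term, produces a net factor behaving like $-c|\nabla_\pos U|$ for large $|\pos|$ — this is the classical zig-zag Lyapunov argument from \cite{bierkens2019ergodicity}, and \cref{as:pot}, \eqref{as:pot:it:1}, controls the Hessian correction term relative to $|\nabla_\pos U|$ and relative to $U$ itself, with $\gammamax$ from \eqref{eq:def:lya} controlling the excess-rate contribution since $\gammamax\delta < a$. For the Gibbs part, using $\hyptranskernel\{(\pos,\hyp),\cdot\} = \target(\cdot\given\pos)$,
\begin{equation*}
\frac{(\Lcg V)(\pos,\hyp,\vel)}{V(\pos,\hyp,\vel)} = \frac{\int_\hypDomain V(\pos,\aatt,\vel)\exp\{-U(\pos,\aatt)\}\,\dd\aatt}{V(\pos,\hyp,\vel)\int_\hypDomain \exp\{-U(\pos,\aatt)\}\,\dd\aatt} - 1,
\end{equation*}
and \cref{as:pot}, \eqref{as:pot:it:2}, i.e.\ \eqref{eq:lya:eqn:1}, is precisely the statement that this quantity is $\leq -r/\int\exp\{-U(\pos,\aatt)\}\dd\aatt < 0$, bounded away from zero outside a compact set — this is why that assumption was stated in exactly that form. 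Adding the two pieces, with appropriate bookkeeping of which term dominates in which asymptotic regime of $(\pos,\hyp)$, gives $\Lcgzz V \leq -\kappa V$ outside a compact $\mathcal{C}$.

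For the minorization, I would argue as in the proof of \cref{thm:ergodic} (Harris recurrence): \cref{as:supp}\eqref{as:supp:it:1} gives a smooth everywhere-positive density for the $\hyp$-refreshment, and \cref{as:supp}\eqref{as:supp:it:2} ensures switches of every velocity component occur at rate $\geq\lambdamin$, so that from any starting point in $\mathcal{C}$, within a fixed time window the process can, with positive probability bounded below uniformly on $\mathcal{C}$, perform at least one $\hyp$-refresh and enough bounces to reach and spread mass over a fixed open set; a standard controllability/support argument then yields $\transkernel_{t_0} \geq \beta\nu$ on $\mathcal{C}$. Finally, I would invoke the continuous-time geometric ergodicity theorem (drift + minorization $\Rightarrow$ $V$-uniform exponential ergodicity of the semigroup), note that $\augtarget$-almost-everywhere boundedness of $\varphi/V$ transfers to the integral $\int\varphi\,\dd\pi$ since $V\in L^1(\pi)$ (which follows from $a<1$), and read off \eqref{eq:L2:exp:conv} with some $\const$ and $\lambda$. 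The main obstacle I anticipate is the drift computation for $\Lczz V$: verifying that the $\phi$-based corrections and the Hessian term genuinely combine to a strictly negative leading order uniformly in $\hyp$ requires carefully tracking signs and using both bounds in \cref{as:pot}\eqref{as:pot:it:1} — and reconciling the two drift estimates (zig-zag part vs.\ Gibbs part) into a single coercive bound, since each is only negative ``in its own direction'' of the phase space, will need a careful case split on the relative sizes of $|\pos|$ and $|\hyp|$.
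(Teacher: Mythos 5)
Your proposal matches the paper's proof essentially step for step: the paper likewise proves a minorization condition on compact sets (Lemma \ref{lem:minorization}, using the same controllability argument from Lemma \ref{lem:cont:fix}) and an infinitesimal Lyapunov/drift condition for the same $V$ (Lemma \ref{lem:lyapunov}), splitting $\Lcgzz V$ into the $\Lczz$ piece (handled via the Bierkens et al. zig-zag computation under Assumption \ref{as:pot}\eqref{as:pot:it:1}) and the $\Lcg$ piece (where Assumption \ref{as:pot}\eqref{as:pot:it:2}, i.e.\ \eqref{eq:lya:eqn:1}, gives exactly the bound you identify), and then invokes Theorem 3.4 of Hairer--Mattingly. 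The only cosmetic difference is that the paper first discretizes to the skeleton chain $P=\transkernel_{\tt}$ via a Gr\"onwall step before applying the abstract theorem and then transfers back to continuous time, whereas you apply a continuous-time Harris-type theorem directly; and the ``case split on $|\pos|$ vs.\ $|\hyp|$'' you anticipate is folded in the paper into the hypothesis that \eqref{eq:lya:eqn:1} holds whenever $|(\pos,\aa)|>c$, so both drift inequalities are already stated outside a single compact set and simply add.
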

We prove \cref{thm:ergodic:2} using Lyapunov techniques as presented in, for example, \cite{meyn2012markov}. More specifically, we show the result as a consequence of Theorem 3.4 of \cite{hairer2011yet} by demonstrating that (i) $V$ satisfies a {\em Lyapunov condition} of the form
\begin{equation}\label{eq:lya:1}
\Lcgzz V  \leq - a\, V + b\,\indicator_{C},
\end{equation}
where  $a>0,b\in \RR$ are constants and $C$ is a compact set, and (ii) the process satisfies a {\em minorization condition} of the form
\begin{equation}\label{eq:minorization}
\forall (\pos,\hyp,\vel) \in C,
\quad 
\PP_{(\pos,\hyp,\vel)} \left [ \{ \ppos(t), \pvel(t),\phyp(t)\} \in \cdot \right ]
\geq 
c \,\lebegue( \cdot \cap C),
\end{equation}
where $\lebegue$ denotes the Lebesgue measure and $C\subset\domain$ is the same compact subset as in the Lyapunov condition \eqref{eq:lya:1}.

Let $L^{\infty}_{V,0}(\domain)$ denote the subspace of $L^{\infty}_{V}(\domain)$ which is comprised of test functions with vanishing expectation, that is, $L^{\infty}_{V,0}(\domain)= \{ \varphi \in L^{\infty}_{V} : \EE_{(\pos,\hyp,\vel)\sim\augtarget}\{\varphi(\pos,\hyp,\vel)\} = 0\}$. 
\cref{thm:ergodic:2} implies (\citealp[Proposition 2.1]{lelievre2016partial}) directly the following corollary.

\begin{corollary}
\label{col:invert}
Under the same conditions as of \cref{thm:ergodic:2}, the operator $\Lcgzz$ considered on $L^{\infty}_{V,0}(\domain)$ is invertible, and
\begin{equation*}
\Lcgzz^{-1} = - \int_{0}^{\infty} e^{t\Lcgzz} \dd t
\quad \text{and} \quad 
\norm*{\Lcgzz^{-1}}_{\B(L^{\infty}_{V,0})} \leq \frac{ \const}{\lambda},
 \end{equation*}
where $\const$ and $\lambda$ are the same constants as in \cref{thm:ergodic:2}, $\|\G\|_{\B(L^{\infty}_{V,0})} = \sup_{f\in L^{\infty}_{V,0}}  (\| Gf \|_{L^{\infty}_{V,0}}) / (\|f\|_{L^{\infty}_{V,0}})$ denotes the operator norm induced by $\|\cdot\|_{L^{\infty}_{V,0}}$, and $\B(L^{\infty}_{V,0})$ denotes the space of bounded linear operators on $L^{\infty}_{V,0}$ on which the operator norm is well-defined.
\end{corollary}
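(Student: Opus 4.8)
Our target is \cref{col:invert}, which is the classical statement that a semigroup decaying exponentially on a mean‑zero subspace has an invertible generator there, with the inverse realised as the time integral of the semigroup. This is exactly the content of Proposition~2.1 of \cite{lelievre2016partial} once its hypotheses are matched against \cref{thm:ergodic:2}; the plan is therefore to verify those hypotheses and then record the resulting identities.

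First I would note that $L^{\infty}_{V,0}(\domain)$ is invariant under the evolution operators $e^{t\Lcgzz}$. Indeed, by \cref{prop:inv} the measure $\augtarget$ is invariant, so $\transkernel_{t}\augtarget = \augtarget$, and the adjoint relation $\int_{\domain}(e^{t\Lcgzz}\varphi)\,\dd\augtarget = \int_{\domain}\varphi\,\dd(\transkernel_{t}\augtarget)$ gives $\int_{\domain}(e^{t\Lcgzz}\varphi)\,\dd\augtarget = \int_{\domain}\varphi\,\dd\augtarget$; in particular $\varphi\in L^{\infty}_{V,0}$ implies $e^{t\Lcgzz}\varphi\in L^{\infty}_{V,0}$. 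Restricted to this subspace, the estimate of \cref{thm:ergodic:2} reads $\norm{e^{t\Lcgzz}\varphi}_{L^{\infty}_{V}} \leq \const\,e^{-t\lambda}\norm{\varphi}_{L^{\infty}_{V}}$ for all $t\geq 0$, so $(e^{t\Lcgzz})_{t\geq 0}$ is a quasi‑contraction semigroup on $L^{\infty}_{V,0}(\domain)$ with uniform exponential decay.

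Next I would introduce the candidate inverse $R\varphi = -\int_{0}^{\infty}e^{t\Lcgzz}\varphi\,\dd t$. The decay bound yields $\int_{0}^{\infty}\norm{e^{t\Lcgzz}\varphi}_{L^{\infty}_{V}}\,\dd t \leq (\const/\lambda)\,\norm{\varphi}_{L^{\infty}_{V}}$, so the integral converges absolutely (as a Bochner integral) in $L^{\infty}_{V,0}$, $R$ is a bounded operator on $L^{\infty}_{V,0}$, and $\norm{R}_{\B(L^{\infty}_{V,0})} \leq \const/\lambda$, which is the asserted norm estimate. To identify $R$ with $\Lcgzz^{-1}$, I would invoke the fundamental theorem of calculus for the semigroup, $\tfrac{\dd}{\dd t}e^{t\Lcgzz}\varphi = \Lcgzz e^{t\Lcgzz}\varphi = e^{t\Lcgzz}\Lcgzz\varphi$: for $\varphi$ in the domain of $\Lcgzz$, $\Lcgzz R\varphi = -\int_{0}^{\infty}\tfrac{\dd}{\dd t}e^{t\Lcgzz}\varphi\,\dd t = \varphi - \lim_{t\to\infty}e^{t\Lcgzz}\varphi = \varphi$, the limit vanishing by the exponential bound, and $R\Lcgzz\varphi = \varphi$ follows the same way. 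Hence $R$ maps $L^{\infty}_{V,0}$ into the domain of $\Lcgzz$ intersected with $L^{\infty}_{V,0}$, and $\Lcgzz$ is a bijection of that subspace onto $L^{\infty}_{V,0}(\domain)$ with inverse $R = -\int_{0}^{\infty}e^{t\Lcgzz}\,\dd t$.

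The genuinely delicate point — and the reason it is cleanest to cite \cite{lelievre2016partial} rather than reprove everything from scratch — is the functional‑analytic bookkeeping in the weighted $L^{\infty}$ setting: the semigroup $(e^{t\Lcgzz})_{t\geq 0}$ is not strongly continuous on $L^{\infty}_{V}(\domain)$, so one must use the appropriate pointwise / bi‑continuous notion of generator, check that $\Lcgzz$ as used here agrees with that generator on a suitable core, and justify interchanging $\Lcgzz$ with the time integral. Proposition~2.1 of \cite{lelievre2016partial} is stated precisely to package these technicalities, so once \cref{thm:ergodic:2} is in hand the corollary is immediate; the only work is to confirm, as above, that $e^{t\Lcgzz}$ leaves $L^{\infty}_{V,0}$ invariant and that the exponential bound holds on that subspace.
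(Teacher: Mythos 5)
Your proposal is correct and follows essentially the same route as the paper, which simply cites Proposition~2.1 of \cite{lelievre2016partial} to deduce the corollary directly from \cref{thm:ergodic:2}. You unpack the content of that citation (invariance of $L^{\infty}_{V,0}$, Bochner-integrability from the exponential decay, the semigroup fundamental theorem of calculus) and correctly flag that the delicate strong-continuity issues in the weighted $L^{\infty}$ setting are exactly what the cited proposition is designed to handle, so the end result is the same.
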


By \cite{bhattacharya1982functional} and the boundedness of the inverse of the generator, a central limit theorem is obtained as follows.

\begin{corollary}[Central limit theorem for GZZ]\label{thm:clt}
Consider the setup of \cref{thm:ergodic:2} and let $\varphi \in L^{\infty}_{V}(\domain)$. Then there exists $\sigma^{2}_{\varphi}>0$ so that 
\begin{equation*}
\sqrt{t} \left [ \widehat{\varphi}_{t}  -  \EE_{(\pos,\hyp,\vel)\sim \augtarget}\{\varphi(\pos,\hyp,\vel)\} \right ] \xrightarrow[t \to \infty]{\mathrm{law}}  \mathcal{N}(0,\sigma^{2}_{\varphi}).
\end{equation*}
\end{corollary}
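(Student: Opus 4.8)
The plan is to follow the standard Poisson-equation/martingale route for central limit theorems for additive functionals of ergodic Markov processes; this is the argument referenced by the text preceding the statement, via \cite{bhattacharya1982functional}. As a first step, reduce to the centred case: replacing $\varphi$ by $\varphi - \int_\domain \varphi\,\dd\pi$ we may assume $\varphi \in L^\infty_{V,0}(\domain)$. By \cref{col:invert} the generator $\Lcgzz$ is invertible on $L^\infty_{V,0}(\domain)$, so
\[
\psi \;:=\; -\Lcgzz^{-1}\varphi \;=\; \int_0^\infty e^{t\Lcgzz}\varphi\,\dd t \;\in\; L^\infty_{V,0}(\domain)
\]
is well defined and solves the Poisson equation $\Lcgzz\psi = -\varphi$. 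All the integrability needed below is supplied by the Lyapunov drift inequality $\Lcgzz V \leq -aV + b\,\indicator_C$ established in the proof of \cref{thm:ergodic:2}: it yields the trajectorial moment bound $\sup_{t\geq 0}\EE_{(\pos,\hyp,\vel)}[V\{\ppos(t),\pvel(t),\phyp(t)\}] < \infty$ for every initial state, and, together with the explicit forms of $V$ and $\augtarget \propto e^{-U}$, the finiteness of the relevant $\augtarget$-integrals of $V$ and $V^2$; since $|\psi| \leq \|\psi\|_{L^\infty_V}\,V$ and $|\varphi| \leq \|\varphi\|_{L^\infty_V}\,V$, this controls $\psi$ and $\varphi$ in the $L^1(\augtarget)$ and $L^2(\augtarget)$ senses used later.

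Next I would set up the martingale decomposition. Writing $Z_t := \{\ppos(t),\pvel(t),\phyp(t)\}$ and using that $\psi$ lies in the domain of the extended generator of the GZZ PDMP, Dynkin's formula shows that
\[
M_t \;:=\; \psi(Z_t) - \psi(Z_0) - \int_0^t (\Lcgzz\psi)(Z_s)\,\dd s \;=\; \psi(Z_t) - \psi(Z_0) + \int_0^t \varphi(Z_s)\,\dd s
\]
is a square-integrable martingale. Because the deterministic flow between jumps is a first-order transport and contributes no martingale part, $M$ is a pure-jump martingale with predictable quadratic variation $\langle M\rangle_t = \int_0^t \Gamma(Z_s)\,\dd s$, where $\Gamma := \Lcgzz(\psi^2) - 2\psi\,\Lcgzz\psi \geq 0$ is the carré du champ of $\Lcgzz$ --- explicitly the sum of the squared jump increments of $\psi$ weighted by the bounce rates $\lambda_i$ and by the Gibbs jump rate $\swrate$. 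By invariance of $\augtarget$ one gets $\int_\domain \Gamma\,\dd\augtarget = -2\int_\domain \psi\,\Lcgzz\psi\,\dd\augtarget = 2\int_\domain \psi\varphi\,\dd\augtarget =: \sigma^2_\varphi$, which is finite by the integrability of $\psi\varphi$ noted above and nonnegative since $\Gamma \geq 0$.

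The third step is to pass to the limit through the decomposition $\sqrt{t}\,\widehat{\varphi}_t = t^{-1/2}M_t + t^{-1/2}\{\psi(Z_0) - \psi(Z_t)\}$. The remainder converges to $0$ in probability, since $\psi(Z_0)$ is fixed and $\EE_{(\pos,\hyp,\vel)}|\psi(Z_t)| \leq \|\psi\|_{L^\infty_V}\sup_{t\geq 0}\EE_{(\pos,\hyp,\vel)}[V(Z_t)]$ is bounded uniformly in $t$. For $t^{-1/2}M_t$, path-wise ergodicity (\cref{thm:ergodic}, applied to $\Gamma \in L^1(\augtarget)$) gives $t^{-1}\langle M\rangle_t \to \sigma^2_\varphi$ almost surely, and the jumps of $M$ satisfy a conditional Lindeberg condition because they are bounded in the $L^\infty_V$-sense with a controlled jump rate; the martingale central limit theorem (equivalently, the theorem of \cite{bhattacharya1982functional}) then yields $t^{-1/2}M_t \xrightarrow[t\to\infty]{\mathrm{law}} \mathcal{N}(0,\sigma^2_\varphi)$, and Slutsky's lemma gives the stated convergence. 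Finally $\sigma^2_\varphi = \int_\domain \Gamma\,\dd\augtarget \geq 0$, with strict positivity except in the degenerate situation in which $\Gamma \equiv 0$ $\augtarget$-almost everywhere; by \cref{as:supp} ($\lambda_i \geq \lambdamin > 0$ and the density of $\hyptranskernel$ strictly positive) this forces $\psi$, hence the centred $\varphi = -\Lcgzz\psi$, to be of the coboundary form $\sum_{i}\vel_i\partial_{\pos_i}\psi(\pos)$ with $\psi$ depending on $\pos$ only, a case one excludes (there the limit is simply the point mass at $0$).

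I expect the main obstacle to be making the second step fully rigorous for this particular PDMP: verifying that $\psi = -\Lcgzz^{-1}\varphi$ belongs to the domain of the extended generator, that the resulting $M_t$ is a genuine square-integrable martingale with the claimed predictable quadratic variation, and that the integrability hypotheses of the martingale CLT (equivalently of \cite{bhattacharya1982functional}) are met. All of this is to be dispatched using the Lyapunov bound $\Lcgzz V \leq -aV + b\,\indicator_C$ from the proof of \cref{thm:ergodic:2} --- which is therefore the essential input --- rather than any new computation.
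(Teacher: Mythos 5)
Your proposal is correct and takes essentially the same route as the paper. The paper's entire proof is the one-line citation to \cite{bhattacharya1982functional} together with \cref{col:invert} (boundedness of $\Lcgzz^{-1}$), and what you write out --- solving the Poisson equation $\Lcgzz\psi=-\varphi$ via $\psi=-\Lcgzz^{-1}\varphi$, the Dynkin/martingale decomposition, using the Lyapunov drift for moment control, ergodicity for $t^{-1}\langle M\rangle_t \to \sigma^2_\varphi$, and the martingale CLT plus Slutsky --- is precisely the argument packaged in that reference. The only place you are slightly looser than the statement demands is the strict positivity $\sigma^2_\varphi>0$: you correctly reduce the degenerate case $\sigma^2_\varphi=0$ to $\psi$ being independent of $(\hyp,\vel)$ with $\varphi=-\sum_i\vel_i\partial_{\pos_i}\psi(\pos)$ (a coboundary), but you then say this case is "excluded" without an argument; the paper itself asserts strict positivity without proof, so this is consistent with, and not a regression from, the paper's level of rigour.
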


\section{Proofs}\label{sec:app:proofs}
\subsection{Proof of Proposition \ref{prop:inv}}

\begin{proof}
It is sufficient to show that $\sum_{\vel \in \{-1,1\}^{\posdim}} \int_{\posDomain} \int_{\hypDomain}\left ( \A f\right)(\pos,\alpha,\vel) \pi(\dd \alpha, \dd \pos) = 0$
for $\A \in \{ \Lczz, \Lcg \}$. 
For any value of $\hyp$, it can be shown that\footnote{see \citealp[Theorem 2.2]{bierkens2019zig} for detailed calculations.} 
\begin{equation*}
\sum_{\vel \in \{-1,1\}}^{\posdim}\int_{\posDomain} \left (  \Lczz f \right )(\pos,\alpha,\vel) \, \pi(\dd \pos \mid \alpha) 
=
0,
\end{equation*}
and thus in particular 
\begin{align*}
& \quad 
\sum_{\vel \in \velDomain}\int_{\posDomain} \int_{\hypDomain} \left (  \Lczz f \right )(\pos,\alpha,\vel) \pi(\dd \pos\, \dd \alpha)
\\
& = 
\int_{\hypDomain}\sum_{\vel \in \velDomain}\int_{\posDomain} \left (  \Lczz f \right )(\pos,\alpha,\vel) \pi(\dd \pos \mid \alpha) \pi(\dd \alpha)
=
0,
\end{align*}
which proves that $\Lczz$ preserves the target measure. Similarly, for any value of $\pos\in \posDomain,\vel \in \{-1,1\}^{\posdim}$,
\begin{align*}
& \quad 
\int_{\hypDomain}\left ( \Lcg f\right)(\pos,\alpha,\vel) \pi(\dd \alpha \mid \pos)  
\\
& =
\int_{\hypDomain}\int_{\hypDomain}\left \{ f(\pos,\hyp^{\prime},\vel) -  f(\pos,\hyp,\vel) \right \}   \hyptranskernel \{ (\pos,\hyp),  \dd \hyp^{\prime}\} \pi(\dd \alpha \mid \pos)
\\
& = 
\int_{\hypDomain}f(\pos,\alpha^{\prime},\vel) \left [ \int_{\hypDomain} \hyptranskernel \{ (\pos,\hyp), \dd \hyp^{\prime}\} \pi(\dd \alpha \mid \pos)  \right ]
\\
& \quad  -
\int_{\hypDomain} f(\pos,\alpha,\vel) \left [ \int_{\hypDomain}\hyptranskernel \{ (\pos,\hyp),  \dd \hyp^{\prime}\} \right ] \pi(\dd \alpha \mid \pos) 
\\
& = 
\int_{\hypDomain} f(\pos,\alpha^{\prime},\vel)\pi(\dd \alpha^{\prime} \mid \pos) 
-
\int_{\hypDomain} f(\pos,\alpha,\vel) \pi(\dd \alpha \mid \pos) 
= 0,
\end{align*}
where the second-to-last equality follows from the fact that $\hyptranskernel$ is a transition kernel which preserves the conditional measure $\pi(\dd \alpha \mid \pos)$. The proof is concluded as
\begin{align*}
& \quad 
\sum_{\vel \in \velDomain}\int_{\posDomain} \int_{\hypDomain}\left ( \Lcg f\right)(\pos,\alpha,\vel) \pi(\dd \alpha\, \dd \pos)
\\
& = 
\sum_{\vel \in \velDomain} \int_{\posDomain} \int_{\hypDomain}\left ( \Lcg f\right)(\pos,\alpha,\vel) \pi(\dd \alpha \mid \pos) \pi(\dd \pos)
= 
0. 
\end{align*}
\end{proof}

\subsection{Additional notations}

For convenience purposes, we extend the definition of the flip operator $F_{i}$ to index values $i \in \{0,\dots,\posdim+1\}$ as follows: if $i\in \{1,\dots,\posdim\}$, we let $F_{i}$ to be defined as in \cref{sec.ZZ.process}, and if $i \in \{0,\posdim+1\}$, we define $F_{i}$ to simply be the identity map. Moreover, for a $k$-tuple $(i_{1},\dots,i_{k})$, we let $F_{(i_{1},\dots,i_{k})} = F_{i_{k}} \circ \cdots \circ F_{i_{1}}$ denote the concatenation of the corresponding flip operators. 

We refer to a tuple $\ubf = (\tbf,\ibf)$, where $\tbf = (t_{1},\dots,t_{m+1}) \in (0,\infty)^{m+1}$ with $0<t_{1}<\dots<t_{m+1}$ and $\ibf = (i_{1},\dots,i_{m}) \in \{1,\dots,\posdim\}^{m}$ for some $m\in \NN$, as a {\em control sequence}.  The control sequence defines a piecewise linear trajectory on the time interval $[0,t_{m+1}]$ as follows: 
\begin{align*}
\vel(t) 
& =
F_{(i_{1},\dots,i_{k})}\vel ~~ \text{if} ~ t_{k}\leq t < t_{k+1} ~~ (k= 0,\dots,m),
\\
\pos(t) 
& =
\pos + \int_{0}^{t}\pos(s) \dd s.
\end{align*}
We use $\Phi_{\ubf}(\pos,\vel)= \{\pos(t_{m+1}),\vel(t_{m+1})\}$ as a shorthand notation for the final position of the trajectory. In the PDMP literature, a control sequence $(\tbf,\ibf)$ is said to be {\em admissible} if the  rates $\lambda_{i_{k}}$ in a vicinity of each point of the corresponding trajectory at times $t_{k} ~ (k=1,\dots m)$ are positive. Note that in the setup considered in this article, we do not require a generalization of the concept of admissibility of a control sequence, since the rates $\lambda_{i}(\pos,\hyp) ~ (i=1,\dots,\posdim)$ are by \cref{as:supp}\eqref{as:supp:it:2} always positive irrespective of the value of the hyperparameter. In particular, since the support of the marginal of $\target$ in $\pos$  is a connected set,  it follows that for any pair of points  $(\pos,\hyp,\vel), (\tpos,\thyp,\tvel) \in \domain$, there exists an admissible control sequence $\ubf$ such that $\Phi_{\ubf}(\pos,\vel) = (\tpos,\tvel)$ irrespective of the values of $\hyp$ and $\thyp$.

\subsection{Poisson thinning procedure}\label{sec:pt}

In the proofs of the following lemmata, we repeatedly use a Poisson thinning procedure for the simulation of a restricted version of the GZZ process up to a prescribed finite time $\tmax>0$.
The procedure is akin to \cref{alg:Gzigzag}. However, we constrain hyperparameter values to a compact set $\hypcompact \subset \hypDomain$, so that for prescribed $\pos\in \posDomain$ and any realization of the GZZ process with $\ppos(0)=\pos$ and $[\phyp(s)]_{s \leq \tmax} \subseteq \hypcompact$,
\begin{equation}\label{eq:upper:bound}
\lambdamax = \max \left \{ \lambda_{i}(\tpos, \vel, \hyp) :  \tpos \in \ball_{\tmax}(\pos), ~(\vel \in \velDomain; ~\thyp \in\hypcompact; ~ i=1,\dots,d+1) \right \} 
\end{equation}
is an upper bound of the rate function values $\lambda_{i}\{\ppos(t),\phyp(t),\pvel(t)\}$ up to time  $\tmax$. 
Here and in the sequel, we denote the constant function $(\pos, \hyp, \vel) \mapsto \swrate$ by $\lambda_{d+1}$. Constraining the hyperparameter values as described above allows us to apply a Poisson thinning procedure as follows. Arrival times $\ta^{k} ~ (k=1,2,\dots)$ are sampled from a Poisson process with constant rate $(\posdim+1)\overline{\lambda}$. For each arrival time, a component index $I_{k}$ is sampled uniformly from the set $\{1,\dots,d+1\}$ and a uniform random variable is simulated as $U_{k}\sim {\rm Uniform}([0,1])$. Skeleton points are generated sequentially as $\pppos^{k+1} = \pppos^{k} + (\ta^{k+1}-\ta^{k})\ppvel^{k}$, and by applying 
an accept/reject step as follows.
\begin{itemize}

\item 
If $U_{k} \leq {\lambda_{I_{k}}(\pppos^{k},\ppvel^{k},\pphyp^{k})}/{\lambdamax}$, then either the $I_{k}$th velocity component is flipped, that is, $\ppvel^{k+1} = F_{I_{k}}(\ppvel^{k})$ if $1\leq I_{k} \leq d$, or, if $I_{k}=d+1$, the hyperparameter block is updated as  $\pphyp^{k+1} \sim \tilde{\hyptranskernel}_{\hypcompact}\{ (\pppos^{k+1},\pphyp^{k}), \cdot \}$,
where $\tilde{\hyptranskernel}_{\hypcompact} \{ (\pppos^{k+1},\pphyp^{k}), \dd \hyp^{\prime} \} = Z^{-1} \hyptransdens \{ (\pppos^{k+1},\pphyp^{k}), \hyp^{\prime} \}  \indicator_{\hypcompact}(\hyp^{\prime})\dd \hyp^{\prime}$.

\item 
If $U_{k} > {\lambda_{I_{k}}(\pppos^{k},\ppvel^{k},\pphyp^{k})}/{\lambdamax}$, then $\ppvel^{k+1}=\ppvel^{k},\pphyp^{k+1}=\pphyp^{k}$, and $I_{k}$ is set to zero indicating a rejection event.

\end{itemize}
By interpolating the generated skeleton points as specified in \cref{eq:traj:gzz}, the obtained process \\ $[\pppos(t),\pphyp(t),\ppvel(t)]_{t \in [0,\tmax] }$ 
is identical in law to the GZZ process on $[0,\tmax]$ which targets the  probability distribution 
\begin{equation*}
\augtarget_{\hypcompact}(\dd \pos\,\dd \hyp, \vel) = {Z_{\hypcompact}^{-1}}{\rm exp}\{-U(\pos,\hyp)\}  \indicator_{\hypcompact}(\hyp)\mu(\vel) \, \dd \pos \, \dd \hyp
\end{equation*}
where $Z_{\hypcompact}$ is a suitable normalization constant.

\subsection{Proof of Theorem \ref{thm:ergodic}} \label{sec:abs:cont}

Recall that $\swtime^{k} ~ (k \in \NN)$ denote the random times at which either components of the velocity are flipped or the hyperparameters are updated. Let $N$ denote the random integer which is such that $\swtime^{N}$ is the first time when (i) the hyperparameters have been updated, and (ii) $(\posdim-1)$ distinct components of $\vel$ have been flipped. If this does not occur, we set $N=\infty$. Moreover, we let $\tnp=\swtime^{N+1}$ provided that $N<\infty$, and $\tnp= \infty$ otherwise, so that $\tnp$ can be understood as the first event time after both the hyperparameter block has been updated and at least $(\posdim-1)$ distinct components of the velocity have been switched. The following lemma states that the law of $(\ppos(\tnp),\phyp(\tnp))$ is absolutely continuous with respect to the Lebesgue measure. 

\begin{lemma}\label{lem:abs:cont}
Let \cref{as:supp} be satisfied. Then $\PP_{(\pos,\hyp,\vel)}\{\tnp<\infty, (\ppos(\tnp),\phyp(\tnp)) \in B\} = 0$ for any $(\pos,\hyp,\vel) \in \domain$ and any measurable set $B \subset \posDomain \times \hypDomain$ with Lebesgue measure zero.
\end{lemma}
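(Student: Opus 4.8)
The claim is that the joint law of $(\ppos(\tnp),\phyp(\tnp))$ has no atoms on Lebesgue-null sets, where $\tnp$ is the first event time after \emph{both} the hyperparameter block has been refreshed and at least $(\posdim-1)$ distinct velocity components have flipped. The plan is to use the Poisson thinning representation of \cref{sec:pt} to write the law of the trajectory up to $\tnp$ as a countable sum (over the number of accepted/rejected Poisson events and over which components flip in which order) of explicit integrals, and then show each such integral contributes zero mass to any Lebesgue-null $B$. I would first reduce to a compact-hyperparameter version: since $\{\tnp<\infty\}$ and the set $B$ are fixed, and $[\phyp(s)]_{s\le \tmax}$ almost surely stays in some compact $\hypcompact$ on the event $\{\tnp \le \tmax\}$, it suffices to prove the statement for the restricted process $\augtarget_\hypcompact$ on each such compact set and each finite horizon $\tmax$, then let $\tmax\to\infty$ and exhaust $\hypDomain$ by compacts.

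\textbf{Main argument.}
Condition on the sequence of Poisson arrival times $(\ta^k)$, the component indices $(I_k)$, and the uniforms $(U_k)$ from the thinning construction. On the event $\{\tnp<\infty\}$ the number $N$ of events is finite and the trajectory of $\ppos$ up to $\swtime^N$ is a deterministic piecewise-linear function of $(\pos,\vel)$ and the accepted event times, while $\phyp(\tnp)=\phyp(\swtime^N)$ was drawn, at the (unique, by definition of $N$) hyperparameter-refresh event, from the kernel $\tilde{\hyptranskernel}_\hypcompact\{(\pppos^{j+1},\phyp^{j}),\cdot\}$, which by \cref{as:supp}\eqref{as:supp:it:1} has a smooth strictly positive density with respect to Lebesgue measure on $\hypDomain$. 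Hence, conditionally on everything else, $\phyp(\tnp)$ has an absolutely continuous distribution on $\hypDomain$, so for any measurable $B\subset\posDomain\times\hypDomain$ with Lebesgue measure zero, by Fubini the slice $B_{\pos'} = \{\hyp': (\pos',\hyp')\in B\}$ is Lebesgue-null for Lebesgue-a.e.\ $\pos'$, and therefore the conditional probability that $(\ppos(\tnp),\phyp(\tnp))\in B$ is zero \emph{provided} the conditional law of $\ppos(\tnp)$ — i.e.\ of $\ppos(\swtime^N)+\pvel(\swtime^N)(\tnp-\swtime^N)$ — is itself absolutely continuous, or at least charges no set on which the $\phyp$-density could concentrate. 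Here is the subtlety: $\ppos(\tnp)$ need \emph{not} be absolutely continuous (the flip times are, but the position is a constrained piecewise-linear image). The clean fix is to integrate in the right order: condition on $(\pos,\vel)$, the Poisson skeleton, and the accepted flip/refresh events; then $\ppos(\tnp)$ is \emph{determined}, call it $\pos^\star$; now $\phyp(\tnp)$ is absolutely continuous, so $\PP(\,(\pos^\star,\phyp(\tnp))\in B \mid \cdots) = \int \indicator_B(\pos^\star,\hyp')\, \hyptransdens_\hypcompact(\cdots,\hyp')\,\dd\hyp'$. Taking expectation over the skeleton and the exponential holding times (which are absolutely continuous) and using Fubini, this equals $\int (\text{density})\,\indicator_B\,\dd\hyp'\,\dd(\text{times})$, and since $B$ is Lebesgue-null in $\posDomain\times\hypDomain$, a single application of Fubini (the map from the holding-time vector to $\pos^\star$ being Lipschitz, hence pushing Lebesgue-null sets of times to... — no, wrong direction).

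\textbf{The real obstacle and how to handle it.}
The honest difficulty is exactly that the $\ppos$-coordinate of $(\ppos(\tnp),\phyp(\tnp))$ may be singular, so one cannot naively invoke absolute continuity of the pair. The resolution exploits that $B$ is null in the \emph{product} space: fix the discrete data (number of Poisson points $n$, indices, acceptance pattern) — there are countably many choices — and write the contribution as an integral over $(\ta^1,\dots,\ta^n)\in\RR^n_+$ (absolutely continuous, being a Poisson skeleton) and over the single refreshed hyperparameter value $\hyp'$. On this event $\ppos(\tnp) = g(\pos,\vel,\ta^1,\dots,\ta^n)$ for an explicit affine-in-the-times map $g$, and $\tnp = \ta^{N+1}$. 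The probability of landing in $B$ is then
\begin{equation*}
\sum_{\text{discrete}} \int_{\RR^n_+} \int_{\hypDomain} \indicator_B\bigl(g(\pos,\vel,\ta^{1:n}),\hyp'\bigr)\, \hyptransdens_\hypcompact(\cdot,\hyp')\,\dd\hyp'\; \kappa(\dd \ta^{1:n}),
\end{equation*}
where $\kappa$ is absolutely continuous. For fixed $\ta^{1:n}$ the inner integral is $\int_{\hypDomain}\indicator_{B_{g(\cdots)}}(\hyp')\,\hyptransdens_\hypcompact\,\dd\hyp'$, which vanishes whenever the slice $B_{g(\cdots)}$ is Lebesgue-null in $\hypDomain$; and since $B$ is null in the product, Fubini on $\posDomain\times\hypDomain$ gives that $B_{\pos'}$ is null for a.e.\ $\pos'$, but we need it for $\pos'=g(\cdots)$, i.e.\ along the image of $g$ — so we additionally need that $g(\pos,\vel,\cdot)$ pushes $\kappa$ forward to a measure absolutely continuous w.r.t.\ Lebesgue on $\posDomain$, OR we sidestep this by noting $g$ together with enough auxiliary time-coordinates is a diffeomorphism onto its range. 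In fact, since by definition of $N$ at least $\posdim-1$ \emph{distinct} components flip, the accumulated displacement map, viewed as a function of the (at least $\posdim-1$ distinct, plus extra repeated) flip times together with the final free segment length, has full rank $\posdim$ in the relevant time-variables; this is the standard "PDMP controllability" fact (cf.\ the admissibility discussion in the paper), and it shows the push-forward of $\kappa$ under $g(\pos,\vel,\cdot)$ is absolutely continuous on $\posDomain$. With that, the inner $\hyp'$-integral is zero for $\kappa$-a.e.\ time-vector, every term in the countable sum vanishes, and the lemma follows. I expect verifying this full-rank / change-of-variables step — i.e.\ that flipping $(\posdim-1)$ distinct components makes the terminal position absolutely continuous in the event times — to be the technical heart of the argument; everything else is bookkeeping over the thinning representation plus two applications of Fubini and the strict positivity and smoothness of $\hyptransdens$ from \cref{as:supp}\eqref{as:supp:it:1}.
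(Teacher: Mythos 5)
Your proposal follows the paper's proof almost step for step: restrict to a compact hyperparameter set and finite horizon, apply the Poisson thinning representation, decompose over the (countably many) discrete combinatorics of event indices, observe that the single hyperparameter refresh contributes an absolutely continuous factor via \cref{as:supp}\eqref{as:supp:it:1}, and establish absolute continuity in the position coordinate via the full-rank/spanning fact that once $(\posdim-1)$ distinct velocity components have flipped, the directions $\vel, F_{i_1}\vel,\ldots$ span $\RR^\posdim$ and the map from holding times to terminal position pushes Lebesgue-absolutely-continuous measures to Lebesgue-absolutely-continuous measures. The paper's write-up is tidier in the final step (it dominates the refreshed hyperparameter's density by a constant multiple of a uniform density on a ball and then shows the joint law of (position, uniform) is absolutely continuous on the product space), whereas you reach the same conclusion through an iterated Fubini argument, but the underlying mathematics is the same.
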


\begin{proof}
Let $B$ be a measurable set of Lebesgue measure zero in $\posDomain \times \hypDomain$, and $\tmax\geq 0$ be arbitrary. For a prescribed $\hyp\in \hypDomain$ and $\delta>0$, let $\ball_{\delta}(\hyp) = \{ \thyp \in \hypDomain : \abs{\thyp-\hyp} \leq \delta\}$ be the closed ball of radius $\delta$ centered at $\hyp$, and $\event_{\tmax} = \{ [\phyp(s)]_{s \leq \tmax} \subseteq \ball_{\tmax}(\hyp) \}$ denote the event that up to time $\tmax$, the hyperparameter component of the GZZ process remains within the ball $\ball_{\tmax}(\hyp)$. In order to prove the lemma, it suffices to show that 
\begin{equation}\label{eq:zero:L}
\PP_{(\pos,\hyp,\vel)} \left [\left \{ \tnp<\tmax \right \} \cap \event_{\tmax}  \cap \left \{ [\ppos(\tnp),\phyp(\tnp)] \in B \right \} \right ] 
=
0,
\end{equation}
as this implies the statement of the lemma in the limit $\tmax\to \infty$ by monotone convergence. 

When constrained to realizations in $\event_{\tmax}$, the GZZ process is identical in law to the process $(\pppos(t),\pphyp(t),\ppvel(t))_{t \in [0,\tmax] }$ generated by the thinning procedure described in \cref{sec:pt}. Using the notation introduced there, we can write $\pppos(\tau)$ as
\begin{equation*}
\pppos(\tau) 
=
\pos + \tau^{1}\vel + \tau^{2}F_{I_{1}}\vel + \dots + \tau^{M+1}F_{I_{1},\dots,I_{M}}\vel,
\end{equation*}
where $M\geq N$ is a random integer, and $\tau^{k} = (\ta^{k} -\ta^{k-1}) ~ (k\in \NN \cup \{0\})$ with $\ta^{0}=0$ denoting the  inter-arrival (waiting times) of the Poisson process.
Let $R_{m}$ denote the set of indices $(i_{1},\dots,i_{m}) \in \{0,\dots,d+1\}^{m}$ which are such that $\posdim$ different indices appear in $(i_{1},\dots,i_{m}) $ and at least one of them is $(\posdim+1)$. Moreover, let
\begin{equation*}
\const
= 
\max_{ \pos^{\prime} \in  \ball_{\tmax}(\pos),\, \tthyp \in \ball_{\tmax}(\hyp), \,  \hyp^{\prime} \in \ball_{\tmax}(\hyp) \, }  
Z^{-1} \hyptransdens \left [ (\pos^{\prime},\tthyp), \hyp^{\prime} \right ]  
\indicator_{\hypcompact}(\hyp^{\prime})
\, 
\lebegue\left \{ \left [ \ball_{\tmax}(\hyp) \right] \right \},
\end{equation*}
and let $U\sim {\rm Uniform}\left [ \ball_{\tmax}(\hyp) \right]$ be a uniform random variable independent of the inter-arrival times $\tau_{i} ~ (i \in \NN)$. Then, 
\begin{align}
& \quad 
\PP_{(\pos,\hyp,\vel)} \left ( \event_{\tmax} \cap \left \{ \tau<\tmax,   [\ppos(\tau),\phyp(\tau)] \in B  \right \} \right )
=
\PP_{(\pos,\hyp,\vel)} \left [ \tau<\tmax, \{\pppos(\tau),\pphyp(\tau)\} \in B \right ]
\nonumber \\
& \leq
\sum_{m \in \NN} \sum_{(i_{1},\dots,i_{m}) \in R_{m}}
\hspace{-.2cm}\PP_{(\pos,\hyp,\vel)} \left [ \event_{\tmax} \cap  \left \{ \tau<\tmax,\, \left[\pos + \tau^{1}\vel + \dots + \tau^{m+1}F_{i_{1},\dots,i_{m}}\vel, \pphyp(\tau)  \right ] \in B \right \} \right ] 
\nonumber \\
& \leq
\sum_{m \in \NN} \sum_{(i_{1},\dots,i_{m}) \in R_{m}}
\const \, \PP_{(\pos,\hyp,\vel)} \left [ \event_{\tmax} \cap  \left \{   \tau<\tmax,\, \left(\pos + \tau^{1}\vel + \dots + \tau^{m+1}F_{i_{1},\dots,i_{m}}\vel, U \right ) \in B \right \} \right ] 
\label{ineq:lem:abs:cont}
\end{align}
For each term in $(\pos + \tau^{1}\vel + \dots + \tau^{m+1}F_{i_{1},\dots,i_{m}})$, the vectors $(\vel,F_{i_{1}}\vel,\dots,F_{i_{1},\dots,i_{m}} \vel)$ span $\RR^{\posdim}$, and $\tau^{k} ~ (k \in \NN)$ are independent exponentially distributed random variables.  Similarly, the law of $U$ is absolutely continuous with respect to the Lebesgue measure and $U$ is independent of $\tau^{k} ~(k \in \NN)$. 
Thus, the distribution of $( \pos + \tau^{1}\vel + \dots + \tau^{M+1}F_{I_{1},\dots,I_{M}}\vel, U)$ is absolutely continuous with respect to the Lebesgue measure on $\posDomain\times \hypDomain$.
This implies that all probability terms in the sum of \eqref{ineq:lem:abs:cont} are zero since $B$ is assumed to be a set of zero Lebesgue measure in $\posDomain\times \hypDomain$.
\end{proof}

\begin{lemma}[Continuous component]\label{lem:cont:fix}
For any two points $(\pos, \hyp, \vel)\in \domain$ and $(\ttpos,\ttvel, \tthyp)\in \domain$, there exist open sets $\posset,\hypset, \ttposset$ and $\tthypset$, with $\pos \in \posset, \hyp \in \hypset, \ttpos \in \ttposset$, and $\tthyp \in \tthypset$, and constants $\varepsilon>0, t^{\prime}>0,c>0$ such that for any $\tpos \in \posset, \thyp \in \hypset$ and all $t \in (t^{\prime}, t^{\prime}+\varepsilon]$, 
\begin{equation}\label{eq:cont:comp}
\PP_{\tpos,\vel,\thyp} \left [ \ppos(t)\in \tposset, \pvel(t) = \ttvel,\phyp(t) \in \thypset \right ]
\geq 
c \,\lebegue( \tposset \cap \ttposset) \, \lebegue(\thypset \cap \tthypset), 
\end{equation}
for any Borel-measurable sets $\tposset\subset \posDomain$ and $\thypset\subset \hypDomain$. 
\end{lemma}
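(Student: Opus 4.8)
\emph{Outline.} The statement is a local minorisation bound, so the natural route is to exhibit one concrete way for the process to travel from a neighbourhood of $(\pos,\hyp)$ to a neighbourhood of $(\ttpos,\tthyp)$ while picking up the velocity $\ttvel$, and to lower bound the probability of this using the Poisson thinning construction of \cref{sec:pt}. First I would fix a compact $\hypcompact\subset\hypDomain$ whose interior contains both $\hyp$ and $\tthyp$ and a horizon $\tmax$ large enough for the trajectory below; on the event that the hyperparameter component stays in $\hypcompact$, the GZZ process admits the thinning representation driven by a rate-$(\posdim+1)\lambdamax$ Poisson process, with $\lambdamax$ as in \eqref{eq:upper:bound}. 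By the discussion following the definition of admissible control sequences, pick $\ibf=(i_{1},\dots,i_{m})\in\{1,\dots,\posdim\}^{m}$ with $F_{(i_{1},\dots,i_{m})}\vel=\ttvel$; appending the padding block $(1,1,2,2,\dots,\posdim,\posdim)$ does not change the terminal velocity but inserts legs with velocities $F_{1}(\ttvel),\dots,F_{\posdim}(\ttvel)$, so I may assume in addition that the velocity increments $F_{(i_{1},\dots,i_{k-1})}\vel-\ttvel$, $k=1,\dots,m$, span $\RR^{\posdim}$, which will make the final position non-degenerate.

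Next I would single out, inside the thinning construction, the event $\event^{\star}$ on which the driving Poisson process has exactly $m+1$ points in $(0,t]$, the $k$-th in a prescribed small window $W_{k}$ around a nominal time $t_{k}$, the component index chosen at point $k$ is $i_{k}$ for $k\leq m$ and $\posdim+1$ (a hyperparameter refresh) for $k=m+1$, and all $m+1$ accept/reject steps succeed. Every ingredient has probability bounded below by a fixed positive constant: the Poisson count-and-location probability equals $[(\posdim+1)\lambdamax]^{m+1}e^{-(\posdim+1)\lambdamax t}\prod_{k}|W_{k}|$; each index choice contributes a factor $(\posdim+1)^{-1}$; each accepted flip contributes at least $\lambdamin/\lambdamax>0$ by \cref{as:supp}\eqref{as:supp:it:2}; and the accepted refresh contributes $\swrate/\lambdamax>0$. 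On $\event^{\star}$ the velocity at time $t$ equals $\ttvel$ automatically, the value $\phyp(t)$ is exactly the one drawn at the refresh, and $\ppos(t)=\pos+t\,\ttvel+\sum_{k=1}^{m}\tau^{k}\{F_{(i_{1},\dots,i_{k-1})}\vel-\ttvel\}$, where $\tau^{1},\dots,\tau^{m}$ are the first $m$ inter-arrival times.

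Conditioning on $\event^{\star}$, the restricted joint law of $(\ppos(t),\phyp(t))$ can be written as an integral over the arrival times. The position is an affine image of $(\tau^{1},\dots,\tau^{m})$ whose linear part has rank $\posdim$ by the spanning property, so (choosing a right inverse and using that the Jacobian from arrival times to inter-arrival times is triangular) the contribution of $\indicator_{\tposset}\{\ppos(t)\}$ is bounded below by $c_{1}\,\lebegue(\tposset\cap\ttposset)$ for a suitable open ball $\ttposset\ni\ttpos$; the refresh draw toward $\thypset$ contributes the renormalised density $Z^{-1}\hyptransdens\,\indicator_{\hypcompact}$, which is continuous and strictly positive by \cref{as:supp}\eqref{as:supp:it:1} and hence bounded below by some $c_{2}>0$ on the relevant compacts, giving a factor $\geq c_{2}\,\lebegue(\thypset\cap\tthypset)$. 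Multiplying by the lower bound on $\PP(\event^{\star})$ yields \eqref{eq:cont:comp}. Sliding the nominal times $t_{k}$ (equivalently the windows) produces the whole interval $t\in(t',t'+\varepsilon]$ over which the bound holds with a single pair $\ttposset,\tthypset$; and since the rates $\lambda_{i}$, the density $\hyptransdens$, and the affine position map depend continuously on the initial point, shrinking the open neighbourhoods $\posset\ni\pos$ and $\hypset\ni\hyp$ preserves all these estimates with a uniformly smaller constant $c$, giving the required uniformity over $\tpos\in\posset$, $\thyp\in\hypset$.

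I expect the main obstacle to be the change-of-variables bookkeeping behind the third paragraph: expressing the restricted joint law of $(\ppos(t),\phyp(t))$ as an integral with an integrand bounded below, while simultaneously (i) respecting the hard constraint that the inter-arrival times sum to $t$, (ii) guaranteeing through the combinatorial choice of $\ibf$ that the visited velocity increments span $\RR^{\posdim}$, so that the position map is a submersion onto a full-dimensional set with $\ttpos$ in its interior, and (iii) keeping every estimate uniform both over the starting point in $\posset\times\hypset$ and over $t$ in the interval $(t',t'+\varepsilon]$.
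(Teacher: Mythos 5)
Your proposal is correct and mirrors the paper's own proof almost step for step: both work on the Poisson-thinning representation with the hyperparameter confined to a compact $\hypcompact$, fix a control sequence whose associated velocity increments span $\RR^{\posdim}$, force the auxiliary Poisson points into small windows with prescribed component indices (all accepted, the last one a hyperparameter refresh), and combine the full-rank affine position map with the strict positivity of $\hyptransdens$ from \cref{as:supp}\eqref{as:supp:it:1} to obtain the Lebesgue lower bound. The ``main obstacle'' you flag at the end---uniformity of the estimate over $\tpos\in\posset$, $\thyp\in\hypset$, and $t\in(t',t'+\varepsilon]$---is exactly where the paper's proof delegates to Lemma~6.3 of \cite{benaim2015qualitative}: the paper first establishes \cref{eq:cont:xi} at the single pair $(\pos,t_{m+1})$ and then invokes that lemma, viewing $\tpos$ and $t$ as parameters of the submersion $\Psi$, to produce $\posset$ and the interval around $t_{m+1}$; bare continuity of the rate functions and of $\hyptransdens$ is the right intuition, but an explicit parameterised-submersion statement of that kind is what makes the step airtight.
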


\begin{proof}
Let $\tposset \subseteq \posDomain, ~ \thypset \subseteq \hypDomain$, and $\ttvel \in \velDomain$ be arbitrary and $B = \tposset \times \thypset \times \{\ttvel\}$. Consider a control sequence $\ubf = (\tbf,\ibf) = (t_{1},\dots,t_{m+1}; i_{1},\dots,i_{m})$ which is such that $\Phi_{\ubf}(\pos,\vel)=(\ttpos,\ttvel)$ and all component indices appear at least once in $\ibf$. Let $\hypcompact$ denote a compact set whose interior contains both $\hyp$ and $\tthyp$. Let $\tmax = (t_{m+1}+1)$ and $\event_{\tmax} = \{  [\phyp(s)]_{s \leq \tmax} \subseteq \hypcompact \} $. We have,
\begin{equation*}
\PP_{(\pos,\hyp,\vel)} \left \{ [\ppos(t),\phyp(t),\pvel(t)] \in B \right \} 
\geq
\PP_{(\pos,\hyp,\vel)} \left [ \left \{ [\ppos(t),\phyp(t),\pvel(t)] \in B \right\} \cap \event_{\tmax} \right ]
\end{equation*}
for any $t\geq 0$. By constraining the process to realizations contained in $\event_{\tmax}$, we can again use the Poisson thinning procedure described in \cref{sec:pt} to simulate the law of the GZZ process up to time $\tmax$. Now consider a collection of closed, bounded and disjoint intervals $\neighbour_{1}, \dots, \neighbour_m$ which are neighborhoods of the points $t_{1},\dots,t_{m}$, respectively, and $\neighbour_{m}$ is such that for sufficiently small $\varepsilon>0$, the interval $\neighbour_{m+1} = [\max\, \neighbour_{m} + \varepsilon,t_{m+1}]$ has non-empty interior. For the equivalent process generated by the Poisson thinning procedure, consider the event $\event = \event_{1} \cap \event_{2}$, where $\event_{1}$ is the event that $\ta^{k}\in \neighbour_{k} \, \forall \, k=1,\dots,m$, and $\event_{2}$ is the event that $I_{k}=i_{k} \, \forall \, k=1,\dots,m$ and $I_{m+1}=\posdim+1$, (which in particular implies that all velocity flips and the update of the hyperparameter block are accepted).
Then for $t = t_{m+1}$, we have
\begin{align*}
& \quad 
\PP_{(\pos,\hyp,\vel)} \left [ \{ \ppos(t), \pvel(t), \phyp(t) \} \in B \right ]
\geq 
\PP_{(\pos,\hyp,\vel)} \left [ \left \{ [\ppos(t),\pvel(t),\phyp(t)] \in B \right\} \cap \event_{\tmax} \right ] 
\\
& =
\PP \left [ \{\pppos(t),\ppvel(t),\pphyp(t)\} \in B \right ]
\geq 
\PP \left [ \left \{ [\pppos(t),\ppvel(t),\pphyp(t)] \in B   \right \} \cap \event \right ]
\\ 
& =
\PP  \left [ \left [ \left \{  \Psi(\pos,\hyp, t, \ta^{1}, \ta^{2},\dots, \ta^{m}), F_{i_{1},\dots,i_{m}}\vel,\pphyp^{m+1} \right \} \in B  \right ] \cap \event \right ]
\\
& =
\PP \left [ \left \{  \Psi(\pos,\hyp, t, \ta^{1}, \ta^{2},\dots, \ta^{m}) \in  \tposset \right \} \cap \left \{ \pphyp^{m+1} \in \thypset \right \} \cap \event\right ],
\end{align*}
where
\begin{equation*}
\Psi(\pos,t, s_{1}, s_{2},\dots, s_{m} )
=
\pos + s_{1}\vel + (s_{2}-s_{1})F_{i_{1}}\vel + \dots + (t-s_{m}) F_{i_{1},\dots,i_{m}}\vel.
\end{equation*}
Let 
\begin{equation*}
\lambdamin
= 
\min \left \{ \lambda_{i}(\tpos, \vel, \hyp) :  \tpos \in \ball_{\tmax}(\pos), ~\vel \in \velDomain, ~\thyp \in\hypcompact, ~i=1,\dots,\posdim+1 \right \}.
\end{equation*}
Using standard results on Poisson processes and the fact that the random variables $E_{k}, U_{k} ~ (k=1,\dots,m+1)$ are mutually independent, we find 
\begin{align*}
& \quad 
\PP(\event) = \PP(\event_{2} \mid \event_{1}) \PP(\event_{1})
\\
& \geq
\left \{ \frac{ \lambdamin}{(\posdim+1)\lambdamax }   \right \}^{m+1}\exp \left \{-  \overline{\lambda} \left ( \tmax - \sum_{k=1}^{m+1} \abs{\neighbour_{i}} \right ) \right \} \prod_{i=1}^{m+1}\overline{\lambda} \, \abs{\neighbour_{i}} \exp\{-\overline{\lambda} \, \abs{\neighbour_{i}}\}
=
\const_{1},
\end{align*}
and therefore
\begin{align*}
& \quad
\PP \left [  \left \{  \Psi(\pos, t, \ta^{1},\dots, \ta^{m}) \in   \tposset  \right \} \cap \left \{ \pphyp^{m+1} \in \thypset \right \}  \cap \event \right ]
\\
& =
\PP \left [ \left \{  \Psi(\pos, t, \ta^{1},\dots, \ta^{m}) \in   \tposset  \right \} \cap \left \{ \pphyp^{m+1} \in \thypset \right \} \mid \event   \right ] \PP(\event)
\\
& \geq  
\const_{1} \PP \left [ \left \{  \Psi(\pos, t, \ta^{1},\dots, \ta^{m}) \in   \tposset  \right \} \cap \left \{ \pphyp^{m+1} \in \thypset \right \} \mid \event  \right ].
\end{align*}
Conditioning on $\event=\event_{1} \cap \event_{2}$ renders the arrival times $\ta^{1}, \dots, \ta^{m+1}$ to be mutually independent random variables with supports $\neighbour_{1}, \dots, \neighbour_{m+1}$, respectively. Thus, there exists $\const_{2}>0$ such that
\begin{align*}
& \quad
\PP \left [ \{  \Psi(\pos, t, \ta^{1},\dots, \ta^{m}) \in   \tposset  \} \cap \left \{ \pphyp^{m+1} \in \thypset \right \} \mid \event  \right ]
\\
& \geq 
\const_{2}\,\PP \left [\{ \Psi(\pos, t, \wt{U}^{1}, \dots,\wt{U}^{m}) \in   \tposset \} \cap \{ \pphyp^{m+1} \in \thypset =\} \mid \event \right ],
\end{align*}
with $(\wt{U}^{1}, \dots, \wt{U}^{m})$ being uniformly distributed on $\neighbour_{1} \times \dots \times \neighbour_{m}$. 
Similarly, it follows from the specification of the transition kernel $\hyptranskernel$ that $\pphyp^{m+1}$ has full support on $\hypcompact$ and that its density is bounded below by 
\begin{equation*}
\const_{3}
= 
\min_{ \pos^{\prime} \in  \ball_{\tmax}(\pos),\, \hyp^{\prime} \in \hypcompact,\,  \hyp^{\prime\prime} \in \hypcompact }  Z_{\pos^{\prime},\hyp^{\prime},\hypcompact}^{-1} \hyptransdens\left [ (\pos^{\prime},\hyp^{\prime}), \hyp^{\prime\prime} \right ]   \indicator_{\hypcompact}(\hyp^{\prime\prime}) 
>
0,
\end{equation*}
where $Z_{\pos^{\prime},\hyp^{\prime},\hypcompact}$ is an appropriate normalization constant.
Thus,
\begin{align*}
=\, & \PP \left [ \{  \Psi(\pos, t, \wt{U}^{1}, \dots,\wt{U}^{m}) \in   \tposset \} \cap \{ \pphyp^{m+1} \in \thypset \} \mid \event_{1} \cap \event_{2} \right ]
\\
\geq \,& 
\const_{3} \,  \lebegue ( \hypcompact )\, \PP \left [\{  \Psi(\pos, t, \wt{U}^{1}, \dots, \wt{U}^{m}) \in   \tposset \} \cap \{ \wt{U}^{m+1} \in \thypset \}\right ] \\
= \,& 
\const_{3} \, \lebegue ( \hypcompact )\, \PP \left [\{  \Psi(\pos, t, \wt{U}^{1}, \dots, \wt{U}^{m}) \in \tposset \} \right ] \PP \left [ \{ \wt{U}^{m+1} \in \thypset\}\right ]
\end{align*}
where  $\wt{U}^{m+1}\sim {\rm Uniform}(\hypcompact)$ is a uniform random variable independent of the arrival times $\ta^{k} ~ (k \in \NN)$. 

From the fact that the control sequence was chosen such that all velocity components are flipped at least once, it follows that the Jacobian matrix of the map $(s_{1},\dots,s_{m}) \mapsto \Psi(\pos, t; s_{1}, s_{2},\dots, s_{m} )$ has full rank $\posdim$. Thus, under this map, the pushforward of the uniform law of $(\wt{U}^{1}, \dots, \wt{U}^{m})$ on $\neighbour_{1} \times \dots \times \neighbour_{m}$ is absolutely continuous with respect to the Lebesgue measure on $\posDomain$, and by construction its support contains the point $\ttpos$. Therefore,
 \begin{equation}\label{eq:cont:xi} 
\PP \left [ \Psi(\tpos, \tt, \wt{U}^{1}, \dots, \wt{U}^{m}) \in  \tposset \right ] 
\geq
\const_{3}\,\lebegue( \tposset \cap \ttposset)
\end{equation}
for $\tpos=\pos$ and $\tt=t_{m+1}$, some suitable constant $\const_{3}>0$ and suitable neighbourhood $\ttposset$ of $\ttpos$. This result can be extended to points in a neighbourhood $\posset$ of $\pos$ and an open interval containing $t_{m+1}$ as follows. By viewing $\tpos$ and $\tt$ as parameters of the map $\Psi(\tpos,\tt,\cdot)$, it follows from Lemma 6.3 of \cite{benaim2015qualitative} that there exists a neighbourhood $\tttposset$ of $\ttpos$ and $\varepsilon>0$ such that for $\ttposset=\tttposset$, \cref{eq:cont:xi} holds for all $\tpos \in \posset$ and $\tt \in (t_{m+1} -\varepsilon, t_{m+1}+\varepsilon)$.
Likewise, by virtue of the construction of the constant $\const_{3}$, we have
\begin{equation*}
\PP \left \{ ( \wt{U}^{m+1} \in \thypset ) \right \} 
\geq
\lebegue( \hypcompact)^{-1} \, \lebegue( \thypset \cap \tthypset)
\end{equation*}
for any $\hyp\in \hypcompact$. This completes the proof.  
\end{proof}
Let in the following  $K_{1} \subseteq K_{2} \subseteq \cdots$ be a sequence of a increasing of compact subsets of $\posDomain \times \hypDomain$ such that
$\lim_{n \to \infty} K_{n} = \liminf_{n \to \infty} K_{n}= \posDomain \times \hypDomain$. In accordance with \cite[Section 3]{meyn1993stability}, we define by
\[
\{ \abs{(\ppos,\phyp) } \to \infty \} =  \liminf_{n\to \infty} \liminf_{t\to \infty} \{ (\ppos(t),\phyp(t)) \notin K_{n}\},
\]  
the event that the process escapes to infinity. 
\begin{lemma}\label{lem:non:evan}
The process is {\em non-evanescent}. That is, for any $(\pos,\hyp,\vel)\in\domain$, we have 
\begin{equation*}
\PP_{(\pos,\hyp,\vel)} \left [ \{ \abs{(\ppos,\phyp) } \to \infty \}  \right ]
= 
0.
\end{equation*}
\end{lemma}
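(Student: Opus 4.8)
The plan is to exploit the existence of the invariant probability measure $\augtarget$ (\cref{prop:inv}) together with the absolute-continuity statement of \cref{lem:abs:cont}. First I would show that the escape event has zero probability when the process is started from $\augtarget$; this is a soft stationarity (Poincar\'{e}-recurrence) argument. I would then transfer this to an arbitrary deterministic starting point by running the process until the stopping time $\tnp$ of \cref{lem:abs:cont}, after which the law of the position--hyperparameter pair is absolutely continuous, and by invoking the strong Markov property.

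\emph{Step 1 (stationary initial law).} Abbreviate $\Phi(t)=(\ppos(t),\phyp(t),\pvel(t))$ and let $E=\{\abs{(\ppos,\phyp)}\to\infty\}$. For the exhausting sequence $K_{1}\subseteq K_{2}\subseteq\cdots$ of $\posDomain\times\hypDomain$ the events $B_{n}=\liminf_{t\to\infty}\{(\ppos(t),\phyp(t))\notin K_{n}\}$ are decreasing in $n$, so $E=\bigcap_{n}B_{n}$. Under $\PP_{\augtarget}$ the process is stationary, hence $\PP_{\augtarget}(B_{n})\leq\target(K_{n}^{c})$, and therefore $\PP_{\augtarget}(E)\leq\inf_{n}\target(K_{n}^{c})=0$. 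Consequently the measurable set $G=\{x\in\domain:\PP_{x}(E)=0\}$ has $\augtarget(G)=1$. Since $U$ is finite and continuous on $\zetaDomain=\RR^{\posdim+\hypdim}$, the measure $\augtarget$ is equivalent to the product of Lebesgue measure on $\posDomain\times\hypDomain$ with counting measure on $\velDomain$; hence for every $\vel\in\velDomain$ the section $\{(\pos,\hyp):(\pos,\hyp,\vel)\in G^{c}\}$ is Lebesgue-null in $\posDomain\times\hypDomain$.

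\emph{Step 2 (arbitrary initial condition).} Fix $(\pos,\hyp,\vel)\in\domain$ and let $\tnp$ be the stopping time defined just before \cref{lem:abs:cont}. I would first verify that $\PP_{(\pos,\hyp,\vel)}(\tnp<\infty)=1$. The process is non-explosive: on any interval $[0,T]$ the position stays in the fixed ball $\ball_{\sqrt{\posdim}\,T}(\pos)$ because $\abs{\pvel}\equiv\sqrt{\posdim}$; the hyperparameter updates occur at the points of a rate-$\swrate$ Poisson process, hence are finitely many on $[0,T]$; and between two consecutive hyperparameter updates $\hyp$ is frozen while $\pos$ ranges over a compact set on which the continuous rates $\lambda_{i}$ are bounded, so only finitely many velocity flips occur. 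Thus only finitely many events occur on $[0,T]$ and $\swtime^{k}\to\infty$. Over the a.s.\ infinite lifetime the rate-$\swrate$ clock rings, and $\int_{0}^{\infty}\lambda_{i}\{\Phi(s)\}\,\dd s=\infty$ for each $i$ (the integrand is at least $\lambdamin>0$ by \cref{as:supp}\eqref{as:supp:it:2}), so the flip-counting process of component $i$ has diverging compensator and hence diverges; it follows that the event defining $N$ occurs and $\tnp<\infty$ almost surely. By \cref{lem:abs:cont} the law of $(\ppos(\tnp),\phyp(\tnp))$ is then absolutely continuous with respect to Lebesgue measure, so summing over the finitely many possible values of $\pvel(\tnp)$ and using Step~1 gives $\PP_{(\pos,\hyp,\vel)}(\Phi(\tnp)\in G)=1$. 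Finally $E$ depends only on the behaviour of the path as $t\to\infty$, so on $\{\tnp<\infty\}$ it is invariant under the time-shift by $\tnp$; the strong Markov property then yields $\PP_{(\pos,\hyp,\vel)}(E)=\EE_{(\pos,\hyp,\vel)}[\PP_{\Phi(\tnp)}(E)]=0$.

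\emph{Expected main obstacle.} The delicate part is Step~2. One must establish $\PP_{x}(\tnp<\infty)=1$ (non-explosion together with the eventual occurrence of at least one hyperparameter update and $\posdim-1$ distinct velocity flips), because \cref{lem:abs:cont} only controls the law on $\{\tnp<\infty\}$; and it is precisely this, combined with the \emph{full} absolute continuity provided there (as opposed to the mere local minorization of \cref{lem:cont:fix}) and the equivalence of $\augtarget$ with Lebesgue measure, that upgrades the $\augtarget$-almost-everywhere conclusion of Step~1 to the pointwise statement claimed in the lemma.
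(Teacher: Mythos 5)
Your proof follows essentially the same two-step structure as the paper's: first showing non-evanescence from the stationary initial law via Fatou's lemma and the equivalence of $\augtarget$ with Lebesgue measure, then transferring to arbitrary starting points via the stopping time $\tnp$, the absolute-continuity result of \cref{lem:abs:cont}, and the strong Markov property. Your treatment of $\PP_{(\pos,\hyp,\vel)}(\tnp<\infty)=1$ is in fact more careful than the paper's, which disposes of this with a terse bound of the form $\PP(\tnp>t)\leq\sum_{k}\PP(\swtime^k>t)$ rather than separately verifying non-explosion and then invoking the divergence of the flip-counting compensators (using $\lambda_i\geq\lambdamin$) and the rate-$\swrate$ Poisson clock, as you do.
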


\begin{proof}
 
By applying Fatou's lemma twice, we obtain
\begin{align*}
& \quad \PP_{(\pos,\hyp,\vel) \sim\target} \left [ \{ \abs{(\ppos,\phyp) } \to \infty \}  \right ]
\\
& \leq 
\liminf_{n\to \infty} \liminf_{t\to \infty} \PP_{(\pos,\hyp,\vel)}\left [ \{ (\ppos(t),\phyp(t))\notin K_{n}\} \right ]  
% \EE_{(\pos,\hyp,\vel) \sim\target} \left [   \indicator_{\{ (\ppos(t),\phyp(t))\notin K_{n}\}}\right ]
\\
& \leq 
\liminf_{n\to \infty} \{1 - \target(K_{n})\}
= 0,
\end{align*}
where the last equality holds since the target measure $\pi$ is tight. This shows that the process is non-evanescent for $\pi$-almost all starting points $(\pos,\hyp,\vel) \in \domain$. 

We next show non-evanescence for all starting points $(\pos,\hyp,\vel)\in \domain$ by using the fact that the law of $\{\ppos(t),\pvel(t),\phyp(t)\}$ becomes absolutely continuous with respect to $\augtarget$ within finite time. Let $\tau$ be as defined in the first paragraph of \cref{sec:abs:cont} and let $\noneset \subset \domain$ the set of all points in $\domain$ for which the process is non-evanescent. Then,
\begin{align*}
\begin{aligned}
\PP_{(\pos,\hyp,\vel)} \left [ \abs{(\ppos(t),\phyp(t))} \not\to \infty \right ]
& \geq 
\PP_{(\pos,\hyp,\vel)} \left [ \{ \tau < \infty \} \cap \{ \abs{(\ppos,\phyp)} \not\to \infty \} \right ]
\\
& = 
\EE_{(\pos,\hyp,\vel)} \left [ \indicator_{\tau < \infty} \PP_{[\ppos(\tau),\phyp(\tau),\pvel(\tau)]}\{ \abs{(\ppos,\phyp)} \not\to \infty \} \right ]
\\
& \geq
\EE_{(\pos,\hyp,\vel)} \left \{ \indicator_{\tau < \infty}\indicator_{[\ppos(\tau),\phyp(\tau),\pvel(\tau)] \in \noneset } \right \}.
\end{aligned}
\end{align*}
Since by what we have shown above $\domain \setminus \noneset$ is a Lebesgue null set, we have  
\begin{align*}
\EE_{(\pos,\hyp,\vel)} \left \{ \indicator_{\tau < \infty}\indicator_{[\ppos(\tau),\phyp(\tau),\pvel(\tau)] \notin \noneset } \right \} 
& =
\PP_{(\pos,\hyp,\vel)} \left [ \left [ \tau < \infty\right ] \cap \left [  \{ \ppos(\tau),\phyp(\tau),\pvel(\tau)\}  \notin \noneset \right ] \right ]
\\
& = 
\PP_{(\pos,\hyp,\vel)} \left  (\tau < \infty \right).
\end{align*}
Thus, 
\begin{align*}
\PP_{(\pos,\hyp,\vel)} \left [ \abs{(\ppos,\phyp)} \not\to \infty \right ]
\geq
\PP_{(\pos,\hyp,\vel)} \left (\tau < \infty \right ) 
=
1 - \lim_{t\to\infty} \PP(\tau > t)
=
1, 
\end{align*}
since
\begin{align*}
& \quad 
\PP(\tau > t)  \leq \sum_{k=1}^{d+1}\PP(\swtime^{k} > t ) 
\\
& =
\sum_{k=1}^{d+1} \exp \left [ -\int_{0}^{t}\lambda_{i} \{ \ppos(s),\phyp(s),\pvel(s)\} \dd s \right ] \leq  (d+1)e^{-\gammamin t}  \xrightarrow[t \to \infty]{} 0. 
\end{align*}
\end{proof}

With the results of \cref{lem:cont:fix,lem:non:evan} at hand, the proof \cref{thm:ergodic} is identical to the proof of Theorem 5 of \cite{bierkens2019ergodicity}. For the sake of self-contained presentation, we briefly summarize the main steps of that proof, but refer to the original work for details. 

First, Lemmas \ref{lem:cont:fix} and \ref{lem:non:evan} imply the existence of a non-trivial lower semi-continuous sub-stochastic transition kernel $\mathcal{T}$ which bounds the residual kernel
\begin{equation*}
R\{(\pos,\hyp,\vel), \cdot\}
= 
\int_{0}^{\infty}\transkernel_{t}\{(\pos,\hyp,\vel),\cdot\} e^{-t}\dd t
\end{equation*}
from below so that $R\{(\pos,\hyp,\vel), A\} \geq \mathcal{T}\{(\pos,\hyp,\vel), A\}$ for all $(\pos,\hyp,\vel) \in \domain$ and all measurable sets $A  \subset \domain$.
In the language of \cite{tweedie1994topological}, this means that the process is a $T$-process. 

\cref{lem:cont:fix} directly implies that the process  is open set irreducible. That is, for any open set $\mathcal{O}\subset \domain$ and any starting point of the process, the probability that hitting times of the form $\tau_{\mathcal{O}} = \inf\{ t \geq 0, (\ppos(t),\phyp(t),\pvel(t))\in \mathcal{O}\}$ are finite is positive. By Theorem 3.2 of \cite{tweedie1994topological}, the open set irreducibility and the fact that the process is a $T$-process implies that it is $\psi$-irreducible, that is, 
\begin{equation*}
\EE_{(\pos,\hyp,\vel)} \left [\int_{0}^{\infty}\indicator_{A}\{\ppos(t),\pphyp(t),\pvel(t)\} \dd t \right ] >0,
\end{equation*}
for any $(\pos,\hyp,\vel)\in \domain$ and any measurable set $A$ with $\augtarget(A)>0$.

By Theorem 3.2 of \cite{meyn1993stability}, non-evanescence is equivalent to Harris-recurrence in the case of $\psi$-irreducible $T$-processes. Thus, by \cref{lem:non:evan} it follows that the process is Harris-recurrent.

Finally, by Theorem 6.1 of \cite{meyn1993stability}, it is sufficient to show irreducibility of an embedded/skeleton Markov chain 
\begin{equation*}
\state_{k} = \{ \ppos(k\delta), \phyp(k\delta), \pvel(k\delta) \} ~ (k \in \NN)
\end{equation*}
with some $\delta>0$ in order to show  ergodicity of the continuous-time process. The existence of such a Markov chain follows again by \cref{lem:cont:fix} and standard arguments that rely on the observation that any periodicity issues which would prevent the embedded Markov chain to be irreducible can be overcome by the fact that the process can revisit a sufficiently small neighbourhood of any state within a certain non-empty time interval $[t_{0},t_{0}+\varepsilon)$.
\subsection{Proof Theorem \ref{thm:cond:pot}}
To show this, we consider a factorization of $V$ as $V = V_{0} \prod_{i=1}^{\posdim}V_{i}$, where
\begin{equation*}
V_{0}(\pos,\hyp,\vel) 
=
\exp\{ a \pot(\pos,\hyp)\}
\quad \text{and} \quad 
V_{i}(\pos,\hyp,\vel) 
=
\exp [\phi  \{\vel_{i} \partial_{\pos_{i}}\pot(\pos,\hyp) \}] ~~ ( i=1,\dots,\posdim).
\end{equation*}
Let  $s= \sign \{ \vel_{i}\partial_{\pos_{i}}\pot(\pos,\hyp)\}$ and $\widehat{s} = \sign \{ \vel_{i} \partial_{\pos_{i}} \pot(\pos,\widehat{\hyp}) \}$. Since the derivatives $\partial_{\xi_{i}}\bfunc$ are bounded, it follows that there are constants $\const_{i}>0 ~ (i=1,\dots,\posdim)$, such that
\begin{align*}
\begin{aligned}
\frac{V_{i}(\pos,\hyp,\vel)}{V_{i}(\pos,\widehat{\hyp},\vel)} 
&= \frac
{ \left \{ 1+\delta \abs*{\vel_{i}\partial_{\pos_{i}}\pot(\pos,\hyp)} \right \}^{s/2}}
{\left \{ 1+\delta \abs*{\vel_{i}\partial_{\pos_{i}}\pot(\pos,\widehat{\hyp})}\right \}^{\widehat{s}/2}}
=
 \frac
{ \left \{ 1+\delta \abs*{\vel_{i}\partial_{\pos_{i}}\pot_{1}(\pos) + \vel_{i}\partial_{\pos_{i}}\bfunc(\pos,\hyp)}\right \}^{s/2}}
{\left \{ 1+\delta \abs*{ \vel_{i}\partial_{\pos_{i}}\pot_{1}(\pos) + \vel_{i}\partial_{\pos_{i}}\bfunc(\pos,\widehat{\hyp})}\right \}^{\widehat{s}/2}}
\leq \const_{i}
\end{aligned}
\end{align*}
for all $(\pos,\vel) \in \posDomain\times \velDomain$. Thus, in particular 
\begin{equation}\label{eq:lya:eqn:2}
\int_{\hypDomain} \frac{V(\pos,\hyp,\vel)}{V(\pos,\widehat{\hyp},\vel) } \exp\{-\pot(\pos,\hyp)\} \dd \hyp 
\leq
\const \int_{\hypDomain}  \frac{V_{0}(\pos,\hyp,\vel)}{V_{0}(\pos,\widehat{\hyp},\vel) } \exp\{-\pot(\pos,\hyp)\}\dd \hyp
~~ \text{with} ~~ 
\const = \prod_{i=1}^{\posdim} \const_{i}
\end{equation}
for all $\pos\in \posDomain$. By the boundedness of $\bfunc$, it follows that there exist suitable positive constants $\const_{0}, \const_{0}^{\prime}>0$ such that 
\begin{align}\label{eq:lya:eqn:3}
\begin{aligned}
\int_{\hypDomain} \frac{V_{0}(\pos,\hyp,\vel)}{V_{0}(\pos,\widehat{\hyp},\vel) } \exp\{-\pot(\pos,\hyp)\}\dd \hyp 
& \leq
\const_{0}^{\prime} e^{-\pot_{1}(\pos)} \exp\{-a\pot_{2}(\widehat{\hyp})\},
\\
\const_{0} \exp\{-\pot_{1}(\pos)\}
& \leq
\int_{\hypDomain}  \exp\{-\pot(\pos,\hyp)\} \dd \hyp.
\end{aligned}
\end{align}
Since we assume that $U_{1}(\hyp) \to \infty$ as $\abs{\hyp}\to \infty$, inequalities \eqref{eq:lya:eqn:2} and \eqref{eq:lya:eqn:3} imply the validity of \cref{eq:lya:eqn:1}  for sufficiently large $\widehat{\hyp}$.

\subsection{Proof of Theorem \ref{thm:ergodic:2}}

In order to prove \cref{thm:ergodic:2},  we first show the validity of a minorization condition in \cref{lem:minorization} and a Lyapunov condition in \cref{lem:lyapunov}.

\begin{lemma}[Minorization condition]\label{lem:minorization}

Let \cref{as:supp} be satisfied. If $\posset\subset \posDomain$ and $\hypset\subset \hypDomain$ are compact, then there exists $\tt>0$ and a constant $\const>0$ such that 
\begin{equation*}
\forall (\pos,\hyp,\vel) \in C = \posset \times \hypset  \times \velDomain,
\quad
\PP_{(\pos,\hyp,\vel)} \left [ \{\ppos(\tt),\phyp(\tt),\pvel(\tt)\}  \in \cdot \right ]
\geq 
c \, \lebegue( \cdot \cap C).
\end{equation*}

\end{lemma}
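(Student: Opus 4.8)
The plan is to reduce the minorization statement to the local lower bound already established in \cref{lem:cont:fix} and then use a compactness argument to make the bound uniform over the compact set $C$. First I would fix an arbitrary reference point in $\posDomain\times\hypDomain$, say the origin, together with a reference velocity $\ttvel$, and recall that since the support of the $\pos$-marginal of $\target$ is connected and the rates $\lambda_i$ are bounded below by \cref{as:supp}\eqref{as:supp:it:2}, there is an admissible control sequence connecting any state to any other state. Applying \cref{lem:cont:fix} with the target state being this reference point, I obtain open sets $\posset^{(\pos,\hyp,\vel)},\hypset^{(\pos,\hyp,\vel)}$ around any source state $(\pos,\hyp,\vel)\in C$, together with constants $\varepsilon,t',c$ (depending a priori on the source state) and a target neighbourhood $\ttposset,\tthypset$ of the reference point, such that for all $\tpos$ in the source neighbourhood and all $t\in(t',t'+\varepsilon]$,
\begin{equation*}
\PP_{\tpos,\vel,\thyp}\left[\ppos(t)\in\tposset,\pvel(t)=\ttvel,\phyp(t)\in\thypset\right]\geq c\,\lebegue(\tposset\cap\ttposset)\,\lebegue(\thypset\cap\tthypset)
\end{equation*}
for Borel sets $\tposset,\thypset$. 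The role of \cref{lem:cont:fix} is precisely to give a lower bound that is (a) uniform over a neighbourhood of the source and (b) holds on an open interval of times, so that different source points can be synchronised to a common hitting time.

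The second step is the compactness reduction. The family of open product neighbourhoods $\{\posset^{(\pos,\hyp,\vel)}\times\hypset^{(\pos,\hyp,\vel)} : (\pos,\hyp)\in\posset\times\hypset\}$ covers the compact set $\posset\times\hypset$, so I extract a finite subcover indexed by $j=1,\dots,J$, with associated parameters $\varepsilon_j,t_j',c_j$ and target neighbourhoods $\ttposset_j\times\tthypset_j$. Each of the $J$ bounds holds on its own time interval $(t_j',t_j'+\varepsilon_j]$; to obtain a single time at which all bounds are simultaneously in force, I use the fact that I am free in \cref{lem:cont:fix} to prepend extra loops to the control sequence (each full loop of velocity flips returning to $\ttvel$ adds a fixed amount to the travel time while keeping all component indices represented), so the set of admissible travel times for each $j$ actually contains arbitrarily large values, hence intervals that can be made to overlap. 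More simply: I can enlarge $t_j'$ for each $j$ so that the intervals $(t_j',t_j'+\varepsilon_j]$ have a common point $\tt$. Then for every $(\pos,\hyp)\in\posset\times\hypset$, the state lies in at least one neighbourhood $j$, and
\begin{equation*}
\PP_{(\pos,\hyp,\vel)}\left[\{\ppos(\tt),\phyp(\tt),\pvel(\tt)\}\in A\right]\geq c_j\,\lebegue\{(A_{\ttvel})_{\pos}\cap\ttposset_j\}\,\lebegue\{(A_{\ttvel})_{\hyp}\cap\tthypset_j\},
\end{equation*}
where $A_{\ttvel}$ is the $\ttvel$-slice of a measurable set $A\subset\domain$ and the subscripts denote projections. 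Finally, since $C$ is bounded, I can choose a single reference neighbourhood $\ttposset\times\tthypset$ contained in all of the $\ttposset_j\times\tthypset_j$ and in $\posset\times\hypset$, absorb $\lebegue(\ttposset\times\tthypset)$-type factors into a single constant, and conclude $\PP_{(\pos,\hyp,\vel)}[\,\cdot\,]\geq c\,\lebegue(\,\cdot\cap C)$ with $c=\min_j c_j$ times a geometric constant.

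The main obstacle I anticipate is the bookkeeping in the time-synchronisation step: \cref{lem:cont:fix} as stated gives, for each source point, only a short interval $(t',t'+\varepsilon]$ of valid times, and a priori these intervals for different source points in the finite subcover need not intersect. Resolving this requires the observation that the control sequence in the proof of \cref{lem:cont:fix} can be chosen with additional velocity-flip loops, so that the associated travel time $t_{m+1}$ can be taken arbitrarily large while the conclusion (full-rank Jacobian, all components flipped, lower bound constant) is preserved; hence each source point admits valid travel times in an interval that can be slid upward to contain a common $\tt$. Everything else — reducing a general measurable set in $\domain$ to its velocity-slices, and collapsing finitely many constants to one — is routine.
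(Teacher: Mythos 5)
Your plan has a genuine gap in the very first move: fixing a single reference target state. The bound that \cref{lem:cont:fix} produces is supported only on a small neighbourhood $\ttposset \times \tthypset \times \{\ttvel\}$ of whatever target you applied it to; therefore the lower bound you obtain on $\PP_{(\pos,\hyp,\vel)}[\cdot]$ is of the form $c\,\lebegue\bigl(\cdot \cap \ttposset \times \tthypset \times \{\ttvel\}\bigr)$, which is strictly weaker than $c\,\lebegue(\cdot \cap C)$. Take any measurable $A \subset C$ disjoint from your chosen target neighbourhood: your bound gives $\PP_{(\pos,\hyp,\vel)}[\cdot\in A]\geq 0$, whereas the lemma demands $\PP_{(\pos,\hyp,\vel)}[\cdot\in A]\geq c\,\lebegue(A\cap C)>0$. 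The final step of your plan (``absorb $\lebegue(\ttposset\times\tthypset)$-type factors into a single constant, and conclude $\PP\geq c\,\lebegue(\cdot\cap C)$'') is a non sequitur for exactly this reason; shrinking to a common small target neighbourhood makes the bound weaker, not stronger, and $\lebegue(A\cap C)$ cannot be controlled by $\lebegue(A\cap\ttposset\times\tthypset)$. To make the minorizing measure fill all of $C$, you must cover the \emph{target} side of the argument by compactness as well, not just the source side: apply \cref{lem:cont:fix} to pairs $(\pos,\hyp,\vel)$, $(\tpos,\thyp,\tvel)$ ranging over $C\times C$ and extract a finite subcover of that product, so that for each source point the target neighbourhoods in the subcover actually cover $C$. (The target velocity $\ttvel$ must also range over all of $\velDomain$; since $\velDomain$ is finite this costs only a finite minimum.) This is what the paper's proof does, though it compresses the bookkeeping.

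On the time-synchronisation concern: you correctly identify the issue, and your proposed fix (padding control sequences with additional flip loops so that the valid time windows can be slid to a common $\tt$) is viable. The paper sidesteps it more cleanly by reversing the order of operations: it \emph{first} chooses $t>0$ large enough that every source--target pair with position components in the compact $\posset$ is joined by an admissible control sequence of total duration $t_{m+1}=t$, and only \emph{then} applies \cref{lem:cont:fix} with this single common $t$; all the local bounds come out already synchronised at $\tt=t$, and no sliding is needed.
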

\begin{proof}
For any compact set $\posset \subset \posDomain$, we can choose $t>0$ sufficiently large (for example, $t> \posdim \max_{(\pos,\tpos) \in \posDomain \times \posDomain} \abs{\pos-\tpos}_{\infty}$) such that for any pair of points $(\pos,\vel)$ and $(\tpos,\tvel)$ whose position components are contained in $\posset$, there exists an admissible control sequence $\ubf=(\tbf,\ibf)$ with $\tbf = (t_{1},\dots,t_{m+1})$ and $t_{m+1}=t$ connecting $(\pos,\vel)$ and $(\tpos,\tvel)$. By \cref{lem:cont:fix}, for any such pair and any hyperparameter values $\hyp,\thyp$, there exist neighborhoods of $(\pos, \hyp, \vel)$ and $(\tpos,\thyp,\tvel)$ such that \cref{eq:cont:comp} holds for $\tt=t$ and suitable constants. By compactness of $C$, there exists a finite cover of $\posset$ of such neighborhoods, which proves the lemma. 
\end{proof}

\begin{lemma}[Infinitesimal Lyapunov condition]\label{lem:lyapunov}

Let \cref{as:pot} be satisfied. Let $\delta>0$ and $a>0$ be such that $0\leq \gammamax \delta <a  < 1$ with $\gammamax$ as specified in \cref{as:pot}. Further, define $\phi(s) = {\rm sign}(s) \log(1+\delta \abs{s})/2$. Then the function 
\begin{equation*}
V(\pos,\hyp,\vel)  
=
\exp \left [ a \pot(\pos,\hyp) + \sum_{i=1}^{\posdim} \phi \left \{ \vel_{i} \partial_{\pos_{i}}\pot(\pos,\hyp) \right \} \right ]
\end{equation*}
is a Lyapunov function of the GZZ process, that is, $\lim_{x \to \infty}V(x) = \infty$ and there are suitable constants $a>0, b \in \RR$ and a compact set $C \subset \posDomain \times \hypDomain$ such that the Lyapunov condition \eqref{eq:lya:1} is satisfied. 

\end{lemma}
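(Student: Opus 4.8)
\emph{Proof plan.} The plan is to expand $\Lcgzz V=\Lczz V+\swrate\Lcg V$ and estimate the two pieces separately, the zig-zag part controlling $V$ in the $\pos$-direction and the Gibbs part in the $\hyp$-direction, and then to patch the estimates together on the complement of a compact set $C$. For the zig-zag part I would first record the closed form obtained by applying $\Lczz$ to the factorization $V=V_{0}\prod_{i=1}^{\posdim}V_{i}$ used in the proof of \cref{thm:cond:pot}: since flipping $\vel_{i}$ negates $\phi(\vel_{i}\partial_{\pos_{i}}\pot)$ and fixes every other factor (oddness of $\phi$),
\begin{equation*}
\frac{(\Lczz V)(\pos,\hyp,\vel)}{V(\pos,\hyp,\vel)}
=
a\sum_{i=1}^{\posdim}\vel_{i}\partial_{\pos_{i}}\pot
+\sum_{i,j=1}^{\posdim}\phi'(\vel_{i}\partial_{\pos_{i}}\pot)\,\vel_{i}\vel_{j}\partial_{\pos_{i}\pos_{j}}\pot
+\sum_{i=1}^{\posdim}\lambda_{i}(\pos,\hyp,\vel)\bigl\{e^{-2\phi(\vel_{i}\partial_{\pos_{i}}\pot)}-1\bigr\},
\end{equation*}
with $\lambda_{i}=(\vel_{i}\partial_{\pos_{i}}\pot)^{+}+\gamma_{i}$ and $\phi'(s)=\delta/\{2(1+\delta\abs{s})\}$.

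For the main estimate I would split each index-$i$ summand of the first and third sums according to the sign of $s_{i}:=\vel_{i}\partial_{\pos_{i}}\pot$. Using $e^{-2\phi(s)}-1=-\delta s/(1+\delta s)$ for $s\ge0$ and $e^{-2\phi(s)}-1=\delta\abs{s}$ for $s<0$, the transport and jump contributions of index $i$ combine to at most $-(1-a)\abs{s_{i}}+\delta^{-1}$ when $s_{i}\ge0$ and to at most $-(a-\gamma_{i}\delta)\abs{s_{i}}$ when $s_{i}<0$, hence to at most $-\mu\,\abs{\partial_{\pos_{i}}\pot}+\delta^{-1}$ with $\mu:=\min\{1-a,\,a-\gammamax\delta\}>0$ by \cref{as:pot}\eqref{as:pot:it:3} and the constraint $\gammamax\delta<a<1$. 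Bounding the Hessian term by $\tfrac{\delta}{2}\sum_{i,j}\abs{\partial_{\pos_{i}\pos_{j}}\pot}\le\tfrac{\delta\posdim}{2}\norm{\hess_{\pos}\pot}$ and invoking the first inequality of \cref{as:pot}\eqref{as:pot:it:1} ($\norm{\hess_{\pos}\pot}\le g_{1}(\pos)\abs{\nabla_{\pos}\pot}$ for $\abs{\pos}>c$), this gives
\begin{equation*}
(\Lczz V)(\pos,\hyp,\vel)\le\Bigl\{-\bigl(\mu-\tfrac{\delta\posdim}{2}g_{1}(\pos)\bigr)\,\abs{\nabla_{\pos}\pot(\pos,\hyp)}+\tfrac{\posdim}{\delta}\Bigr\}\,V(\pos,\hyp,\vel),\qquad\abs{\pos}>c,
\end{equation*}
uniformly in $\hyp$ and $\vel$. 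Since $g_{1}(\pos)\to0$ and $\abs{\nabla_{\pos}\pot(\pos,\hyp)}\ge 1/g_{1}(\pos)\to\infty$ as $\abs{\pos}\to\infty$ (again by \cref{as:pot}\eqref{as:pot:it:1}), after enlarging $c$ — first so that $g_{1}<\mu/(\delta\posdim)$, then so that $\tfrac{\mu}{2}/g_{1}(\pos)\ge a+\posdim/\delta$ on $\{\abs{\pos}>c\}$ — one obtains $\Lczz V\le-aV$ there; and since $\abs{(\pos,\hyp)}>c$ automatically on that set, \cref{as:pot}\eqref{as:pot:it:2} also gives $\swrate\Lcg V\le0$ (see below), so that $\Lcgzz V\le-aV$ whenever $\abs{\pos}>c$.

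It remains to treat $\{\abs{\pos}\le c\}$. With the Gibbs choice $\hyptranskernel\{(\pos,\hyp),\cdot\}=\target(\cdot\mid\pos)$ one has $(\Lcg V)(\pos,\hyp,\vel)=\overline V(\pos,\vel)-V(\pos,\hyp,\vel)$, where $\overline V(\pos,\vel)=\int_{\hypDomain}V(\pos,\aatt,\vel)\,\target(\dd\aatt\mid\pos)$ depends only on $(\pos,\vel)$; rearranging \eqref{eq:lya:eqn:1} gives $\overline V(\pos,\vel)<V(\pos,\hyp,\vel)$ for $\abs{(\pos,\hyp)}>c$, so $\Lcg V<0$ there, and since $\pot(\pos,\cdot)$ is coercive while $\overline V$ is finite and bounded over the compact ball $\{\abs{\pos}\le c\}$, one has $\Lcg V/V\to-1$ uniformly as $\abs{\hyp}\to\infty$ with $\abs{\pos}\le c$. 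On that ball the low-order $\pos$-derivatives of $\pot$ are bounded and the Hessian term of $\Lczz V/V$ is tamed by the decay $\phi'(s)=O(\abs{s}^{-1})$, so that $\Lczz V\le C_{\ast}V$ there; combining the two estimates yields $\Lcgzz V\le-aV$ once $\abs{\hyp}$ is large enough. Choosing $C=\{\abs{\pos}\le c,\ \abs{\hyp}\le R\}\times\velDomain$ with $R$ large and $b=\sup_{C}\{\Lcgzz V+aV\}<\infty$ then gives \eqref{eq:lya:1}; and $V(x)\to\infty$ as $\abs{x}\to\infty$ because $a\pot$ dominates the logarithmic corrections $\sum_{i}\phi(\vel_{i}\partial_{\pos_{i}}\pot)=O(\log\abs{\nabla_{\pos}\pot})=O(\log\pot)$ (by \cref{as:pot}\eqref{as:pot:it:1}) and $\pot$ is coercive.

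The main obstacle is the crossover region $\{\abs{\pos}\le c\}\cap\{\abs{\hyp}\text{ large}\}$: there \cref{as:pot}\eqref{as:pot:it:1} is silent and the zig-zag part supplies no negative drift, so the entire Lyapunov decay must come from the hyperparameter update, and the delicate step is to obtain a bound on $\Lczz V$ relative to $V$ that is uniform in $\hyp$ — which forces one to exploit the cancellation between $\phi'(\vel_{i}\partial_{\pos_{i}}\pot)$ and $\partial_{\pos_{i}\pos_{j}}\pot$ — and then to verify that $\swrate\Lcg V$, controlled through the integrated inequality \eqref{eq:lya:eqn:1}, dominates it.
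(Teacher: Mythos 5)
Your overall strategy mirrors the paper's proof: decompose $\Lcgzz V=\Lczz V+\swrate\Lcg V$, bound the zig-zag part via the Lyapunov computation of Section 3.4 of Bierkens et al.\ (which the paper cites and you reproduce in closed form, correctly) together with \cref{as:pot}\eqref{as:pot:it:1}, and bound the Gibbs part through \eqref{eq:lya:eqn:1} from \cref{as:pot}\eqref{as:pot:it:2}. Your per-index splitting by the sign of $s_i$, the resulting $-\min\{1-a,a-\gammamax\delta\}\abs{s_i}+\delta^{-1}$ bound, and the identity $\Lcg V/V=\overline V/V-1<0$ all check out, and you are right to highlight the crossover region $\{\abs{\pos}\le c,\ \abs{\hyp}\text{ large}\}$ as the delicate place where the zig-zag drift is silent; the paper largely passes over this point by asserting each of $\Lczz$, $\Lcg$ separately satisfies a Lyapunov inequality with compact exception set.

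However, your treatment of that crossover region has a genuine gap. You assert that on the ball $\{\abs{\pos}\le c\}$ ``the low-order $\pos$-derivatives of $\pot$ are bounded'' and hence $\Lczz V\le C_\ast V$ there, but nothing in \cref{as:pot} controls $\partial_{\pos_i}\pot(\pos,\hyp)$ or $\partial_{\pos_i\pos_j}\pot(\pos,\hyp)$ for $\abs{\pos}\le c$ as $\abs{\hyp}\to\infty$ — Assumption \ref{as:pot}\eqref{as:pot:it:1} is explicitly restricted to $\abs{\pos}>c$, and $\hypDomain=\RR^\hypdim$ is unbounded, so continuity alone gives no uniform bound. Moreover, even granting such a constant $C_\ast>0$, your final step ``combining the two estimates yields $\Lcgzz V\le -aV$ once $\abs{\hyp}$ is large enough'' implicitly needs $\swrate>C_\ast+a$ (since $\Lcg V/V\to -1$ contributes at most $-\swrate$), which is a hidden large-$\swrate$ restriction absent from the lemma. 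To close the argument you would need either a hypothesis controlling the $\pos$-derivatives of $\pot$ on $\{\abs{\pos}\le c\}\times\hypDomain$ (note the sufficient condition in \cref{thm:cond:pot} supplies exactly such a bound via the assumed boundedness of $\bfunc$ and $\partial_{\pos_i}\bfunc$), or to exploit the quantitative form of \eqref{eq:lya:eqn:1} so that the Gibbs drift can be made to dominate whatever the zig-zag part contributes on that region for the given $\swrate$.
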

\begin{proof}
We show the validity of the Lyapunov condition
\begin{equation}\label{eq:lya:2}
\Lc V  \leq - r\, V + b\,\indicator_{C},
\end{equation}
with suitable constants $r>0,b\in \RR$, and compact set $C$, separately for $\Lc = \Lczz$ and $\Lc=\Lcg$.
\\

\noindent {\bf (I) $\Lc=\Lczz$:}
For fixed $\hyp\in \hypDomain$, the function $V(\cdot,\hyp,\cdot)$ is identical to the Lyapunov function proposed in Section 3.4 of \cite{bierkens2019ergodicity}, where it is used to show a similar result for the ZZ process. Using the fact that $0 \leq \phi^{\prime}(s) \leq \delta/2$, it is shown in the referenced article that 
\begin{align*}
& \quad 
\Lczz V(\cdot,\hyp,\cdot) 
\\
& \leq
\left \{ -\min(1-a,a-\gammamax \delta)\sum_{i=1}^\posdim \abs*{\partial_{\pos_{i}}\pot(\cdot,\hyp)} + \frac{d}{\delta} + \frac{\delta}{2} \sum_{i,j=1}^\posdim  \abs*{\partial_{\pos_{i}}\partial_{\pos_j}\pot(\cdot,\hyp)} \right \}
V(\cdot,\hyp,\cdot),
\end{align*}
which  under the asymptotic growth condition of \cref{as:pot}  directly implies  the validity of \cref{eq:lya:2} for sufficiently large $C$.
\\

\noindent {\bf (II) $\Lc=\Lcg$:} 
We note that 
\begin{equation*}
\left ( \Lcg V\right)(\pos,\aa,\vel) 
=
V(\pos,\aa,\vel)  \underbrace{\int_{\hypDomain}\left \{ \frac{V(\pos,\aatt,\vel)}{V(\pos,\aa,\vel) } -  1 \right \} \frac{1}{Z_{\pos}} \exp\{-\pot(\pos,\aatt)\} \dd \aatt}_{=: \,C(\pos,\hyp,\vel)}.
\end{equation*}
Thus, in order for the Lyapunov condition to be satisfied, it is sufficient to show that the parameters $a>0$ and $\delta>0$ of the Lyapunov function $V(\pos,\aa,\vel) $ can be chosen such that  there exists $c>0$ and  $r>0$ so that the inequality 
$C(\pos,\hyp,\vel)<-r$
holds for all $\vel \in \{-1,1\}^{\posdim}$ and $(\pos,\aa) \in \posDomain \times \hypDomain$ with $\abs{(\pos,\aa)}>c$. Indeed, this is directly implied by \cref{as:pot}, \eqref{as:pot:it:2}.
\end{proof}

Let $P=\transkernel_{\tt}$ with $\tt$  as specified in \cref{lem:minorization}. By \cref{lem:lyapunov} and a simple Gr\"onwall inequality, it follows that $P$ satisfies a Lyapunov inequality of the form   $ \forall t \,\geq0,~e^{t\Lcgzz}V\leq r V + h\indicator_{C}$ with suitable $r \in (0,1)$ and $h\in \RR$. By Theorem 3.4 of \cite{hairer2011yet}, it follows that the embedded Markov chain associated with $P$ is geometrically ergodic with invariant measure $\augtarget$, that is, 
\begin{equation*}
\forall \, n\in \NN, ~ \forall \, \varphi \in L^{\infty}_{V}(\domain),
\quad
\norm*{P^{n}\varphi - \EE_{x\sim\augtarget}\{\varphi(x)\} }_{L^{\infty}_{V}} 
\leq
\tilde{c} r^{n} \norm*{\varphi - \EE_{x\sim\augtarget}\{\varphi(x)\} }_{L^{\infty}_{V}}
\end{equation*}
with $\tilde{c}>0$. It is well known that geometric ergodicity of the embedded Markov chain together with the validity of an infinitesimal Lyapunov condition implies \cref{thm:ergodic:2} with $\lambda = -\log(r)/\tt $ and sufficiently large constant $\const>0$ (see for example, \citealp[Section 2.4.2.]{lelievre2016partial}).

\section{General construction of Gibbs-PDMP samplers} \label{app:Gibbs-PMDP}
\def\z{\bm z}
\def\trans{\mathcal{P}}
The ZZ-sampler used to update $\pos$ within the GZZ process can be replaced by any other PDMP process that has $\target(\dd \pos \mid \hyp)$ as its invariant measure.  
In what follows, we provide a generic algorithm that generates such a general Gibbs-PDMP sampler. Following the description of a  PDMP in terms of a deterministic flow map, event rate, and transition distribution in \cite{fearnhead2018piecewise}, we assume that the PDMP used for updating $\pos$ is of the form $\z(t) = (\pos(t),\vel(t)) \in \posDomain \times \velDomain$, where $\velDomain \subseteq \RR^{\posdim}$ and is specified by the following $\hyp$-dependent quantities.
%that is     $\z(t) = (\pos(t),\vel(t)) \in \posDomain \times \velDomain$ where $\velDomain \subseteq \RR^{p}$
%The proof of invariance of the corresponding target measure $\wt{\target}$ follows in the same way as in the special case   
\begin{itemize}
\item
\textbf{The deterministic flow map} $\Psi_{\hyp}(\z(s),t) =  \z(t+s)$ that describes the deterministic dynamics of the process between event times.  
\item 
\textbf{The event rate} $\lambda_{\hyp}(\z(t))$ that depends on the current position $\z(t)$ of the process. Event times $T^{k}, k=1,2,\dots$ are generated from the Poisson process $\Pi$ with intensity function $\lambda_{\hyp}(\z(t))$.
\item 
\textbf{The transition distribution at events:} whenever an event occurs at time $\tau$, the state prior to this event is denoted as $\z(\tau-)$, and the state $\z(\tau)$ is sampled from the transition kernel $\trans_{\hyp}$,  that is,  $\z(\tau) \sim \mathcal{P}_{\hyp}( \z(\tau), \cdot)$
\end{itemize}

For any value of $\hyp$, the process specified by the above quantities is assumed to preserve $\target(\dd \pos \mid \hyp)$ (see \citealp{fearnhead2018piecewise} for sufficient conditions). In particular, when considering $\alpha$ as a constant part of the process, this family of processes defines a generator $\mathcal{L}_{{\rm PDMP}}$ that acts on the set of test functions $\test = \C^{\infty}(\posDomain \times \hypDomain \times \velDomain,\RR)$. The generalization of the Gibbs process, that is, the process  associated with the generator $\mathcal{L}_{{\rm Gibbs-PDMP}} = \mathcal{L}_{{\rm PDMP}} + \eta \L_{\rm Gibbs}$, can be verified to  preserve the target measure ${\target}(\dd\pos,\dd \hyp)$ by following the same steps as in the proof of \cref{prop:inv}. It can be simulated as detailed in \cref{alg:GibbsPDMP}.

\begin{algorithm}
\caption{General Gibbs-PDMP algorithm.} 
\label{alg:GibbsPDMP} 
\textbf{Input:} $(\ppos^{0},\phyp^{0}, \pvel^{0}) \in \posDomain \times \hypDomain \times \velDomain$% \jl{order switched compared with main text} and $\swtime^{0} =0$.
\begin{algorithmic}[1] 
\FOR{$k = 1, 2, \dots$}
\STATE 
Draw $\tau' \sim {\rm Exponential}(\swrate)$ and $\wt{\tau}_1, \ldots, \wt{\tau}_\posdim$ such that 
\begin{equation*}
\PP( \wt{\tau} \geq s)
=
\exp \left \{ - \int_0^s  \lambda \left  ( \Psi_{\hyp}(\z(T^{k}),r) \right )  \, \dd r \right \} 
~ (i = 1, \dots, \posdim).
\end{equation*}
\STATE 
Let $\tau^{k} = \min \left \{ \tau', \wt{\tau} \right \}$.  
\STATE 
Set $\swtime^{k+1} = \swtime^{k} +\tau^{k}$. 
\IF{$\tau = \tau'$}
\STATE 
Set  $(\ppos^{k+1},\pvel^{k+1}) =  \Psi_{\hyp}(\z(T^{k}),\tau^{k})$
\STATE 
Draw $\phyp^{k+1} \sim \hyptranskernel\{(\ppos^{k+1},\phyp^{k}),\cdot\}$.
\ELSE
\STATE
Set $\phyp^{k+1} = \phyp^{k}$.
\STATE 
Resample: $(\ppos^{k+1},\pvel^{k+1} ) \sim \trans_{\alpha^{k}}  (\Psi_{\hyp}(\z(T^{k}),\tau^{k}), \cdot ) $,
\ENDIF
\ENDFOR
\end{algorithmic}
\textbf{Output:} Skeleton points $ \{ (\ppos^{k}, \phyp^{k}, \pvel^{k},\swtime^{k})\}_{k \in \NN}$. 
\medskip  
\end{algorithm}

\section{Conditional distributions}  

In the following, $\cdot \mid -$ means conditioned on every variable other than itself. 

\subsection{Random effects model} \label{app:mixed_eff}

We define 
\begin{align*}
\wt{X}_{ij}
& = 
(1, X^\star_{ij1}, \dots, X^\star_{ijK},X_{ij1}, \dots, X_{ij\posdim}) \in \RR^{2+K+\posdim},
\end{align*}
where $X^\star_{ij} = 1$ if observation $i \in j$-th group and zero otherwise. This reduces $\pos \mid -$ to a standard logistic regression setup, which can be sampled from using the ZZ process. In addition, the conditional distributions for the hyperparameters are 
\begin{align*}
\phi \mid -
& \sim
\Ga
\left (a_\phi + \frac{K+1}{2},b_\phi + \frac{m^2}{2} + \frac12 \sum_{j=1}^K \beta_j^2 \right ),
\\
\sigma^2 \mid -
& \sim 
\IG \left (a_\sigma+\frac32, b_\sigma + \frac12 \sum_{l=1}^\posdim \upsilon_i^2 \right ),
\end{align*}
which can be exactly sampled from.

\subsection{Spike-and-slab prior} \label{app:spikeslab_conditionals}

We have the following conditional distributions:
\begin{align*}
p(\nu \mid -) 
& \propto 
p(\upsilon_{1:p} \mid \gamma_{1:d}, \, \nu, \, \tau_{1:d}^2) 
\times p_0(\nu) 
\\
& \propto 
\prod_{i=1}^\posdim \left [ \frac1{\sqrt{\nu} \tau_i \I(\gamma_i=1) + \tau_i \I(\gamma_i=0)} \exp \left \{\frac{-\upsilon_i^2}{2 \left \{ \nu \tau_i^2 \I(\gamma_i=1) + \tau_i^2 \I(\gamma_i=0) \right \}} \right \} \right ] 
\\
& \quad 
\times 
\nu^{-(a_\nu+1)} \exp \left ( \frac{b_\nu}{\nu} \right ) 
\\
& \propto 
\prod_{i:\gamma_i=1} 
\left [ \frac1{\sqrt{\nu} \tau_i} \exp \left ( \frac{-\upsilon_i^2}{2\nu \tau_i^2}  \right ) \right ] 
\times
\nu^{-(a_\nu+1)} \exp \left ( \frac{b_\nu}{\nu} \right ) 
\\
& = 
\IG \left (a_\nu + \frac12 \sum_{i=1}^\posdim \gamma_i, b_\nu + \frac12 \sum_{i=1}^\posdim \frac{\gamma_i \upsilon_i^2}{\tau_i^2} \right );
\\
p(\pi \mid -) 
& = 
\textrm{Beta} \left ( a_\pi + \sum_{i=1}^\posdim \gamma_i, b_\pi + d - \sum_{i=1}^\posdim \gamma_i \right );
\\
\PP(\gamma_i = 1 \mid -) 
& = 
\frac{p( \upsilon_i, \tau_i^2, \pi, \, \nu, \, \gamma_i = 1)}{p(\upsilon_i, \, \tau_i^2, \, \pi, \, \nu)}
= 
\frac{p(\upsilon_i \mid \tau_i^2, \, \nu, \, \gamma_i=1) \times p(\gamma_i=1 \mid \pi) \times p_0(\pi)}{p( \upsilon_i, \, \tau_i^2, \, \pi, \, \nu, \, \gamma_i = 1) + p( \upsilon_i, \, \tau_i^2, \, \pi, \, \nu, \, \gamma_i = 0)}
\\
& = 
\frac{\left(\pi/\sqrt{\nu}\right) \exp \left \{ -\upsilon_i^2 /(2 \nu \tau_i^2) \right \} }{ \left(\pi/\sqrt{\nu}\right) \exp \left \{ -\upsilon_i^2 /(2 \nu \tau_i^2) \right \} + \left(1-\pi \right) \exp \left \{ -\upsilon_i^2 /(2 \tau_i^2) \right \} };
\\
\PP(\gamma_i = 0 \mid -) 
& = 1 -  \PP(\gamma_i = 1 \mid \upsilon_i, \tau_i^2, \pi, \nu) 
\\
& = 
\frac{\left(1-\pi \right) \exp \left \{ -\upsilon_i^2 /(2 \tau_i^2) \right \}}{ \left(\pi/\sqrt{\nu}\right) \exp \left \{ -\upsilon_i^2 /(2 \nu \tau_i^2) \right \} + \left(1-\pi \right) \exp \left \{ -\upsilon_i^2 /(2 \tau_i^2) \right \} }.
\end{align*}

Finally, we consider a consider a MH update step for $\tau_1^2, \dots, \tau_d^2$ by noting that 
\begin{align*}
p_0(\tau_i) 
& \propto 
\left ( 1 + \frac{\tau_i^2}{d_\tau} \right )^{-(d_\tau+1)/2},
\end{align*}
and thus
\begin{align*}
p(\tau_i^2 \mid - )
& \propto 
p(\upsilon_i \mid \tau_i^2, \, \gamma_i, \, \nu) \times p_0(\tau_i^2) 
\\
& = 
p(\upsilon_i \mid \tau_i^2, \, \gamma_i=1, \, \nu) \times p_0(\tau_i^2) 
+ p(\upsilon_i \mid \tau_i^2, \, \gamma_i=0, \, \nu) \times p_0(\tau_i^2)
\\
& \propto 
\frac{\gamma_i}{\tau_i \sqrt{\nu}} \exp \left ( - \frac{\upsilon_i^2}{2 \nu \tau_i^2} \right ) \left ( 1 + \frac{\tau_i^2}{d_\tau} \right )^{-(d_\tau+1)/2}
+
\frac{1-\gamma_i}{\tau_i} \exp \left ( - \frac{\upsilon_i^2}{2 \tau_i^2} \right ) \left ( 1 + \frac{\tau_i^2}{d_\tau} \right )^{-(d_\tau+1)/2}
\\
& = 
\frac{1}{\tau_i} \left ( 1 + \frac{\tau_i^2}{d_\tau} \right )^{-(d_\tau+1)/2} 
\left [  \frac{\gamma_i}{\sqrt{\nu}} \exp \left ( - \frac{\upsilon_i^2}{2 \nu \tau_i^2} \right )
+
(1-\gamma_i) \exp \left ( - \frac{\upsilon_i^2}{2 \tau_i^2} \right ) \right ]
\end{align*}

\end{document}